\DeclarePairedDelimiterX{\inp}[2]{\langle}{\rangle}{#1, #2}
\newcommand{\innerproductcomma}[2]{\left\langle #1, #2 \right\rangle}
\crefname{algocf}{alg.}{algs.}
\Crefname{algocf}{Algorithm}{Algorithms}
\newtheorem{theorem}{Theorem} 
\newenvironment{theorem*}
 {\expandafter\def\expandafter\thetheorem\expandafter{\thetheorem*}\theorem}
 {\endtheorem}
\newtheorem{lemma}[theorem]{Lemma}
\newtheorem{proposition}[theorem]{Proposition}
\newtheorem{corollary}[theorem]{Corollary}
\def\bbl@set@language#1{%
  \edef\languagename{%
    \ifnum\escapechar=\expandafter`\string#1\@empty
    \else\string#1\@empty\fi}%
  \@ifundefined{babel@language@alias@\languagename}{}{%
    \edef\languagename{\@nameuse{babel@language@alias@\languagename}}%
  }%
  \select@language{\languagename}%
  \expandafter\ifx\csname date\languagename\endcsname\relax\else
    \if@filesw
      \protected@write\@auxout{}{\string\select@language{\languagename}}%
      \bbl@for\bbl@tempa\BabelContentsFiles{%
        \addtocontents{\bbl@tempa}{\xstring\select@language{\languagename}}}%
      \bbl@usehooks{write}{}%
    \fi
  \fi}
\newcommand{\DeclareLanguageAlias}[2]{%
  \global\@namedef{babel@language@alias@#1}{#2}%
}
\newcommand{\bill}[1]{}
\newcommand{\torin}[1]{}
\newcommand{\oskar}[1]{}
\newcommand{\TODO}[1]{}
\newcommand{\LATER}[1]{}
\begin{document}

\preprint{APS/123-QED}

\title{Efficient state preparation for the quantum simulation of molecules in first quantization}%

\author{William J. Huggins}
\email{whuggins@google.com}
\thanks{These two authors have contributed equally to this work.}
\affiliation{Google Quantum AI, San Francisco, CA 90291, United States}

\author{Oskar Leimkuhler}
\email{ol22@berkeley.edu}
\thanks{These two authors have contributed equally to this work.}
\affiliation{Department of Chemistry, University of California, Berkeley, CA 94720}
\affiliation{Berkeley Quantum Information and Computation Center,
University of California, Berkeley, CA, 94720, United States}
\affiliation{Challenge Institute for Quantum Computation, University of California, Berkeley, CA 94720}

\author{Torin F. Stetina}
\affiliation{Simons Institute for the Theory of Computing, Berkeley, CA, 94704, United States}
\affiliation{Berkeley Quantum Information and Computation Center,
University of California, Berkeley, CA, 94720, United States}

\author{K. Birgitta Whaley}
\affiliation{Department of Chemistry, University of California, Berkeley, CA 94720}
\affiliation{Berkeley Quantum Information and Computation Center,
University of California, Berkeley, CA, 94720, United States}
\affiliation{Challenge Institute for Quantum Computation, University of California, Berkeley, CA 94720}

\date{\today}%
\begin{abstract}
The quantum simulation of real molecules and materials is one of the most highly anticipated applications of quantum computing.
Algorithms for simulating electronic structure using a first-quantized plane wave representation are especially promising due to their asymptotic efficiency.
However, previous proposals for preparing initial states for these simulation algorithms scale poorly with the size of the basis set.
We address this shortcoming by showing how to efficiently map states defined in a Gaussian type orbital basis to a plane wave basis with a scaling that is logarithmic in the number of plane waves.
Our key technical result is a proof that molecular orbitals constructed from Gaussian type basis functions can be compactly represented in a plane wave basis using matrix product states.
While we expect that other approaches could achieve the same logarithmic scaling with respect to basis set size, our proposed state preparation technique is also highly efficient in practice.
For example, in a series of numerical experiments on small molecules, we find that our approach allows us to prepare an approximation to the Hartree-Fock state using orders of magnitude fewer non-Clifford gates than a naive approach.
By resolving the issue of state preparation, our work allows for the first quantum simulation of molecular systems whose end-to-end complexity is truly sublinear in the basis set size.

\end{abstract}
\maketitle

\section{Introduction}
The electronic structure problem in quantum chemistry is considered one of the most promising applications of fault tolerant quantum computers.
Quantum computing promises to enable the direct simulation of electronic dynamics from first principles, a feat believed to be beyond the capabilities of classical algorithms due to the combinatorial explosion of the state space.
The quantum algorithms with the best known asymptotic scaling with respect to basis set size rely on simulating the electronic structure Hamiltonian using highly structured basis sets in a first-quantized representation~\cite{Babbush2019-xw,Su2021-uj,Babbush2023-ud}.
It has even been argued that dynamical simulations using these methods can outperform classical simulations based on mean-field theory~\cite{Babbush2023-ud}.

The cost of initial state preparation is frequently neglected in the analysis of quantum algorithms for quantum chemistry.
However, for the first-quantized algorithms with the best asymptotic scaling, prior work has focused on using approaches to initial state preparation that scale asymptotically worse (with the basis set size) than any other component of the algorithm~\cite{Su2021-uj}.
This is true even when preparing a relatively simple initial state, such as a mean-field approximation of the ground state.
For simulations of molecular systems, the most efficient first quantized algorithms introduce unphysical periodic boundary conditions whose effects can be suppressed by using a large enough computational unit cell.
It is especially dangerous to neglect the cost of state preparation when increasing the size of the computational unit cell (while keeping the molecular system and the kinetic energy cutoff fixed).
In this situation, every aspect of these first-quantized simulation algorithms actually scales logarithmically with the basis set size except for the initial state preparation, which scales linearly.

To improve the efficiency of state preparation for first-quantized plane wave simulations, one possible solution is to adapt the methods originally proposed by Zalka in Ref.~\citenum{Zalka1998-zt} and later explored in the context of real-space simulations of many-body fermionic systems by Ward \textit{et al.} in Ref.~\citenum{Ward2009-ln}.
These approaches, which are closely related to the well-known Grover-Rudolph approach for preparing quantum states that encode probability distributions~\cite{Grover2002-st}, rely on being able to construct quantum circuits that evaluate definite integrals of the desired wavefunction.
By querying these circuits in superposition, one can efficiently perform a series of rotations that prepare the target state qubit by qubit.
One could imagine adapting these approaches to prepare a variety of initial states in a first-quantized plane wave setting, leveraging the fact that classical calculations in a Gaussian basis set yield descriptions of single-particle states that are linear combinations of efficiently integrable functions.
However, in practice, coherently performing the required integrals could be prohibitively expensive.

In this work, we present an alternative approach that makes use of tensor train (TT) or matrix product state (MPS) factorizations to compactly represent single-particle states originating from calculations in a Gaussian basis set.
This allows for the preparation of mean-field states as well as more sophisticated correlated initial states defined in a traditional Gaussian orbital basis set.
For approximating an \(\eta\) particle Slater determinant calculated in a Gaussian orbital basis with \(N_g\) primitive Gaussian basis functions and projected onto a basis of \(N\) plane waves, our approach has a (non-Clifford) gate complexity that scales as \(\mathcal{O}\left(\eta^2 N_g^{3/2} \log(N)\right)\).
By contrast, the linear scaling approach of Ref.~\citenum{Babbush2023-ud} has a cost that scales as \(\mathcal{O}(\eta N)\).
Furthermore, the efficiency of these low-rank representations allows us to achieve the desired asymptotic scaling with relatively low prefactors, reducing the cost of state preparation by orders of magnitude when compared with prior work.
In fact, our numerics suggest that the factor of \(N_g^{3/2}\) in our analytical bounds may be overly pessimistic, and that the cost may be nearly independent of \(N_g\) in practice.
Like other recent work that makes use of these factorizations for state preparation~\cite{Garcia-Ripoll2019-jk, Holmes2020-yx, Lubasch2020-hw, Plekhanov2022-md, Melnikov2023-ny, Iaconis2024-ny, Gonzalez-Conde2023-gw, Cervero-Martin2023-gu}, we benefit from the growing body of literature that studies these same tools in the context of classical computing~\cite{Grasedyck2010-pn,Oseledets2010-hb,Oseledets2011-xm,Oseledets2013-vy,Grasedyck2013-ib,Stoudenmire2016-np}.
In particular, we leverage the results of Ref.~\citenum{Grasedyck2010-pn}, which prove that vectors arising from evaluating low-degree polynomials on a uniform grid have compact representations as tensor trains.

We organize the main body of our manuscript as follows.
We begin by reviewing relevant background material and prior work in \Cref{sec:preliminaries}.
We present our analytical results in \Cref{sec:state_prep}, culminating in \Cref{sec:overall_cost}, where we discuss the overall asymptotic scaling of our approach.
In \Cref{sec:MO_MPS_numerics}, we provide numerical evidence that our technique for state preparation can be orders of magnitude less costly than a naive approach, even for small molecular systems.
We conclude with a brief discussion in \Cref{sec:discussion}.

\section{Preliminaries}
\label{sec:preliminaries}

In this section, we briefly review a few key concepts.
We first discuss the two single-particle basis sets most relevant to our work in \Cref{sec:basis_sets}, and then present some background material on first-quantized quantum simulation in \Cref{sec:first_quantized_preliminaries}.
In \Cref{sec:MPS_and_TT}, we review the definitions of tensor train (TT) and matrix product state (MPS) factorizations along with some of their key properties.
We clarify some notation used throughout the manuscript in \Cref{sec:notation}.

\subsection{Single-particle basis sets}
\label{sec:basis_sets}

Electronic structure wavefunctions in \textit{ab initio} quantum chemistry are typically represented in a finite basis set of single-particle orbitals.
Various considerations, such as computational efficiency, representational power, and symmetry motivate the choice of single-particle basis set.
This work uses two different families of single-particle basis functions: atom-centered Gaussian type orbitals (GTOs) and plane waves.
While many quantum chemical calculations are performed in one of these two basis sets, there is a rich body of research exploring alternative approaches that we do not attempt to review here~\cite{Hehre1969-uz, Gilbert1974-fh, Heinemann1988-ap, White1989-qx, Frediani2015-ea, White2019-zu, McClean2020-sg}.

Gaussian type orbitals consist of linear combinations of primitive basis functions translated so that they are centered on atomic nuclei.
Expressed in Cartesian coordinates, these primitive basis functions take the form
\begin{equation}
  g(x, y, z) = x^{l} y^{m} z^{n} e^{-\gamma \left( x^2 + y^2 + z^2 \right)},
\end{equation}
where \(\gamma\) is the Gaussian broadening parameters, and \(l\), \(m\), and \(n\) are the angular momentum quantum numbers~\cite{Helgaker2014-dh}.
Gaussian type orbitals are typically numerically optimized to approximate solutions to the Schrödinger equation for hydrogenic atoms.
As a result, they provide a physically motivated basis set for calculations of ground states and chemically relevant excited states of molecular systems under the Born-Oppenheimer approximation of clamped nuclei.
Integrals involving Gaussian type orbitals have an analytical form, leading to relatively efficient implementations compared to explicit Slater functions.
Furthermore, relatively few Gaussian type orbitals are usually sufficient to obtain good quantitative accuracy for low-energy states~\cite{Hill2013-un}.
This makes them the basis set of choice for many classical methods~\cite{Helgaker2014-dh}.

While Gaussian basis sets form a reasonable and frequently-used discretization for the electronic structure problem, they are not without their drawbacks.
At long ranges, the electron density of any wavefunction expressed in a Gaussian basis set will necessarily exhibit a Gaussian decay rather than the expected exponential decay~\cite{Kanungo2019-ag}.
Additionally, Gaussian basis sets are incapable of correctly representing electron-nucleus cusps~\cite{Regier1985-as, Kanungo2019-ag}, which sometimes motivates the use of the less computationally efficient Slater-type orbitals~\cite{Regier1985-as}.
Beyond these concerns, the compactness of Gaussian basis sets also comes at a cost: primitive Gaussian basis functions are not orthogonal in general.

In many contexts, it is simpler to work with an orthonormal basis than to work directly with Gaussian type orbitals.
One standard approach is to calculate a set of ``molecular orbitals'' (MOs) using the Hartee-Fock method.
This approach self-consistently determines a single-particle basis where the mean field approximation to the Hamiltonian is diagonal.
The resulting single-particle basis functions, the molecular orbitals, are orthonormal by construction.
The Hartree-Fock wavefunction is the $N$-orbital Slater determinant where the \(\eta\) molecular orbitals with the lowest eigenvalues are occupied by electrons.

In this work, we also make heavy use of single-particle basis sets composed of plane waves.
Plane waves are eigenfunctions of the momentum operator, and we can express them in Cartesian coordinates as 
\begin{equation}
  \varphi_{\bm{k}}\left(\bm{r}\right) = \sqrt{\frac{1}{L^3}} e^{i \, \bm{k} \cdot \bm{r}},
\end{equation}
where \(\bm{r}\) denotes the coordinates in real space and \(\bm{k}\) denotes the momentum.
We specifically consider plane wave basis sets defined on a cubic reciprocal lattice.
This means that the momenta satisfy \(\bm{k} \in \mathbb{K}^3\), with \(\mathbb{K}\) defined by
\begin{equation}
  \mathbb{K} = \left\{ \frac{2 \pi p}{L} : p \in \mathbb{Z}\right\}.
  \label{eq:K_def}
\end{equation}
These plane waves have the property that they are invariant under translations by a distance \(L\) in the \(x, y,\) or \(z\) directions, and they are especially useful for treating periodic systems that are defined by tiling the cube \(\left[-L/2, L/2\right]^{3}\).
We refer to this cube as the ``computational unit cell.''

To restrict ourselves to a finite basis, we take a cubic grid of \(N\) plane waves centered at the origin in momentum space.
More precisely, we consider only the plane waves with momenta \(\bm{k} \in \mathbb{K}_{cut}^3\), with
\begin{equation}
  \mathbb{K}_{cut} = \left\{ \frac{2 \pi p}{L} : p \in \left[-\frac{N^{1/3} - 1}{2}, \frac{N^{1/3} - 1}{2} \right] \subset \mathbb{Z} \right\}.
\end{equation}
In this work, we will usually find it more convenient to specify \(\mathbb{K}_{cut}\) in terms of a momentum cutoff \(K \in \mathbb{R}_{> 0}\),
\begin{equation}
  \mathbb{K}_{cut} = \left\{ k \in \mathbb{K} : \abs{k} \leq K\right\}.
  \label{eq:K_cut_def}
\end{equation}
Note that fixing an \(L\) and a \(K\) implicitly determines the number of single particle basis functions (\(N\)), with
\begin{equation}
  N = \left|\mathbb{K}_{cut}\right|^3 = \left(2 \left\lfloor \frac{KL}{2 \pi} \right\rfloor + 1\right)^3.
  \label{eq:N_implicit_from_KL}
\end{equation}
Qualitatively, we can interpret the momentum cutoff as defining the spatial resolution of the computational unit cell in real space.

Plane wave basis sets offer distinct advantages, but they are rarely used when treating molecular systems with classical computational approaches.
One key benefit is the fact that plane waves are inherently orthonormal. 
They also exhibit minimal bias in both real and momentum space, unlike atom-centered Gaussian basis sets, which are designed to represent specific types of states. 
This makes plane wave basis sets particularly useful for dynamical simulations, where it might be inappropriate to assume a priori that the electron density is localized around atomic sites.
Furthermore, plane wave basis sets are periodic, which makes them well-suited for simulating periodic systems such as crystalline materials. 

This periodicity complicates the treatment of isolated molecular systems, which either require the use of a large computational unit cell (to suppress the interaction of the molecular system with its own periodic images) or some other modification to the simulation. 
Furthermore, calculations in a plane wave basis set typically require significantly larger basis set sizes than analogous calculations using a Gaussian basis set~\cite{Andrews1996-ze, Fusti-Molnar2002-ch, Booth2016-rs, Babbush2018-zo}.
This is partially due to the very high momentum cutoff required to approximate sharp features in the wavefunction.
In many plane wave calculations, the sharp features due to electron-nucleus cusps are treated using pseudopotentials in order to reduce the required momentum cutoff~\cite{Hamann1979-gq, Bachelet1982-xg, Schwerdtfeger2011-jr}.

\subsection{First quantized quantum simulations}
\label{sec:first_quantized_preliminaries}

First quantized approaches to quantum mechanics long predate the development of quantum computing, and some of the earliest proposed quantum algorithms for quantum simulation make use of this formalism~\cite{Wiesner1996-mc,Abrams1997-nc, Zalka1998-zt,Boghosian1998-ow,Lidar1999-er,Kassal2008-jb}.
Recent works have revived interest in this approach by demonstrating that modern algorithmic techniques enable extremely efficient first quantized simulations of the electronic structure Hamiltonian~\cite{Babbush2019-xw, Su2021-uj, Babbush2023-ud}.
Building on the interaction picture techniques of Ref.~\citenum{Low2018-zx}, Ref.~\citenum{Babbush2019-xw} showed that it is possible to implement time evolution by the electronic structure Hamiltonian (in a plane wave basis set) with a gate complexity that scales as \(\widetilde{\mathcal{O}}(\eta^{8/3} N^{1/3} L^{-1} t)\), where \(\eta\) represents the number of electrons, \(N\) the number of plane waves, \(L\) the linear size of the cubic computational unit cell, and \(t\) the total time.
Ref.~\citenum{Su2021-uj} also explained how a similar scaling can be achieved using a real-space grid basis, although such basis sets have some technical shortcomings, so we do not consider them in this work.\footnote{Using a real-space grid basis necessitates making an approximation to the kinetic energy operator that makes the simulation non-variational. Specifically, it is possible for the lowest eigenvalue in a grid basis to be below the lowest eigenvalue of the actual electronic structure Hamiltonian operator.
}

In first-quantized simulation approaches, each of the \(\eta\) particles in the simulation is assigned a dedicated register of qubits that encodes that particle's state in a finite orthonormal basis of size \(N\).
Unlike in second quantization, first-quantized simulations do not enforce the correct particle statistics by construction.
Instead, fermionic (or bosonic) statistics are accounted for by demanding that the initial wavefunction is antisymmetric (or symmetric) with respect to the exchange of two registers representing identical particles~\cite{Abrams1997-nc, Berry2018-ey}.
This representation is especially convenient when \(N \gg \eta\), as is the case with a plane wave basis, since the overall space complexity is \(\mathcal{O}(\eta \log N)\).

While any choice of single particle basis is compatible with first quantization in principle, the efficient \(N^{1/3}\) scaling obtained by Ref.~\citenum{Babbush2019-xw} and subsequent works is due to the highly structured nature of a plane wave basis set.
In fact, this scaling can be misleading, since there are multiple distinct ways of scaling the basis set size that should be discussed separately.
In one limit, we keep \(L\) (the linear size of the computational unit cell) fixed as we increase \(N\).
This corresponds to an increase in the kinetic energy cutoff.
In other words, we increase the maximum frequency of the plane waves that we include in the simulation, effectively increasing the resolution of our simulation in real space by allowing us to represent features on a smaller length scale.
This kind of scaling is necessary, for example, to represent electrons in highly-localized core orbitals.

Alternatively, we can increase \(N\) and \(L^3\) proportionately while fixing the maximum frequency.
When we scale in this fashion, the size of the computational unit cell in real space increases while the kinetic energy cutoff (and the minimum length scale we can accurately represent) stays the same.
This kind of scaling is necessary when we wish to suppress the interaction of a molecular system with its artificial periodic images by increasing the size of the computational unit cell.
In this limit, the complexity of performing real-time evolution, phase estimation, or similar simulation tasks with the approach of Ref.~\citenum{Babbush2019-xw} and Ref.~\citenum{Su2021-uj} actually scales logarithmically in \(N\) (neglecting the cost of initial state preparation).

The best known algorithms for preparing even a single arbitrary Slater determinant (mean-field wavefunction) scale as \(\tilde{\mathcal{O}}(\eta N)\)~\cite{Babbush2023-ud}.
Contrast this with the cost of, e.g., time evolution, which scales with the basis set and computational unit cell size as \(N^{1/3} L^{-1}\) (up to logarithmic factors).
Prior approaches to initial state preparation have a cost that scales cubically worse than the other components of the simulation when we consider increasing \(N\) while keeping \(L\) fixed and exponentially worse when we increase \(N\) and \(L^3\) in proportion.
This paper addresses the scaling of state preparation in both limits, ultimately finding that we can achieve a cost that is polylogarithmic with the basis set size in both cases.

In order to achieve a scaling that is better than the number of free parameters required to describe an arbitrary Slater determinant (\(\mathcal{O}(\eta N)\)), we restrict ourselves to preparing Slater determinants that are obtained from calculations in a smaller Gaussian-type orbital basis.
Under a similar restriction, it should be possible to achieve the same logarithmic scaling that our approach does using an extension of the state preparation approach proposed by Zalka in Ref.~\citenum{Zalka1998-zt}.
However, as we alluded to in the introduction of this manuscript, such an approach would likely require the practically infeasible coherent calculation of a large number of definite integrals. 

\subsection{Matrix product states and tensor trains}
\label{sec:MPS_and_TT}

The main results of this paper make use of matrix product states (MPS) or tensor-train (TT) factorizations~\cite{Ostlund1995-gk, Vidal2003-dq, Grasedyck2010-pn, Oseledets2011-xm,Oseledets2013-vy}, which we briefly review here.
Consider a rank-\(n\) tensor \(T^{s_1 s_2 \cdots s_n}\).
We call a collection of \(n\) rank-\(2\) and rank-\(3\) tensors \(\left\{ A_{\alpha_1}^{s_1} A_{\alpha_1 \alpha_2}^{s_2} \cdots A_{\alpha_{n-1}}^{s_n} \right\}\), such that
\begin{equation}
  T^{s_1 s_2 \cdots s_n} = \sum_{\left\{ \alpha \right\}} A_{\alpha_1}^{s_1} A_{\alpha_1 \alpha_2}^{s_2} \cdots A_{\alpha_{n-1}}^{s_n}
\end{equation}
a tensor train factorization of \(T\).
Consider an \(n\)-qubit quantum state \(\ket{\psi}\) whose coefficients in the computational basis are given by \(T\),
\begin{align}
  \ket{\psi} &=
  \sum_{\bm{s} \in \left\{ 0, 1 \right\}^n} T^{s_1 s_2 \cdots s_n} \ket{s_1 s_2 \cdots s_n}
  \nonumber \\
  &=
  \sum_{\bm{s} \in \left\{ 0, 1 \right\}^n,\;  \left\{ \alpha \right\}} A_{\alpha_1}^{s_1} A_{\alpha_1 \alpha_2}^{s_2} \cdots A_{\alpha_{n-1}}^{s_n} \ket{s_1 s_2 \cdots s_n}.
\end{align}
In this paper, we use the term ``matrix product state'' to refer to quantum states whose coefficients are given by a tensor train factorization.
For both tensor trains and matrix product states, we refer to the indices \(\alpha_1, \alpha_2, \cdots \alpha_{n-1}\) as ``virtual indices'' and we refer to the dimensions of these indices, \(m_1, m_2, \cdots m_{n-1}\), as their ``bond dimensions.''
Sometimes we refer to the bond dimension of a particular MPS or TT with a single number, \(m = \max(m_1, m_2, \cdots m_{n-1})\).

Matrix product states have the convenient feature that we can easily convert them into efficient circuits for state preparation~\cite{Perez-Garcia2006-oe, Fomichev2023-vs}.
Without loss of generality, we can efficiently rewrite a matrix product state into a left-canonical form using a series of singular value decompositions and tensor contractions.
In left-canonical form, the tensors satisfy the following relations,
\begin{align}
  \sum_{s_1, \alpha_1} A^{s_1}_{\alpha_1} A^{s_1 *}_{\alpha_1}
   & =
  1,
  \nonumber
  \\
  \sum_{s_j, \alpha_{j}} A^{s_j}_{\alpha_{j-1}, \alpha_j} A^{s_j *}_{\alpha'_{j-1}, \alpha_j}
   & =
  \delta_{\alpha_{j-1}, \alpha'_{j-1}} \;\; \text{for} \;\; 1 < j < n
  \nonumber
  \\
  \sum_{s_n} A^{s_n}_{\alpha_{n-1}} A^{s_n *}_{\alpha'_{n-1}}
   & =
  \delta_{\alpha_{n-1}, \alpha'_{n-1}},
\end{align}
where \(\delta\) denotes the Kronecker delta function.
In other words, we can think of the first tensor as a normalized state of dimension \(2 m_1\), the middle tensors as isometries of dimension \(m_{j-1} \times 2 m_j\), and the final tensor as an isometry of dimension \(m_{n-1} \times 2\).
Standard techniques can then be used to implement the state preparation step and the isometries~\cite{Low2018-uu}.
We illustrate the resulting state preparation circuit using a hybrid quantum circuit/tensor network diagram in \Cref{fig:mps_diagram}.

\begin{figure}
  \includegraphics[width=.48\textwidth]{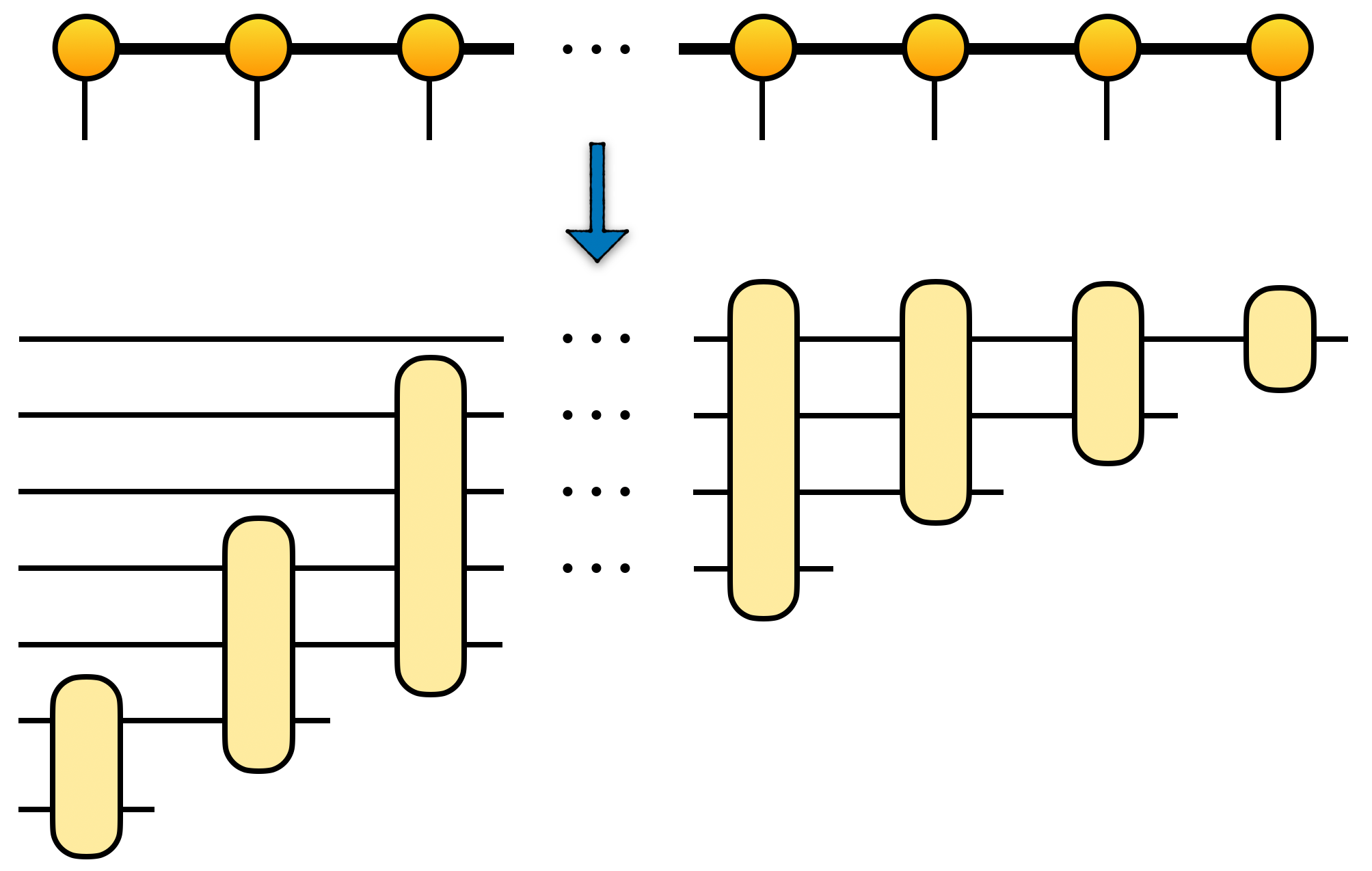}
  \caption{An illustration of the relationship between a matrix product state and the quantum circuit for preparing that state. The diagram above is a tensor network diagram for the matrix product state, where each circle corresponds to a tensor. Each line segment indicates an index of the tensors it touches. The segments connecting two tensors represent indices that are summed over. Each tensor in the matrix product state corresponds to a few-qubit unitary in the quantum circuit diagram below. When acting on the \(\ket{0}\) state, the circuit below prepares the matrix product state.}
  \label{fig:mps_diagram}
\end{figure}

The first-quantized quantum simulation algorithms that motivate this paper are usually considered in the context of a fault-tolerant quantum computer, capable of running large computations without error.
We usually quantify the cost of a quantum circuit in this setting by counting the number of non-Clifford gates, usually T or Toffoli gates, required for its implementation.
Building on Ref.~\citenum{Low2018-uu}, Ref.~\citenum{Fomichev2023-vs} shows that it is possible to prepare a bond dimension \(m\) matrix product state \(\ket{\psi}\) on \(n\) qubits using
\begin{equation}
  \textsc{Toffoli}\left( \ket{\psi} \right) = \widetilde{\mathcal{O}}(n m^{3/2})
\end{equation}
Toffoli gates, provided that one is willing to use \(\widetilde{\mathcal{O}}(m^{1/2})\) extra space to minimize the Toffoli cost.\footnote{Note that this extra space can be provided in the form of ``dirty ancilla'' qubits, i.e., qubits in an unknown state that are temporarily borrowed and then returned to their original state.}
Because Toffoli gates are so much more costly to implement than Clifford gates in the context of quantum error correction, this is a beneficial optimization even though the overall number of gates will still scale as \(n m^2\).
Although the circuit depth scales linearly with \(n\) using this approach, it achieves the best known scaling with respect to the number of non-Clifford gates (see Appendix E of Ref.~\citenum{Fomichev2023-vs} for more details).
Other recent works have considered other kinds of optimizations, such as reducing the circuit depth from linear in the number of qubits to logarithmic~\cite{Malz2024-vr}.
Since \(n\) is relatively small for our application we do not consider these approaches, but it would be interesting to understand if it is possible to achieve a reduction in gate depth while also minimizing the number of non-Clifford gates.

In this work, we make heavy use of tensor train and matrix product state factorizations to represent functions.
Informally, we ``tensorize'' a function \(f\) by first evaluating it on a grid of points over some interval \(\left[ a, b \right]\) to obtain a vector and then reshaping that vector into a tensor.
More precisely, we introduce the following definition.
\begin{restatable}[Tensorization of a function]{definition}{tensorizationdef}
  Let \(f: \mathbb{R} \rightarrow \mathbb{C}\) be a function defined on an interval \(\left[ a, b \right]\).
  Let \(\left\{ x_j \right\}\) be an \(N\)-point, equispaced, discretization of \(\left[ a,b \right]\), i.e., let
  \begin{equation}
    x_j = a + \frac{b - a}{N-1}j, \;\;\; j \in \left[ 0..N-1 \right].
  \end{equation}
  For any \(n \in \mathbb{Z}\) such that \(n \geq \log_2 N\), let
  \(y\) be a \(2^n\)-dimensional vector defined by
  \begin{equation}
    y_j =
    \begin{cases}
      f(x_j) & \text{ if } j < N
      \\
      0      & \text{ otherwise.
      }
    \end{cases}
  \end{equation}

  Let the rank-\(n\) tensor \(T\) be defined b
  \begin{equation}
    T^{s_1 s_2 \cdots s_n} = y_j, \;\;\; j = \sum_{k=1}^{n}2^{n-k} s_k,
  \end{equation}
  where each of the indices have dimension \(2\).
  Then we call \(T\) a rank-\(n\) tensorization of \(f\) over the aforementioned grid.
  \label{def:tensorization}
\end{restatable}

Tensor trains are especially well-suited to representing the tensorizations of certain elementary functions, such as polynomials, exponential functions, and various trigonometric functions~\cite{Grasedyck2010-pn,Oseledets2013-vy}.
In particular, Ref.~\citenum{Grasedyck2010-pn} proved that the tensorization of a degree \(d\) polynomial (in one variable) can be represented as a tensor train with bond dimension \(d+2\), regardless of the number of grid points or the size of the interval.
The bond dimensions required to represent a tensor \(T^{s_1 s_2 \cdots s_n}\) as a tensor train depend on the Schmidt rank of \(T\) across various partitionings.
In Ref.~\citenum{Grasedyck2010-pn}, Grasedyck observed that when we take the Schmidt decomposition of the tensorization of a degree \(d\) polynomial, the Schmidt vectors themselves can also be interpreted as tensorizations of degree \(d\) polynomials on some fixed grid.\footnote{This explanation ignores some small details that must be addressed when the grid size is not a power of two, but it is essentially correct.}
Because tensorization is a linear map, the Schmidt vectors are elements of a \(d+1\)-dimensional vector space, and therefore the Schmidt rank is at most \(d+1\) (or \(d+2\) for technical reasons when the grid size is not a power of two).

Tensorizations of functions also inherit other useful properties from tensor trains, such as the ability to efficiently add them together, take their (element-wise) product, or take their tensor product~\cite{Oseledets2011-xm}.
When a low bond dimension tensor train representation of a function exists, it is possible to efficiently construct this representation without explicitly tensorizing the function first~\cite{Oseledets2010-hb}.
As with tensor trains that arise in other contexts, we can efficiently ``round'' the tensor train representing a function, reducing its bond dimension at the expense potentially incurring some additional error.
This rounding can be implemented using a series of singular value decompositions, and the error can be controlled by discarding only the singular vectors whose singular values are below some chosen singular value cutoff~\cite{Oseledets2011-xm}.
These properties, combined with the ability to convert a matrix product state into a quantum circuit for state preparation, suggest a flexible strategy for preparing quantum states that encode a variety of functions.

Matrix product states are often encountered in the context of quantum systems in one spatial dimension, where they are a natural ansatz for approximating the ground state~\cite{Schollwock2011-mp,Orus2014-pj}.
While methods based on matrix product states typically scale exponentially when used to treat the ground state problem in more than one spatial dimension, the same is not true when using TT (or MPS) to encode functions.
For example, any degree \(d\) polynomial in two variables can be written as a linear combination of \(d+1\) polynomials that factorize between the two variables,
\begin{equation}
  f(x, y) = \sum_{j=0}^{d} g_j(x)h_j(y),
\end{equation}
where \(\operatorname{deg}(g_j) = j\), \(\operatorname{deg}(h_j) = d - j\).
For separable functions, the natural generalization of tensorization to multivariate functions is to take tensor products of the tensorizations of the individual factors.
We can define the tensorization of an arbitrary function by linear extension.
Using the notation \(TT_{w}\) to denote the tensor train encoding the tensorization of a function \(w\), we can write
\begin{equation}
  TT_{f} = \sum_{j=0}^d TT_{g_j} \otimes TT_{h_j}.
\end{equation}
Since the bond dimension of the sum is at most the sum of the bond dimensions, the bond dimension of \(TT_{f}\) is at most \(\frac{1}{2} \left( d^2 + 5d + 4 \right)\).
This example suggests that tensor trains and matrix product states retain their efficiency when applied to functions defined in a small number of spatial dimensions.

Our work is not the first to make use of these factorizations for the purposes of state preparation or other related tasks~\cite{Garcia-Ripoll2019-jk, Holmes2020-yx, Melnikov2023-ny, Iaconis2024-ny, Gonzalez-Conde2023-gw, Plekhanov2022-md, Lubasch2020-hw, Mc-Keever2024-jk, Mc-Keever2023-cd, Cervero-Martin2023-gu}.
In particular, Ref.~\citenum{Holmes2020-yx} makes use of the observation that the tensor train representation of a tensorized polynomial has low bond dimension in order to obtain efficient state preparation circuits for piecewise polynomial functions.
In this work, we focus on solving a particular set of state preparation tasks that enable us to efficiently implement our desired change-of-basis operation, but it would be interesting to further develop state preparation techniques based on tensor trains and compare them with the wide variety of other approaches that have been explored in the literature~\cite{Zalka1998-zt,Grover2000-rb,Grover2002-st,Ward2009-ln,Sanders2019-os,Wang2021-ti,Rattew2022-ck,McArdle2022-ux,Moosa2023-qr, Rosenkranz2024-ik}.

There is also a growing body of work that seeks to use these tools to directly solve quantum chemical problems (besides the standard application of the density matrix renormalization group algorithm to find the ground state of medium-sized strongly-correlated systems).
For example, there has been significant progress in constructing efficient Hartree-Fock SCF solvers that use tensor train representations of single-particle wavefunctions in real space to solve the Hartree-Fock equations on large grids~\cite{Khoromskaia2011-fd, Khoromskaia2015-vi, Jolly2023-ft}.
In the future, it may make sense to use the outputs of similar calculations to prepare initial states for quantum algorithms, rather than projecting solutions obtained in a Gaussian-type basis set as we do here.
Even more directly related to this work, Ref.~\citenum{Jolly2023-ft} has actually already numerically demonstrated that various orbitals, including the Gaussian-type orbitals that we consider here, can be efficiently represented on a real space grid using tensor trains.
Although their results differ slightly from ours and do not rigorously analyze the asymptotic scaling, they do provide additional evidence that the tensor train representations we make use of are efficient in a wide variety of situations.

\subsection{Notation}
\label{sec:notation}

Before we present our main results, we briefly introduce some notation we use throughout the manuscript.
We reserve Dirac bra-ket notation for (not necessarily normalized) wavefunctions of one or more qubits.
Except when indicated otherwise, we use the standard inner product notation \(\inp{f}{g}\) to denote the following inner product over square-integrable functions:
\begin{equation}
  \inp{f}{g} = \int_{x = -L/2}^{L/2} f^*(x) g(x) \, dx.
\end{equation}
When \(f\) and \(g\) are multivariate functions, we use the same notation to denote the analogous inner product
\begin{equation}
  \inp{f}{g} = \int_{\left( x, y, z \right) \in \left[ -L/2, L/2 \right]^3} f^*(x, y, z) g(x, y, z) \, dx \, dy \, dz.
\end{equation}
In other words, we perform these integrals over the computational unit cell in one or more spatial dimensions.
Likewise, we use the notation \(\norm{f}\) to indicate the norm induced by these inner products, i.e.,
\begin{equation}
  \norm{f} = \sqrt{\inp{f}{f}}.
\end{equation}

When we quantify the distance between quantum states, we mainly use the trace distance, since it is a general measure of the distinguishability of two quantum states and it can easily be used to bound the error in the expectation value of any observable of interest.
We use the notation \(D(\rho, \sigma)\) to denote the trace distance between two quantum states \(\rho\) and \(\sigma\),
\begin{equation}
  D(\rho, \sigma) = \frac{1}{2} \Tr \left( \sqrt{\left( \rho - \sigma \right)^\dagger \left( \rho - \sigma \right)} \right).
\end{equation}
For two normalized pure states \(\ket{\psi}\) and \(\ket{\phi}\), we use the notation \(D(\ket{\psi}, \ket{\phi})\) to denote the trace distance. It is useful to note that
\begin{equation}
  D(\ket{\psi}, \ket{\phi}) = \sqrt{1 - \abs{\braket{\psi}{\phi}}^2},
  \label{eq:nielsen_chuang_pure_state_inequality}
\end{equation}
even for infinite-dimensional quantum states.\footnote{Even for infinite-dimensional quantum systems, density operators have a well-defined trace and the trace distance between them is also well-defined. See Ref.~\citenum{Nielsen2012-ym} for a proof of \Cref{eq:nielsen_chuang_pure_state_inequality} that does not rely on any assumptions about the dimensionality of \(\ket{\psi}\) or \(\ket{\phi}\).}

Because we can interpret normalized square-integrable functions as wavefunctions of quantum particles, we also sometimes discuss the trace distance between two such functions \(f\) and \(g\), which we define in the natural way as 
\begin{equation}
  D(f, g) = \sqrt{1 - \abs{\inp{f}{g}}^2}.
\end{equation}

\section{State preparation with tensor networks}
\label{sec:state_prep}

We aim to take wavefunctions defined in a standard Gaussian basis set and efficiently prepare states that approximate them in a first-quantized plane wave representation.
Our starting point is a collection of \(N_{mo}\) molecular orbitals (MOs) obtained from classical mean-field calculations in a Gaussian basis set.
Each molecular orbitals is a normalized linear combination of $N_g$ primitive Gaussians basis functions, i.e.,
\begin{equation}
  \chi = \sum_j^{N_g} c_j g_j,
\end{equation}
where each \(g_j\) represents a primitive Gaussian basis function.
We make the assumption that \(L\), the linear size of the computational unit cell, is large enough that all of the molecular orbitals (and all of the primitive Gaussian basis functions) have negligible support outside of the computational unit cell.

Given a molecular orbital \(\chi\), we demonstrate that it can be well-approximated (over the computational unit cell) by a linear combination of plane waves with a finite momentum cutoff.
Furthermore, we show that the qubit wavefunction encoding this approximation in the usual first-quantized representation can itself be efficiently approximated by a matrix product state with low bond dimension.
We organize these results by first providing bounds on the bond dimensions required to approximate primitive Gaussian basis functions as matrix product states in \Cref{sec:primitive_gaussian_basis_functions_as_tensor_trains}.
In \Cref{sec:individual_orbitals_as_MPS}, we present the extension of these bounds to molecular orbitals formed from linear combinations of primitive Gaussian basis functions.

Given a set of orthonormal qubit wavefunctions \(\ket{\chi_1}, \ket{\chi_2}, \cdots, \ket{\chi_{N_{mo}}}\) representing the molecular orbitals \(\chi_1, \chi_2, \cdots, \chi_{N_{mo}}\), we can define the isometry that encodes the change of basis
\begin{equation}
  V = \sum_{k=1}^{N_{mo}} \ketbra{\chi_k}{k}.
\end{equation}
In \Cref{sec:overall_cost}, we explain how to efficiently implement an approximation to \(V\) given the matrix product state representations for the molecular orbitals.
This, in turn, implies that we can efficiently perform a variety of state preparation tasks.
For example, we show that we can prepare an approximation to an \(\eta\)-electron Hartree-Fock state obtained from a classical calculation in a Gaussian basis set with \(N_g\) primitive function using a number of Toffoli gates that scales as \(\eta^2 N_g^{3/2}\) (neglecting logarithmic factors and assuming that the angular momentum quantum numbers of the primitive Gaussian basis functions are \(\mathcal{O}(1)\)).

\subsection{Representing primitive Gaussian basis functions with matrix product states}
\label{sec:primitive_gaussian_basis_functions_as_tensor_trains}

In this section, we address the problem of efficiently representing a single primitive Gaussian basis function as a matrix product state in our first-quantized plane wave representation.
For now, we consider a function centered at the origin, 
\begin{align}
  g(x,y,z) \propto x^{l}y^{m}z^{n}e^{-\gamma(x^2+y^2+z^2)},
\end{align}
where $l,m,n$ are the angular momentum quantum numbers in the $x,y,$ and $z$ directions respectively, $\gamma$ is a factor determining the width of the Gaussian, and the constant of proportionality is a positive number chosen so that \(g\) is normalized over the real line.
We begin by approximately projecting such a function onto our plane wave basis.
Then we qualitatively explain why we should expect that an efficient description is possible.
Finally, present a formal statement of our technical results, showing that we can represent a single Cartesian component of our primitive Gaussian basis function using a compact matrix product state encoding.

Let us consider the projection of a primitive Gaussian basis function onto a plane wave basis set.
Recall that we can write the plane wave basis function corresponding to the momentum vector \(\bm{k}\) as
\begin{align}
  \varphi_{\bm{k}}(x,y,z) & = \frac{1}{L^{3/2}} e^{-ik_xx}e^{-ik_yy}e^{-ik_zz}
  \label{eq:cartesian_plane_wave_def}
\end{align}
in Cartesian coordinates.
Because both \(\varphi_{\bm{k}}(x, y, z)\) and \(g(x, y, z)\) are separable across \(x\), \(y\), and \(z\) coordinates, we can consider the projection onto each Cartesian component separately.
To that end, let
\begin{equation}
  g_x(x) = c_{l, \gamma} x^l e^{- \gamma x^2},
  \label{eq:g_x_def_main_text}
\end{equation}
where we set \(c_{l, \gamma} = \frac{2^l \left(2\gamma\right)^{l/2 + 1/4}\sqrt{l!}}{\pi^{1/4}\sqrt{(2 l) !} }\) so that \(g(x)\) is normalized over the real line.
Additionally, for any \(k \in \mathbb{K}\), we let \(\phi_k\) denote a plane wave in one spatial dimension with momentum \(k\), normalized over the interval \(\left[ -L/2, L/2 \right]\),
\begin{equation}
  \phi_k(x) = \frac{1}{L^{1/2}} e^{-ikx}.
\end{equation}

As mentioned previously, in order to simplify our analysis, we make the assumption that \(L\) is large enough that the support of \(g_x\) outside of the interval \(\left[ -L/2, L/2 \right]\) is negligible.
We discuss this assumption and the other considerations involved in choosing \(L\) in more detail in \Cref{app:box_size_considerations}.
We choose to make use of this assumption rather than quantitatively analyze the minimum required \(L\) for a combination of reasons.
First of all, the tails of \(g_x\) are suppressed exponentially, so it is clear that the we do not require a very large value of \(L\) in any case.
Secondly, simulating a molecular system using a plane wave basis will demand that we use an \(L\) that is sufficiently large in order to suppress the interactions of the system with its periodic images.
In practice, we expect that any choice of \(L\) that satisfies this condition will also satisfy our assumption.

Formally, the projection of \(g_x\) onto our plane wave basis (within the computational unit cell) is given by \(\sum_{k \in \mathbb{K}} \left( \int_{u=-L/2}^{L/2} \phi_k^*(u) g_x(u) du \right) \phi_k\).
We use our approximation to replace the integral over the interval \(\left[ -L/2, L/2 \right]\) with an integral over \(\mathbb{R}\), defining the function
\begin{equation}
  \tilde{g}_x
  =
  \frac{1}{\widetilde{\mathcal{N}}_x} \sum_{k \in \mathbb{K}} \left(\int_{\mathbb{R}} \phi_k^*(u) g_x(u) du\right) \phi_k,
  \label{eq:tilde_g_x_def}
\end{equation}
where we set \(\widetilde{\mathcal{N}}_x\) to be the positive number such that \(\tilde{g}_x\) is normalized over the interval.
Qualitatively, the assumption that \(g_x\) approximately vanishes outside of this interval implies that \(\widetilde{\mathcal{N}}_x \approx 1\) and \(\tilde{g}_x \approx g_x\).

Having accepted this approximation, our aim is to find a function \(f_x\) that satisfies three criteria:
First of all, \(f_{x}\) should be a good approximation to \(\tilde{g}_x\) (and therefore to \(g_x\)) over the interval \(\left[ -L/2, L/2 \right]\).
Secondly, \(f_x\) should have no support on plane waves above some momentum cutoff \(K\) so that we can truncate to a finite basis.
Thirdly, we should be able to represent the quantum state corresponding to \(f_{x}\) in the usual first quantized plane wave representation with a low bond dimension matrix product state.

Before presenting our formal results, let us consider why these goals should be achievable.
A finite momentum cutoff is reasonable because, like a simple Gaussian function, any given \(g_x\) only has significant support on a particular range of frequencies.
A finite momentum cutoff explains why we can expect a representation of \(\tilde{g}_x\) that is efficient in the limit where we increase the size of our basis by increasing \(K\).

We would also like a representation that scales well when we increase \(L\) while keeping \(K\) fixed.
In this limit, we are essentially representing \(\tilde{g}_x\) using an increasingly fine discretization of the same interval in momentum space.
Intuitively, we make use of the fact that the Fourier transform of \(g_x\) is a ``well-behaved'' function that does not need to be evaluated on an arbitrarily fine grid to be well characterized.
More specifically, the higher-order derivatives of the Fourier transform of \(\tilde{g}_x\) are small enough that we can approximate the function using low-degree polynomial.
We can therefore appeal to the results of Ref.~\citenum{Grasedyck2010-pn} to construct a low bond dimension tensor train that encodes the values of this polynomial evaluated on a grid in momentum space.

\Cref{thm:main_technical_result}, which we state below and prove in \Cref{app:primitive_gaussian_mps_lemma_proof}, formalizes the conclusions of these arguments.
In \Cref{app:3d_proof}, we state and prove a straightforward generalization of this lemma to the case of a primitive Gaussian basis function defined in three spatial dimensions and centered at an arbitrary position.
\begin{restatable}[Efficient MPS representation of one-dimensional primitive Gaussian basis functions\IfAppendix{, formal statement}{, informal statement}]{lemma}{onedmpstheorem}
  For arbitrary \(l\in\mathbb{Z}_{\geq 0}\), \(\gamma \in \mathbb{R}_{>0}\), \(L \in \mathbb{R}_{>0}\), let \(\tilde{g}_{x}\) and \(\widetilde{\mathcal{N}}_x\) be as defined in \Cref{eq:tilde_g_x_def}.

  \IfAppendix{
  For an arbitrary \(\epsilon \in \left( 0, 1 \right)\), let \(K \in \mathbb{R}\) and \(m \in \mathbb{Z}\) be arbitrary numbers that satisfy
  \begin{align}
    K
      & \geq
    2\sqrt{2 \gamma}\sqrt{
      2 \log \left( 2\epsilon^{-1} \right)
      + \log \left( 45 \right)
      + \log \left( 1 + \frac{2 \sqrt{\pi}}{L\sqrt{\gamma}} \right)
      + l \log(4 l)
    },
    \label{eq:K_choice}
    \\
    m & \geq
    \frac{e^2 K^2}{2 \gamma},
    \label{eq:m_choice}
  \end{align}
  with the convention that \(0 \log (0) = 0\).
    
  Let \(\mathbb{K}_{cut}\) be defined as in \Cref{eq:K_cut_def}, i.e., \(\mathbb{K}_{cut} = \left\{ k \in \mathbb{K} : k \leq K \right\}\) with \(\mathbb{K} = \left\{ \frac{2 \pi p}{L} : p \in \mathbb{Z}\right\}\).

  If \(\widetilde{\mathcal{N}}_x \geq 2/3\), then a degree \(m-1\) polynomial \(p(k)\) exists such that the following statements are true:
  }{
  If \(\widetilde{\mathcal{N}}_x \geq 2/3\), then there exists numbers \(K \in \mathbb{R}\), \(m \in \mathbb{Z}\) with
  \begin{equation}
    K = \tilde{\mathcal{O}}\left( \sqrt{\gamma \left(\log(1/\epsilon)  + l \right)}  \right),
  \end{equation}
  \begin{equation}
    m = \tilde{\mathcal{O}}\left(\log(1/\epsilon) + \log\left(1 + \frac{1}{L \sqrt{\gamma}}\right)+ l\right),
  \end{equation}
  and a degree \(m-1\) polynomial \(p(k)\) such that the following statements are true:
  }

  Let
  \begin{equation}
    f_x = \sum_{k \in \mathbb{K}_{cut}} p(k) \phi_k,
  \end{equation}
  where \(\mathbb{K}_{cut} = \left\{ k \in \mathbb{K}: \abs{k} \leq K\right\}\).
  Then
  \begin{equation}
    \norm{f_x} = 1
  \end{equation}
  and 
  \begin{equation}
    D(\tilde{g}_{x}, f_x) \leq \epsilon.
  \end{equation}

  Let \(\ket{f_x}\) denote the function \(f_x\) encoded in the standard representation we use for first-quantized simulation in a plane wave basis, i.e.,
  \begin{equation}
    \ket{f_x} = \sum_{k \in \mathbb{K}_{cut}} p(k) \ket{\frac{kL}{2 \pi}},
  \end{equation}
  where \(\ket{\frac{kL}{2\pi}}\) denotes the computational basis state corresponding to the binary encoding of the signed integer \(\frac{kL}{2 \pi}\).
  For any \(n \in \mathbb{Z}\) such that \(n \geq  \log_2 \left|\mathbb{K}_{cut}\right|\), there exists an \(n\)-qubit matrix product state representation of \(\ket{f_x}\) with bond dimension at most \(2m + 3\).
  \label{thm:main_technical_result}
\end{restatable}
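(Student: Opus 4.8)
\emph{Overview.} The plan is to split the argument into an analytic part --- constructing the polynomial \(p\) and bounding \(D(\tilde{g}_x, f_x)\) --- and a structural part --- controlling the bond dimension through the polynomial tensor-train result of Ref.~\citenum{Grasedyck2010-pn}.

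\emph{Analytic part.} I would start from the closed form of the Fourier coefficients of \(g_x\): differentiating the Gaussian integral \(\int_{\mathbb{R}} e^{iku}e^{-\gamma u^2}\,du = \sqrt{\pi/\gamma}\,e^{-k^2/(4\gamma)}\) a total of \(l\) times in \(k\) gives \(\int_{\mathbb{R}}\phi_k^*(u) g_x(u)\,du = C_{l,\gamma}\, H_l\!\left(k/(2\sqrt{\gamma})\right) e^{-k^2/(4\gamma)}\) for an explicit constant \(C_{l,\gamma}\), where \(H_l\) is the degree-\(l\) Hermite polynomial; hence \(\widetilde{\mathcal{N}}_x\,\tilde{g}_x = C_{l,\gamma}\sum_{k\in\mathbb{K}} H_l\!\left(k/(2\sqrt{\gamma})\right) e^{-k^2/(4\gamma)}\,\phi_k\). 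I would then take \(p(k)\) proportional to \(\widetilde{\mathcal{N}}_x^{-1} C_{l,\gamma}\, H_l\!\left(k/(2\sqrt{\gamma})\right)\tau_J(k)\), where \(\tau_J(k)=\sum_{j=0}^{J}(-k^2/(4\gamma))^j/j!\) is the degree-\(2J\) Taylor truncation of \(e^{-k^2/(4\gamma)}\) and \(J=\lfloor(m-1-l)/2\rfloor\) so that \(\deg p \le m-1\), with the overall constant fixed by \(\norm{f_x}=1\). Since \(D(\tilde{g}_x, f_x)\le\norm{\tilde{g}_x - f_x}\) for unit vectors, and re-normalising a vector that is \(\delta\)-close to a unit vector displaces it by at most \(2\delta\), it is enough to bound, each by a constant multiple of \(\epsilon^2\): (i) the momentum-truncation error \(\sum_{\abs{k}>K,\,k\in\mathbb{K}}\abs{C_{l,\gamma} H_l(k/(2\sqrt{\gamma}))e^{-k^2/(4\gamma)}}^2\), and (ii) the Taylor error \(\sum_{k\in\mathbb{K}_{cut}}\abs{C_{l,\gamma} H_l(k/(2\sqrt{\gamma}))}^2\abs{e^{-k^2/(4\gamma)}-\tau_J(k)}^2\). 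For (i), using \(\widetilde{\mathcal{N}}_x\ge 2/3\) to control \(\widetilde{\mathcal{N}}_x^{-1}\), comparing the lattice sum to a Gaussian integral, and applying a crude pointwise bound on \(H_l\) reduces the requirement to roughly \(K^2/\gamma \gtrsim \log(1/\epsilon) + l\log(4l) + \log\!\left(1+1/(L\sqrt{\gamma})\right)\), which is precisely the stated choice of \(K\) (the last term records the coarseness of the momentum lattice, the \(l\log(4l)\) term the size of the Hermite prefactor). For (ii), the Lagrange remainder gives \(\abs{e^{-t}-\tau_J(k)}\le t^{J+1}/(J+1)!\) with \(t=k^2/(4\gamma)\le K^2/(4\gamma)\), and Stirling together with \(m\gtrsim e^2 K^2/(2\gamma)\) (so \(J+1\gtrsim e^2 K^2/(4\gamma)\)) forces this below \(e^{-(J+1)}\), which is far smaller than \(\epsilon\) even after summing over the \(\mathcal{O}(N^{1/3})\) lattice points.

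\emph{Structural part.} Given \(p\) of degree \(m-1\), the amplitude that \(\ket{f_x}\) assigns to the basis state labelled by the signed integer \(\nu\) equals \(p(2\pi\nu/L)\) for \(\abs{\nu}\le M:=\lfloor KL/(2\pi)\rfloor\) and \(0\) otherwise. Because the signed-integer encoding maps a bit string to \(\nu\bmod 2^n\), this amplitude vector is supported on at most two contiguous dyadic blocks: a prefix block on which it equals the degree-\(m-1\) polynomial \(b\mapsto p(2\pi b/L)\) of the index \(b\), and a suffix block on which, after the relabelling \(b\mapsto 2^n-1-b\) (a bitwise negation, which preserves bond dimension), it is again a degree-\(m-1\) polynomial on a prefix block. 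Ref.~\citenum{Grasedyck2010-pn} bounds the bond dimension of the tensorization of a degree-\(d\) polynomial --- including the zero-padded, non-power-of-two case --- by \(d+2\), and bond dimensions are subadditive under sums of MPS (cf.\ the piecewise-polynomial construction of Ref.~\citenum{Holmes2020-yx}); combining these gives an \(n\)-qubit MPS for \(\ket{f_x}\) of bond dimension at most \(2m+3\).

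\textbf{Main obstacle.} The analytic part is where essentially all the work is. Everything rests on carrying the three errors --- lattice truncation, Taylor truncation, and the final re-normalisation --- through the factor \(\widetilde{\mathcal{N}}_x^{-1}\) cleanly, and in particular on choosing bounds for \(\abs{C_{l,\gamma}}/\widetilde{\mathcal{N}}_x\) and for \(H_l\) that are crude enough to produce closed-form conditions on \(K\) and \(m\) yet sharp enough that those conditions match the ones in the statement. Replacing the sums over \(\mathbb{K}\) by integrals with fully explicit, non-asymptotic error terms --- unavoidable since the claim is a finite quantitative bound, not merely a scaling --- is the most calculation-heavy sub-step. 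The structural part, by contrast, is a direct appeal to Ref.~\citenum{Grasedyck2010-pn} once one observes that the signed-integer encoding only splits the support into (at most) two polynomial blocks.
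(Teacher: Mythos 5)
Your plan follows the same skeleton as the paper's proof: write the Fourier coefficients of \(g_x\) in closed form as a degree-\(l\) polynomial in \(k\) times \(e^{-k^2/(4\gamma)}\) (the paper expresses the same object as a sum of Hermite--Gaussian functions \(\psi_n(k/\sqrt{2\gamma})\), \(n \le l\), exploiting that they are Fourier eigenfunctions), split the error into a momentum-tail piece that fixes \(K\) and a polynomial-approximation piece that fixes \(m\), absorb the normalizations at the end, and invoke Grasedyck's tensorization bound with a two-block split over the sign to reach bond dimension \(2m+3\). The one genuinely different sub-step is the polynomial approximation: you keep the degree-\(l\) Hermite factor exact and Taylor-truncate only the Gaussian \(e^{-k^2/(4\gamma)}\) to degree \(2J\) with \(J=\lfloor(m-1-l)/2\rfloor\), bounding the error by the Lagrange remainder \(t^{J+1}/(J+1)!\) with \(t\le K^2/(4\gamma)\); since \(m\ge e^2K^2/(2\gamma)\) forces \(J+1\gtrsim e^2K^2/(4\gamma)\), Stirling gives an error of order \(e^{-(J+1)}\le(\epsilon/2)^{4e^2}\), leaving ample slack for the \(\ell^\infty\!\to\ell^2\) conversion over the \(\mathcal{O}(KL)\) lattice points. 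The paper instead Chebyshev-interpolates the full Hermite--Gaussian functions on \([-K,K]\) (\Cref{lemma:chebyshev_approximation}), which requires bounding their \(m\)-th derivatives via the ladder recursion and Cram\'er's inequality; both routes need \(m=\Theta(K^2/\gamma)\) and fit under the stated hypothesis, so your variant trades a constant factor of degree efficiency for a derivative-free remainder bound. Two details to watch when you flesh this out: the paper's integer encoding is sign--magnitude rather than the reduction mod \(2^n\) you describe (your two-block argument adapts directly, but the extra rank-one ``\(-0\)'' correction in \Cref{cor:signed_mps_dimension} is what turns \(2m+2\) into \(2m+3\)); and the momentum-tail estimate, which you rightly flag as the calculation-heavy step, must use essentially the bound \(0\le H_n(x)\le 2^n x^n\) for \(x\ge\sqrt{2n}\) together with monotonicity and a sum-to-integral comparison (as in \Cref{lemma:momentum_cutoff}) to land on the stated constants, including the \(\log(1+2\sqrt{\pi}/(L\sqrt{\gamma}))\) term.
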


At a technical level, the proof of this lemma begins by considering the Fourier transform of a primitive Gaussian basis function and bounding the error incurred by imposing a momentum cutoff.
We then go on to show that it is possible to approximate the Fourier-transformed function as a low-degree polynomial on the region where the absolute value of the momentum is less than the cutoff.
Ref.~\citenum{Grasedyck2010-pn} allows us to represent the resulting piecewise polynomial function as a tensor train with low bond dimension, which implies the statement about the matrix product state representation.

\subsection{Representing molecular orbitals with matrix product states}
\label{sec:individual_orbitals_as_MPS}

We leverage the fact that primitive Gaussian basis functions have efficient matrix product state representations to show that the same is true for molecular orbitals composed of linear combinations of these functions.
Let us consider a molecular orbital \(\chi\) formed from a linear combination of \(N_g\) primitive Gaussian basis functions,
\begin{equation}
  \chi = \sum_j^{N_g} c_j g_j.
\end{equation}
Note that we are skipping a standard intermediate step where fixed linear combinations of the primitive Gaussian basis functions are used to construct ``atomic orbitals,'' which are then combined to form molecular orbitals.
We can use \Cref{cor:3d_basis_functions} (the generalization of \Cref{thm:main_technical_result} to three spatial dimensions) to approximate each \(g_j\) as a matrix product state.
Matrix product states can be summed to obtain a new matrix product state whose bond dimension is at most the sum of the original bond dimensions, so our main task is to bound the errors carefully.

Similarly to \Cref{sec:primitive_gaussian_basis_functions_as_tensor_trains}, we make the assumption that the computational unit cell is sufficiently large that none of our primitive Gaussian basis functions have significant support outside of \(\left[ -L/2, L/2 \right]^3\).
This allows us to focus on quantifying the error with respect to an approximately projected molecular orbital, defined as
\begin{equation}
  \tilde{\chi} \propto \sum_j^{N_g} c_j \tilde{g}_j,
  \label{eq:chi_tilde_main_text}
\end{equation}
where each \(\tilde{g}_j\) is obtained from the corresponding \(g_j\) by approximately projecting each Cartesian component of the \(g_j\), as in \Cref{eq:tilde_g_x_def}.
As was the case for a single primitive basis function, this assumption simplifies our analysis and introduces a negligible amount of error.
We discuss the motivation for this approximation more substantially in \Cref{app:box_size_considerations}.

One difficulty that we face when constructing a normalized function by taking linear combinations of primitive Gaussian basis functions is the fact that the Gaussian basis functions are generally not orthogonal.
In the limit where the basis functions are linearly dependent, small errors can be amplified arbitrarily.
This same issue can lead to numerical instability in classical calculations, where it is usually controlled by means of an orthogonalization procedure that filters out the subspace corresponding to the most ill-conditioned linear combinations.
We follow the same approach and assume that our molecular orbitals have been constructed using the canonical orthogonalization procedure (which we review in \Cref{app:canonical_orthogonalization}).

Besides the subtlety involved in accounting for the non-orthogonality of the primitive Gaussian basis functions, the rest of the technical details are straightforward.
We present the formal statement of our result below, and provide its proof in \Cref{app:MO_error_bound_proof}.
\begin{restatable}[Efficient MPS representations of molecular orbitals\IfAppendix{, formal statement}{, informal statement}]{lemma}{MOasMPS}
  Let \(\chi\) denote a normalized linear combination of \(N_g\) primitive Gaussian functions obtained by use of the canonical orthogonalization procedure with an eigenvalue cutoff of \(\sigma\) (as reviewed in \Cref{app:canonical_orthogonalization}).
  Let \(\tilde{\chi}\) denote the corresponding approximately projected molecular orbital, as defined in \Cref{eq:chi_tilde_main_text}.
  Let \(\ell\) denote the largest angular momentum quantum number among the Cartesian components of any of the primitive Gaussian basis functions that compose \(\chi\), and let \(\Gamma\) denote the largest value of the width parameter \(\gamma\).
  For a sufficiently large computational unit cell, \(\left[ -L/2, L/2 \right]^3\), and for any \(\epsilon \in \left( 0, 1 \right)\), there exists a function \(\tau\) with the following properties:

  \(\tau\) is normalized over the computational unit cell and
  \begin{equation}
    D(\tau, \tilde{\chi}) \leq \epsilon.
  \end{equation}

  \(\tau\) is a linear combination of plane waves with momenta \(\bm{k} \in \mathbb{K}_{cut}^3\), where \(\mathbb{K}_{cut}^3\) is defined by the momentum cutoff \IfAppendix{
  \begin{equation}
    K
    =
    2\sqrt{2 \Gamma}\sqrt{
      2 \log \left( \frac{288\sqrt{3}N_g}{\epsilon^4 \sigma^2} \right)
      + \ell \log(4 \ell)
      + \log \left( 45 \right)
    }.
    \label{eq:K_choice_for_MO}
  \end{equation}
  }{
  \begin{equation}
    K
    =
    \mathcal{O} \left( \sqrt{\Gamma} \sqrt{\log  N_g  + \log  \epsilon^{-1}  + \log\sigma^{-1} + \ell \log  \ell}\right).
    \label{eq:K_choice_for_MO_big_o}
  \end{equation}
  }

  Let \(\ket{\tau}\) denote the quantum state that encodes \(\tau\) in the usual first-quantized plane wave representation (where each computational basis state is associated with a triplet of signed integers indexing a particular plane wave).

  We can represent \(\ket{\tau}\) as a matrix product state on \(n \sim 3 \log_2 \left( KL \right)\) qubits with bond dimension at most \IfAppendix{
  \begin{equation}
    M \leq 8e^2 N_g \left(
    2 \log \left( \frac{288\sqrt{3}N_g}{\epsilon^4 \sigma^2} \right)
    + \ell \log(4 \ell)  + 4\right).
    \label{eq:MO_mps_bound}
  \end{equation}
  }{
  \begin{equation}
    M = \mathcal{O}\left( N_g \left( \log  N_g  + \log \epsilon^{-1}  + \log\sigma^{-1} + \ell \log  \ell   \right) \right).
  \end{equation}
  }
  \label{lemma:mo_as_mps}
\end{restatable}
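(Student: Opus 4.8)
The plan is to build \(\tau\) by applying the single-component result (the three-dimensional generalization \Cref{cor:3d_basis_functions} of \Cref{thm:main_technical_result}) to each primitive \(g_j\) in the expansion \(\chi = \sum_j c_j g_j\), then summing the resulting matrix product states and normalizing. The main work is bookkeeping the errors through the non-orthogonal linear combination, so I would proceed in four steps. First, I would fix a per-primitive error target \(\epsilon_1\) (to be pinned down at the end in terms of \(\epsilon\), \(N_g\), and \(\sigma\)) and invoke the three-dimensional lemma to obtain, for each \(j\), a plane-wave-truncated approximation \(f_j\) with \(D(\tilde g_j, f_j)\le\epsilon_1\), supported on momenta in \(\mathbb{K}_{cut}^3\) with the common cutoff \(K\) of \Cref{eq:K_choice_for_MO} (taking the worst-case \(\gamma = \Gamma\) and \(\ell\) across all primitives so that a single \(K\) works for all \(j\)), and each \(\ket{f_j}\) an MPS of bond dimension at most \(2m+3\) in each of the three Cartesian directions, i.e.\ bond dimension \(O(m)\) total per primitive after taking the tensor product across \(x,y,z\) — I would track the exact constant \(8e^2(\cdots)\) by chasing the \(m\) from \Cref{eq:m_choice} with \(K\) as in \Cref{eq:K_choice_for_MO}.

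Second, I would convert the trace-distance bound \(D(\tilde g_j,f_j)\le\epsilon_1\) into a statement about the \(L^2\) norm of the difference of the (normalized) functions, \(\norm{\tilde g_j - f_j} \le \sqrt 2\,\epsilon_1\) up to a harmless global phase, using \(D(f,g)=\sqrt{1-|\inp fg|^2}\) and the elementary inequality relating this to \(\norm{f-g}\) for unit vectors. Third — and this is the crux — I would propagate these individual errors through \(\tilde\chi \propto \sum_j c_j \tilde g_j\). Writing \(\tau \propto \sum_j c_j f_j\), the unnormalized error is \(\norm{\sum_j c_j(\tilde g_j - f_j)} \le \sqrt 2\,\epsilon_1 \sum_j |c_j|\), and I must control \(\sum_j|c_j|\) and the normalization constant \(\norm{\sum_j c_j\tilde g_j}\) (which is close to \(1\) by the large-box assumption but could in principle be amplified). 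This is exactly where the canonical orthogonalization with eigenvalue cutoff \(\sigma\) enters: it guarantees that the overlap (Gram) matrix \(S\) of the primitives, restricted to the retained subspace, has smallest eigenvalue at least \(\sigma\), so any normalized MO expanded in primitives has \(\sum_j |c_j|^2 \le 1/\sigma\) and hence \(\sum_j|c_j|\le \sqrt{N_g/\sigma}\) by Cauchy–Schwarz; likewise the normalization constant is bounded below by something like \(\sqrt\sigma\) (again from the eigenvalue cutoff plus the box assumption), so normalizing amplifies the error by at most \(O(1/\sqrt\sigma)\). Combining, \(D(\tau,\tilde\chi) = O(\epsilon_1 \sqrt{N_g}/\sigma)\); setting \(\epsilon_1 \asymp \epsilon^2 \sigma / \sqrt{N_g}\) (the \(\epsilon^2\) and the extra \(\sigma\) inside the \(\log\) in \Cref{eq:K_choice_for_MO} — the \(288\sqrt3 N_g/(\epsilon^4\sigma^2)\) — are the fingerprints of exactly this substitution, with the square coming from the \(D\)-to-\(L^2\) conversion being applied in both directions) makes \(D(\tau,\tilde\chi)\le\epsilon\).

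Fourth, I would assemble the MPS bound: \(\ket\tau\) is (a normalization of) a sum of \(N_g\) matrix product states each of bond dimension at most \(8e^2(2\log(288\sqrt3 N_g/\epsilon^4\sigma^2) + \ell\log(4\ell)+4)/N_g\)-ish per primitive — more precisely each \(\ket{f_j}\) has bond dimension \(O(m)\) with \(m\) as in \Cref{eq:m_choice} evaluated at the shared \(K\) — and summing \(N_g\) MPS multiplies the bond dimension by at most \(N_g\), giving \(M \le 8e^2 N_g(2\log(288\sqrt3 N_g/\epsilon^4\sigma^2) + \ell\log(4\ell)+4)\); the qubit count \(n \sim 3\log_2(KL)\) is immediate from \(N = (2\lfloor KL/2\pi\rfloor+1)^3\) in \Cref{eq:N_implicit_from_KL}. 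I expect the main obstacle to be step three: getting clean, correctly-constanted lower bounds on both \(\sum_j|c_j|^{-1}\)-type quantities and the normalization constant purely from the \(\sigma\)-cutoff, while simultaneously keeping honest track of the "large box" approximation \(\widetilde{\mathcal N}_x\approx 1\) (which contributes the \(\log(1+2\sqrt\pi/(L\sqrt\gamma))\) term visible in \Cref{eq:K_choice} but is absorbed into the \(O(\cdot)\) / into the box-size appendix here) — in other words, verifying that canonical orthogonalization really does prevent error amplification with the stated polynomial dependence on \(N_g/\sigma\), rather than an exponential one. The plane-wave-truncation and polynomial-tensor-train parts are entirely inherited from \Cref{thm:main_technical_result} and require no new ideas.
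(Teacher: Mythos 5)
Your proposal follows essentially the same route as the paper's proof in \Cref{app:MO_error_bound_proof}: approximate each primitive via \Cref{cor:3d_basis_functions} with a common worst-case cutoff, sum the resulting matrix product states (bond dimensions add), and control the error amplification through the non-orthogonal linear combination using the canonical-orthogonalization bound on \(\norm{\bm{c}}_2\) together with Cauchy--Schwarz, which is exactly how the factor \(N_g/(\epsilon^4\sigma^2)\) inside the logarithm arises. The only slip is the intermediate claim \(D(\tau,\tilde{\chi})=\mathcal{O}(\epsilon_1\sqrt{N_g}/\sigma)\) --- the trace distance scales as the \emph{square root} of the \(L^2\) error of the unnormalized sum (cf.\ \Cref{lemma:simple_fidelity_bound}) --- but your subsequent choice \(\epsilon_1\propto\epsilon^2\) and your explanation of the \(\epsilon^4\) show that you account for this squaring correctly.
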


For the most part, this lemma is a direct consequence of \Cref{thm:main_technical_result} together with the standard rules for manipulating matrix product states (specifically, that the bond dimension of the sum is at most the sum of the bond dimensions).
Notably, the factor of \(N_g\) in the bond dimension bound arises from adding the matrix product states for \(N_g\) primitive Gaussian basis functions together.
The only technical difficulty that must be addressed is that the primitive Gaussian basis functions are not orthonormal, which presents an obstacle in bounding the overall error.
By making the assumption that the molecular orbitals are obtained by means of a canonical orthogonalization procedure, we avoid this issue by limiting the extent to which the sum is poorly conditioned.

From the asymptotic bounds on the momentum cutoff and the bond dimension, it is easy to see which parameters might have a significant impact on the overall cost of our procedure.
Qualitatively, the momentum cutoff mostly depends on the sharpness of the sharpest primitive Gaussian basis function.
Except for the angular momentum quantum number (which can be considered a small constant factor for any reasonable basis set), the rest of the parameters enter only logarithmically into the bounds on \(K\).
Similarly, we find that the bounds on the bond dimension depends mostly on the number of primitive Gaussian basis functions.
Interestingly, our numerical results (which we present in \Cref{sec:MO_MPS_numerics}) reveal that the actual scaling with \(N_g\) might be sub-linear in practice.

\subsection{Implementing the many-body change of basis and state preparation}
\label{sec:overall_cost}

In this section, we show how the results of \Cref{lemma:mo_as_mps} allow us to accomplish a variety of state preparation tasks.
We begin by describing how we implement the change of basis that maps a first-quantized wavefunction from a molecular orbital basis of size \(N_{mo}\) to a plane wave basis of size \(N \sim \left( \frac{K L}{\pi} \right)^3\).
Using this change of basis, we construct efficient state preparation protocols for Slater determinants, and for more sophisticated wavefunctions.
We briefly discuss the gate complexity and error bounds for our protocols, deferring the detailed analysis to \Cref{app:gadget_appendix}.

Assume that \(\left\{ \ket{\chi_m} \right\}\) is a collection of orthonormal quantum states representing \(N_{mo}\) molecular orbitals in a plane wave basis.
Then 
\begin{equation}
  V = \sum_{m}^{N_{mo}} \ketbra{\chi_m}{m}
  \label{eq:idealized_isometry}
\end{equation}
defines a single-particle change of basis mapping from the molecular orbital basis to the plane wave basis.
The many-body operator that implements this change of basis in first quantization is \(V^{\otimes \eta}\).

Given access to quantum circuits that prepare each \(\ket{\chi_m}\), it is straightforward to implement \(V\) using the general approach for unitary synthesis outlined in Ref.~\citenum{Low2018-uu} and Ref.~\citenum{Kliuchnikov2013-il}, which we discuss further in \Cref{app:unitary_synthesis_description}.
Specifically, their approach constructs a unitary \(W\) that acts as \(\ketbra{0}{1} \otimes V + \ketbra{1}{0} \otimes V^\dagger\) on the appropriate input states.
Approaching the implementation of \(V\) this way is convenient because we can construct \(W\) as a product of reflections,
\begin{align}
  W = &\prod_{m=1}^{N_{mo}} \mathbb{I} - 2 \ketbra{w_m},
  \;\;\;\;\;\;\;\;
  \nonumber \\ 
  \ket{w_m} =& \frac{1}{\sqrt{2}} \left( \ket{1}\ket{m} - \ket{0}\ket{\chi_m} \right).
  \label{eq:W_def}
\end{align}
We can implement each of these reflections using two calls to the (controlled) state preparation circuits for each \(\ket{\chi_m}\), a reflection about the \(\ket{0}\) state, and a few Clifford gates.
The zero state reflection can be performed using a number of Toffoli gates that scales linearly in the number of qubits, which is typically much smaller than the cost of state preparation, so we can neglect this factor and express the overall cost as
\begin{equation}
  \textsc{Toffoli}\left(V^{\otimes \eta}\right) \sim 2 \eta \sum_{m=1}^{N_{mo}} T_m,
  \label{eq:toff_cost_v_eta}
\end{equation}
where \(T_m\) denotes number of Toffoli gates required to prepare \(\ket{\chi_m}\) conditioned on an ancilla qubit.

In practice, while the overall strategy remains the same, we do not exactly implement our change of basis using the operation \(W\).
To begin with, we use \Cref{lemma:mo_as_mps} to approximately represent the molecular orbitals with a collection of matrix product states \(\ket{\tau_1}, \ket{\tau_2}, \cdots, \ket{\tau_{N_{mo}}}\).
As we discuss in \Cref{app:mps_state_prep_details}, we also make an additional approximation when compiling the MPS state preparation circuits into a discrete gate set.
Following the approach of Fomichev \textit{et al.} in Appendix E of Ref.~\citenum{Fomichev2023-vs}, the number of Toffoli gates required to prepare an \(n\)-qubit MPS with bond dimension \(m\) to within an error \(\epsilon_2\) in the trace distance scales as
\begin{equation}
  \textsc{Toffoli}\left( \ket{\psi} \right) = \mathcal{O}(n m^{3/2} \log \left( \epsilon_2^{-1} \right)).
  \label{eq:toff_cost_mps}
\end{equation}
Let \(\ket{\tilde{\tau}_1}, \ket{\tilde{\tau}_2}, \cdots, \ket{\tilde{\tau}_{N_{mo}}}\) denote the quantum states prepared by the compiled MPS preparation circuits.
In our actual implementation of the approximate basis change, we follow the basic strategy described above, except that we approximate \(W\) by replacing each \(\ket{\chi_m}\) with the corresponding \(\ket{\tilde{\tau_m}}\), yielding
\begin{align}
  X =& \prod_{m=1}^{N_{mo}} \mathbb{I} - 2 \ketbra{x_m},
  \nonumber \\
  \ket{x_m} =& \frac{1}{\sqrt{2}} \left( \ket{1}\ket{m} - \ket{0}\ket{\tilde{\tau}_m} \right).
  \label{eq:X_def}
\end{align}

Before writing a single expression that encapsulates the overall cost of our approximate change of basis, we recall some notation.
As above, we let \(N_{mo}\) denote the number of molecular orbitals, \(N_g\) denote the number of primitive Gaussian basis functions, and \(\eta\) denote the number of particles.
We let \(\ell\) denote the largest angular momentum quantum number and \(\Gamma\) denote the largest value of the width parameter \(\gamma\) among the primitive basis functions used in the original Gaussian basis set.
Additionally, we let \(\sigma\) denote the eigenvalue cutoff of the canonical orthogonalization procedure, \(\epsilon_1\) denote the precision parameter for the individual MPS from \Cref{lemma:mo_as_mps}, and \(\epsilon_2\) denote the bound on the error incurred from compiling to a finite gate set.
The number of qubits in each particle register, \(n \sim 3 \log_2(KL)\), is determined by the momentum cutoff \(K\) and the linear size of the computational unit cell in real space \(L\).
We assume that \(K \propto \sqrt{\Gamma} \sqrt{\log N_g + \log \epsilon_1^{-1} + \log \sigma^{-1} + \ell \log \ell}\) in order to satisfy the demands of \Cref{lemma:mo_as_mps}.

Using the bounds on the bond dimensions from \Cref{lemma:mo_as_mps}, combined with the expressions for the non-Clifford gate complexities in \Cref{eq:toff_cost_v_eta} and \Cref{eq:toff_cost_mps}, we find that
\begin{widetext}
\begin{equation}
  \textsc{Toffoli}\left(X^{\otimes \eta}\right) = \widetilde{\mathcal{O}}\left( \eta N_{mo} N_g^{3/2} \left(\log L + \log \Gamma\right)\left(\ell + \log \epsilon_1^{-1} + \log \sigma^{-1}\right)^{3/2} \log  \epsilon_2^{-1} \right),
  \label{eq:overall_scaling}
\end{equation}
\end{widetext}
where the notation \(\widetilde{\mathcal{O}}\left( \cdot \right)\) indicates that we neglect subdominant logarithmic factors.
From this equation, we can see that the dominant factors in the scaling are the number of particles, the number of molecular orbitals, and the number of Gaussian orbitals in the original basis.
Since the number of plane waves only enters into the cost logarithmically (through the linear scaling in \(n\) in \Cref{eq:toff_cost_mps}), the overall scaling is relatively insensitive to the size of the computational unit cell (\(L\)) and the sharpness of the most localized Gaussian orbitals (\(\Gamma\)).
If desired, we could derive a similar expression that explicitly shows the logarithmic dependence of the Toffoli complexity on the total number of plane waves.
This expression would be identical to \Cref{eq:overall_scaling}, except with \(\log L + \log \Gamma\) replaced by \(\log N\).\footnote{While it is true that \(\log L + \log \Gamma \neq \log N\), the asymptotic notation we use would hide the other differences.}

In addition to bounding its overall cost, we would like to quantify the error between the ideal change of basis and our approximation.
Most of the sources of error are accounted for by \(\epsilon_1\) (the error from \Cref{lemma:mo_as_mps}) and \(\epsilon_2\) (The error from compiling to a finite gate set).
However, there is one additional approximation that we have made, related to the finite size of the computational unit cell.
As we discussed in \Cref{sec:individual_orbitals_as_MPS}, \Cref{lemma:mo_as_mps} does not directly bound the error between the true molecular orbital and its matrix product state approximation.
Instead, the error bound is calculated with respected to an approximately projected molecular orbital, defined in \Cref{eq:chi_tilde_main_text}.
By construction, the approximately projected molecular orbitals and the true molecular orbitals would be exactly equal (over the computational unit cell) if the primitive Gaussian basis functions had no support outside of the computational unit cell.
However, while the primitive Gaussian basis functions decay exponentially, they have some small support outside of the computational unit cell.

Rather than bound this error precisely, we will make the assumption that it is negligible.
Specifically, as in \Cref{sec:primitive_gaussian_basis_functions_as_tensor_trains} and \Cref{sec:individual_orbitals_as_MPS}, we make the assumption that the primitive Gaussian basis functions and the molecular orbitals have negligible support outside of the computational unit cell.
We justify this assumption qualitatively in \Cref{app:box_size_considerations} by arguing that any reasonable choice of \(L\) for a practical calculation will lead to Gaussian basis functions with vanishingly small support outside of the computational unit cell.
Using this assumption, we can conflate the true molecular orbitals and the approximately projected molecular orbitals, and therefore interpret \Cref{lemma:mo_as_mps} as bounding the overall error between a molecular orbital and its approximation as a matrix product state.
As we discuss in \Cref{app:sd_preparation}, we can therefore bound the overall error for our change of basis operation (up to our approximation) by
\begin{equation}
  \norm{W^{\otimes \eta} - X^{\otimes \eta}} \leq 2^{3/2} \eta N_{mo} \left(\epsilon_1 + \epsilon_2 \right),
  \label{eq:overall_error_bound}
\end{equation}
where \(\norm{\cdot}\) denotes the spectral norm.
Because the cost of implementing \(X^{\otimes \eta}\) scales logarithmically with \(\epsilon_1\) and \(\epsilon_2\), we can control the overall error with only a logarithmic overhead. 

Our strategy for mapping from a molecular orbital basis to a plane wave basis enables efficient state preparation for both mean-field states, such as the Hartree-Fock state, and more sophisticated correlated wavefunctions.
We consider the Hartree-Fock state first.
When expressed in the molecular orbital basis in first quantization, the Hartree-Fock wavefunction is simply an antisymmetric superposition over the first \(\eta\) computational basis states,
\begin{equation}
  \ket{\mathcal{A}} = \frac{1}{\eta!
  } \sum_{\pi \in S_\eta} (-1)^{|\pi|} \ket{\pi(1)}\ket{\pi(2)}\cdots{\ket{\pi(\eta)}}.
\end{equation}
This state can be efficiently prepared with the algorithm of Ref.~\citenum{Berry2018-ey}, using a number of Toffoli gates that scales as \(\eta \log \eta \log N\).
To complete the state preparation in the plane wave basis, we apply our change of basis algorithm to \(\ket{\mathcal{A}}\), setting \(N_{mo} = \eta\).
This rotates each \(\ket{m}\) to an approximation of the corresponding molecular orbital (expressed in the plane wave basis),
\begin{align}
  &\left(\bra{0}\otimes \mathbb{I}\right) X^{\otimes \eta} \ket{1} \ket{\mathcal{A}} \approx 
  \nonumber \\
  &\frac{1}{\eta!
  } \sum_{\pi \in S_\eta} (-1)^{|\pi|} \ket{\pi(\chi_1)}\ket{\pi(\chi_2)}\cdots{\ket{\pi(\chi_\eta)}},
\end{align}
preparing the desired state.
The overall cost of preparing the Hartree-Fock state in the plane wave is given by \Cref{eq:overall_scaling} with \(N_{mo}\) set to \(\eta\), since the cost of preparing the antisymmetric initial state is negligible.
More explicitly, the Toffoli cost for Slater determinant state preparation scales as \(\eta^2 N_g^{3/2} \ell\) when we neglect all logarithmic factors.

To efficiently prepare states beyond a single Slater determinant, we can combine our change of basis algorithm with the strategy outlined in Ref.~\citenum{Babbush2023-ud} for mapping from second to first quantization.
Consider a circuit that prepares an arbitrary \(\eta\)-particle correlated wavefunction \(\ket{\Psi}\) in a molecular orbital basis of size \(N_{mo}\), such as a density matrix renormalization group approximation to the ground state.  
Following Appendix G of Ref.~\citenum{Babbush2023-ud}, we can convert this wavefunction to first quantization.
This conversion requires a number of Toffoli gates scaling as \(\widetilde{\mathcal{O}}(\eta N_{mo})\). 
After the conversion, we will have encoded the same \(\ket{\Psi}\) as a first-quantized wavefunction in a molecular orbital basis.
In other words, the first-quantized representation of \(\ket{\Psi}\) will be an antisymmetric linear combination of states of the form \(\ket{b_1}\ket{b_2}\ket{b_3} \cdots \ket{b_\eta}\), where the each \(b_k\) is a computational basis state representing one of the \(N_{mo}\) molecular orbitals, e.g., an integer from \(1\) to \(N_{mo}\).

To complete the desired state preparation task, we need to implement the change of basis that transforms the wavefunction from a first-quantized molecular orbital representation to a first-quantized plane wave one.
The procedure is exactly the same as for a single Slater determinant, except that we need to transform \(N_{mo}\) molecular orbitals rather than \(\eta\) (for some \(N_{mo} > \eta\)).
The operation that we would like to apply is \(V^{\otimes \eta}\), with \(V\) as defined in \Cref{eq:idealized_isometry}.
We approximate this operation by encoding the orbitals as matrix product states, as described above.
The cost of mapping from second-quantization to first-quantization (without changing the basis) is negligible compared to the cost of implementing the basis transformation from a molecular orbital basis to a plane wave one.
Therefore, neglecting the cost of preparing the second-quantized wavefunction on \(N_{mo}\) molecular orbitals, the scaling for the cost of first-quantized state preparation is simply given by \Cref{eq:overall_scaling}.
The only source of error is due to the change of basis, which is bounded by \Cref{eq:overall_error_bound}.
Both our method and a naive approach scale linearly in \(N_{mo}\), but we reduce the naive linear scaling in the number of plane waves to a logarithmic scaling.

\section{Numerical results}
\label{sec:MO_MPS_numerics}

As we have explained in the preceding sections, our main theoretical result is that single-particle wavefunctions constructed in a Gaussian basis set and then projected onto a plane wave basis can be efficiently represented using a MPS or TT factorization.
The bond dimension required for an accurate approximation scales weakly with the grid size parameters $L$ and $K$, and with the desired accuracy. 
This allows for an exponential reduction in the gate count as a function of the number of grid points with increasing $L$ compared to the prior state-of-the-art for first-quantized Slater determinant state preparation in a plane wave basis. 
We now demonstrate that our approach can provide a significant practical benefit, even for small molecular systems, with numerical calculations performed on water molecules, benzene, and a series of larger acenes. 
These results complement similar numerical observations that have been obtained for real-space grids, which should behave qualitatively similarly~\cite{Jolly2023-ft}.
The details of our calculations are described in \Cref{app:computational_methods}, including both the construction of the quantum chemical Hamiltonians and the subsequent generation of the matrix product state representations of the orbitals. 

We perform several different numerical experiments to establish the utility of our approach.
First, we demonstrate the efficacy of plane wave basis sets for the representation of smooth molecular orbitals. Second, we show that our asymptotic results for the weak scaling of the bond dimension with the number of grid points are realized in practical calculations on small molecules. Finally, we provide numerical evidence that the analytic bound we have derived in terms of the number of primitive Gaussian basis functions ($N_g$) overestimates the true cost, and in fact the bond dimension scales very weakly, if at all, with $N_g$ in practice. We conclude with a detailed comparison of the quantum resource estimates of our Slater determinant state preparation method against the previous state-of-the-art, achieving a reduction of several orders of magnitude in the estimated Toffoli count.

We begin in \Cref{fig:plane_wave_projection} by studying the infidelity between the occupied molecular orbitals of a water molecule and their MPS/TT-factorized projections onto a finite plane wave basis.
The occupied orbitals are first calculated in a cc-pVDZ Gaussian basis set using restricted Hartree-Fock (RHF) as implemented by the PySCF software package~\cite{Sun2017-sx}.
As explained in \Cref{app:computational_methods}, we approximate their projection onto a plane wave basis by taking the direct summation over matrix product states representing the appropriate primitive Gaussian basis functions.
To reduce the computational overhead, the MPS representing the partial sum is truncated after each addition with a singular value threshold of $10^{-9}$, which incurs some additional error in the wavefunction. 
The plateauing of the error in the lower right side of \Cref{fig:plane_wave_projection} is due to the accumulated errors from this intermediate compression.
After constructing the MPS for a given molecular orbital and truncating it again with some larger singular value cutoff, we use the methodology described in \Cref{app:computational_methods} to estimate the error (in the trace distance) between the approximated MO and the exact MO.\footnote{In calculating the error, we make the approximation that the primitive Gaussian basis functions vanish outside of $[-L/2,L/2]$.
This assumption is true up to machine precision for the values of $L$ that we consider in this section.}

\begin{figure*}
  \centering
  \begin{subfigure}[t]{.49\textwidth}
    \centering
    \includegraphics[width=\textwidth]{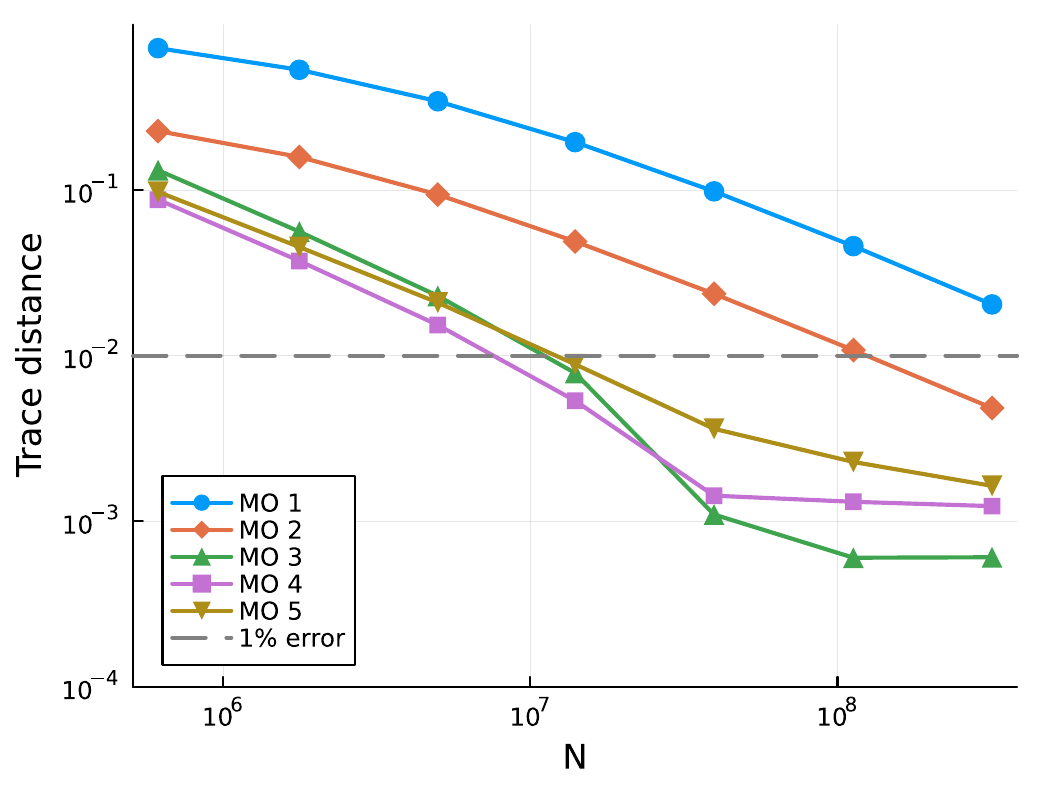}
    \caption{Fixed singular value threshold, varying momentum cutoff.}
    \label{fig:plane_wave_projection_left}
  \end{subfigure}
  \begin{subfigure}[t]{.49\textwidth}
    \centering
    \includegraphics[width=\textwidth]{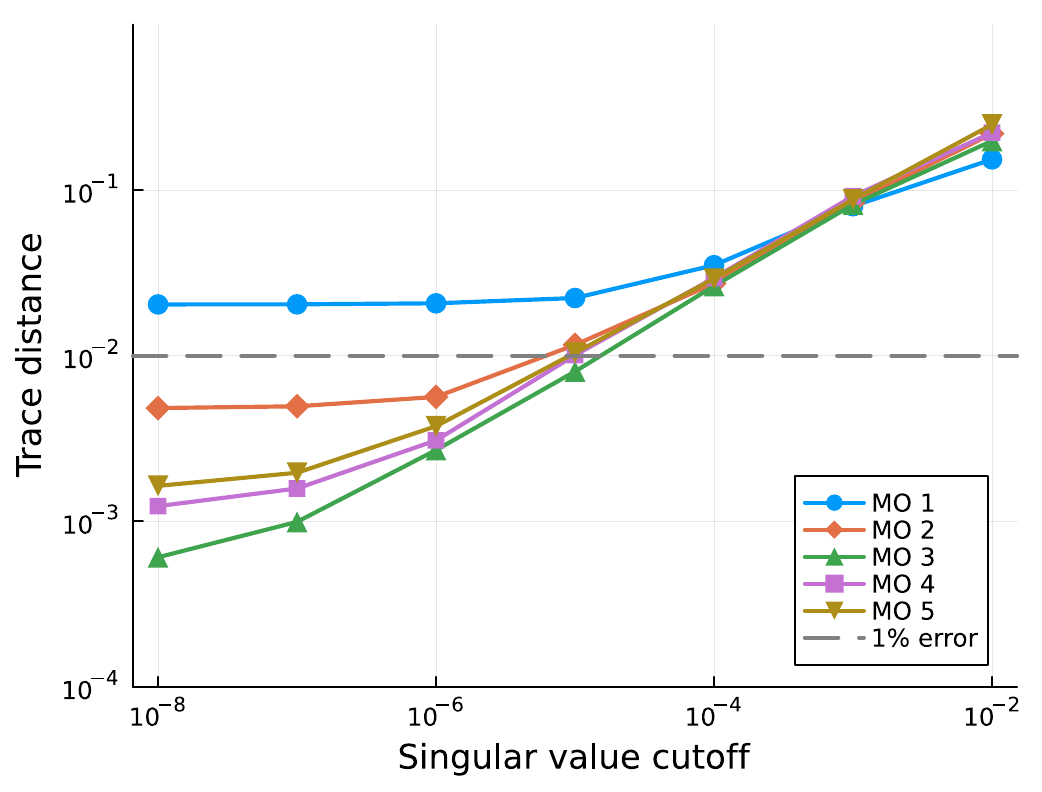}
    \caption{Fixed momentum cutoff, varying singular value threshold.}
    \label{fig:plane_wave_projection_right}
  \end{subfigure}
  \begin{subfigure}[t]{\textwidth}
    \vspace{3pt}
  \includegraphics[width=.9\textwidth]{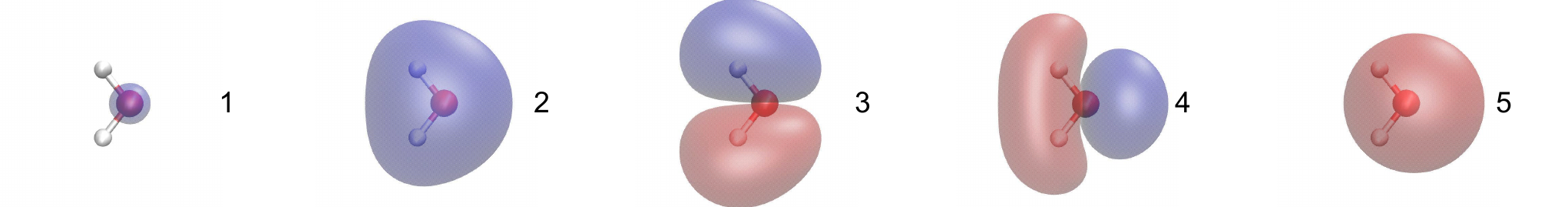}
  \end{subfigure}
  \caption{The error (measured in the trace distance) for the occupied RHF molecular orbitals of a water molecule under projection onto a plane wave basis and subsequent tensor train factorization. The RHF orbitals were computed using the cc-pVDZ basis set and the size of the computational unit cell was fixed to \(L=60\) Bohr.
  Panel (a) shows the error incurred using a constant singular value truncation threshold of $10^{-8}$ for the MPS compression while varying the number of plane waves via an increasing momentum cutoff $K$. 
  Panel (b) shows the error obtained with a constant number of plane waves ($N=3.19\times10^8$) over a range of singular value cutoffs.
  The accuracy of the delocalized orbitals converges quickly with an increasing plane wave basis set size, while the most localized orbital requires a large number of plane waves for an accurate representation.
  }
  \label{fig:plane_wave_projection}
\end{figure*}

In \Cref{fig:plane_wave_projection_left}, we plot the error as a function of the number of plane waves in the basis set (\(N\)) for each of the five occupied molecular orbitals.
In this panel, we use a small predefined singular value threshold of \(10^{-8}\) and a fixed computational unit cell with a side length of \(L=60\) Bohr.
We vary the number of plane waves by varying the momemtum cutoff over a range that corresponds to kinetic energies from \(10\) to \(640\) Hartee.
As expected, the highly spatially localized molecular orbital dominated by the \(1s\) atomic Gaussian orbital of the oxygen atom requires a much larger momentum cutoff to achieve a small error than the more delocalized orbitals involving the oxygen \(2s\) and \(2p\) subshells. Classical simulations in a plane wave basis typically use pseudopotentials to avoid the need for the high-frequency plane waves required to explicitly represent highly localized core orbitals~\cite{Hamann1979-gq, Bachelet1982-xg, Schwerdtfeger2011-jr}.
Rather than including the core electrons in the wavefunction of the system, pseudopotential approaches remove them from the simulation and replace the all-electron potential with an effective potential that treats the nuclei and their core electrons together.
Recent studies have begun to explore the combination of pseudopotentials with first-quantized quantum simulation~\cite{Zini2023-zt, Berry2023-nn}. 
It would be a useful subject for future work to adapt our state preparation approach for use in concert with these methods.

In \Cref{fig:plane_wave_projection_right}, we fix the number of plane waves (\(N=3.19 \times 10^8\)) and the size of the computational unit cell (\(L=60\) Bohr) while varying the final singular value cutoff.
We find that we can systematically reduce the overall error by reducing the singular value threshold down to a floor that is set by the ability of the underlying plane wave basis set to represent the MOs.
Below the plots in \Cref{fig:plane_wave_projection}, we show visualizations of the five occupied MOs.
We note that the fifth occupied MO has a nodal plane that is obscured by the top-down view.

The key idea behind our approach is that we can efficiently represent molecular orbitals (projected into a plane wave basis) using matrix product states.
In \Cref{fig:benzene_orbitals}, we show the bond dimensions obtained when numerically compressing four different molecular orbitals of the benzene molecule, selected to convey the full range of observed behavior under different approximation conditions.
In all cases, we use $N=3.97\times10^7$ plane waves and a side length of $L=60$ Bohr for the computational unit cell.
These parameters are chosen so that we have a sufficiently high momentum cutoff to represent the localized molecular orbitals and a sufficiently large \(L\) to ensure that the primitive Gaussian basis functions have negligible support outside of the computational unit cell.
We consider two different factors that affect the accuracy of the molecular orbital representation: the singular value threshold used for the numerical compression of the matrix product states (as described in \Cref{app:computational_methods}), and the choice of the original primitive Gaussian basis set.
Although the bond dimensions required are higher for the small cutoff (blue triangular markers) than they are for the large cutoff (orange circular markers), for both choices of threshold we see that the compressed orbitals have a bond dimension that is two to three orders of magnitude lower than in the worst case, i.e., an MPS that can represent an arbitrary state (grey square markers).

\begin{figure*}
  \includegraphics[width=.49\textwidth]{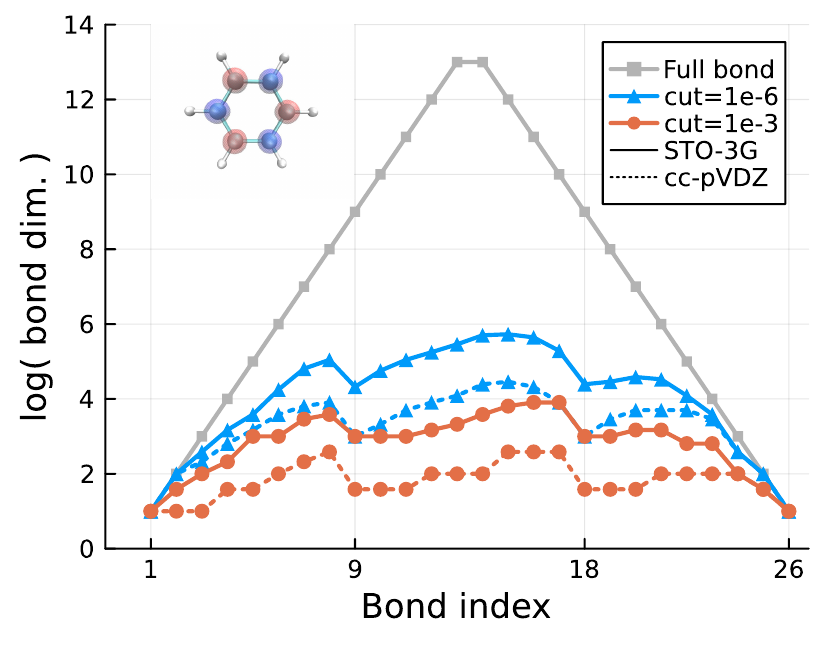}
  \includegraphics[width=.49\textwidth]{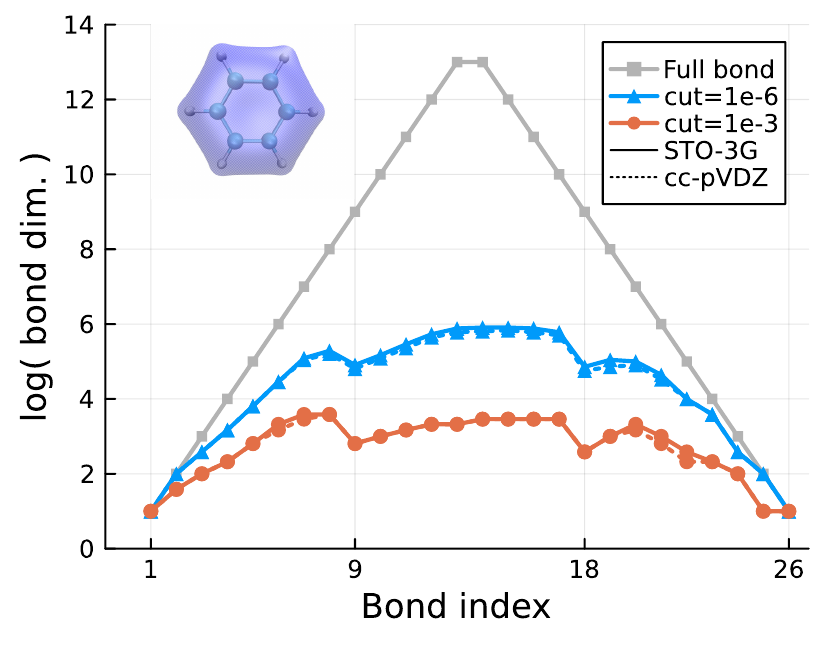}
  \includegraphics[width=.49\textwidth]{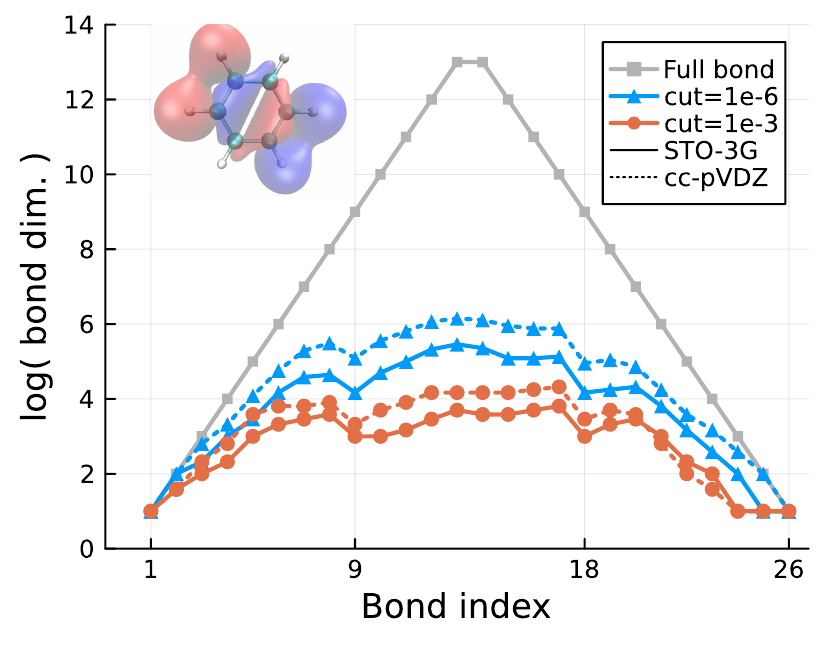}
  \includegraphics[width=.49\textwidth]{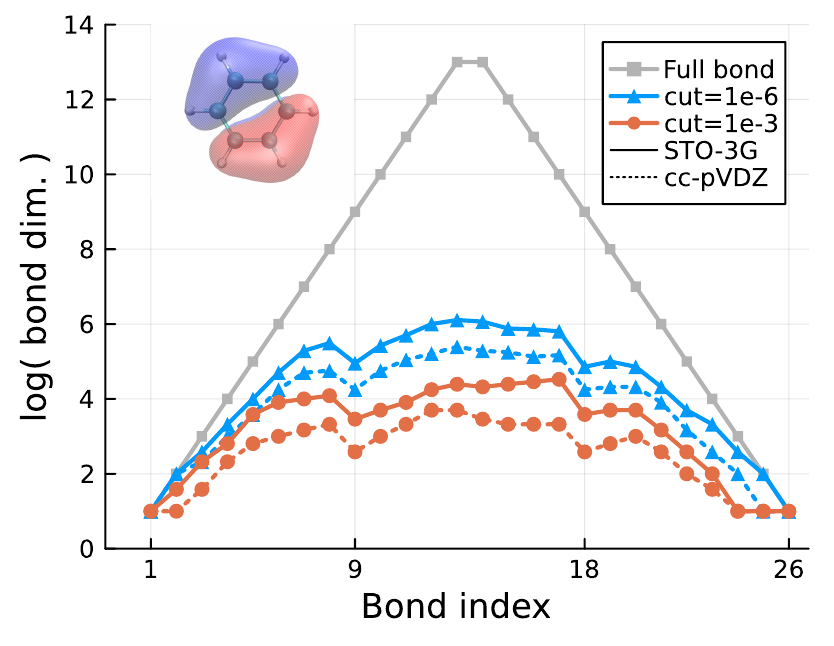}
  \caption{
    The bond dimension required encode selected MOs of benzene projected onto a plane wave basis.
    We plot the logarithm of the bond dimension on the y axis for each of the virtual indices of the MPS representation (x axis).
    The MOs are first calculated in the STO-3G ($N_g=108$ primitive basis functions, solid lines) and cc-pVDZ ($N_g=204$, dotted lines) basis sets.
    They are then projected onto a plane wave basis consisting of $N=3.97\times10^7$ plane waves defined on a computational unit cell whose side length is $L=60$ Bohr.
    We show results for two different singular value cutoffs, \(10^{-6}\) (blue triangles) and \(10^{-3}\) (orange circles).
    We find that there is no consistent increase (and often a decrease) in bond dimension when results from the minimal STO-3G basis set are compared with the larger and more accurate cc-pVDZ basis set, despite doubling the number of Gaussian primitives.
    For all choices of basis set and singular value cutoff we find a large reduction in bond dimension when compared with the bond dimension required to represent an arbitrary state (grey squares).
  }
  \label{fig:benzene_orbitals}
\end{figure*}

Interestingly, we find that the bond dimension depends only very weakly on the size of the Gaussian basis set. 
In fact, in two of the four cases we study, we can represent the wavefunctions obtained using the richer cc-pVDZ basis set (dotted lines) with a smaller bond dimension than is required for the minimal STO-3G basis set (solid lines), despite the fact that the cc-pVDZ basis has 204 primitive basis functions compared with the 108 of the STO-3G basis. 
We rationalize this observation as follows. 
First, the larger Gaussian basis sets typically contribute additional basis functions for the valence electrons which are more diffuse, corresponding to single-particle wavefunctions of higher total energy, and therefore usually contribute to the occupied molecular orbitals with small coefficients.

Secondly, while the upper bound on the bond dimension that we present in \Cref{lemma:mo_as_mps} scales linearly with the number of primitive Gaussian basis functions, this upper bound may be too pessimistic.
Specifically, the linear scaling with \(N_g\) does not take into account the possibility that the MPS representation of a molecular orbital may be further compressed after being constructed from a linear combination of the MPSs that encode the primitive Gaussian basis functions.
As we discussed briefly in \Cref{sec:MPS_and_TT}, some multivariate functions (such as low-degree polynomials in a few variables) have efficient representations as tensor trains.
We conjecture that a numerical study of the molecular orbitals might reveal that they can be well-approximated directly by such functions, even in the infinite basis set limit (where \(N_g \rightarrow \infty\)).
When the addition of more diffuse Gaussians does nontrivially alter the shape of the molecular orbital solutions, this alteration may help the MOs converge towards a `well-behaved' function that can be efficiently represented by a tensor train.

Next, we analyze the number of Toffoli gates required to approximate the highest occupied molecular orbital (HOMO) state for a series of four acene molecules, shown in \Cref{fig:mo_preparation_costs}.
This series, which consists of an increasing number of fused benzene rings, presents an opportunity to study the behavior of our approach as we systematically increase the size of a molecular system.
In this analysis, we keep the size of the computational unit cell fixed by setting \(L=60\) Bohr and vary the number of plane waves by varying the momentum cutoff (x axis in both panels).
The molecular orbitals are obtained from calculations using the cc-pVDZ Gaussian basis set before being approximately projected into a plane wave basis and compressed using a TT factorization.
We plot data for three different choices of the singular value cutoff used to compress the MPS representation of the HOMOs, \(10^{-4}\) (green line with short dashes), \(10^{-6}\) (orange line with long dashes), and \(10^{-8}\) (solid blue lines).
We calculate the number of Toffoli gates using the methodology outlined in \Cref{app:mps_state_prep_details}, choosing the number of bits so that the error in the rotation gates used for unitary synthesis is bounded by $10^{-16}$.
More specifically, we use the expression \Cref{eq:mps_state_prep_app_toff_count}, together with the numerically determined values for the bond dimensions required to represent the orbitals.
As above, we approximate the error using the approach described in \Cref{app:computational_methods}.

\begin{figure*}
  \centering
  \begin{subfigure}[t]{.49\textwidth}
    \centering
    \captionsetup{skip=-16pt,justification=raggedright,singlelinecheck=false}
    \includegraphics[width=.95\textwidth]{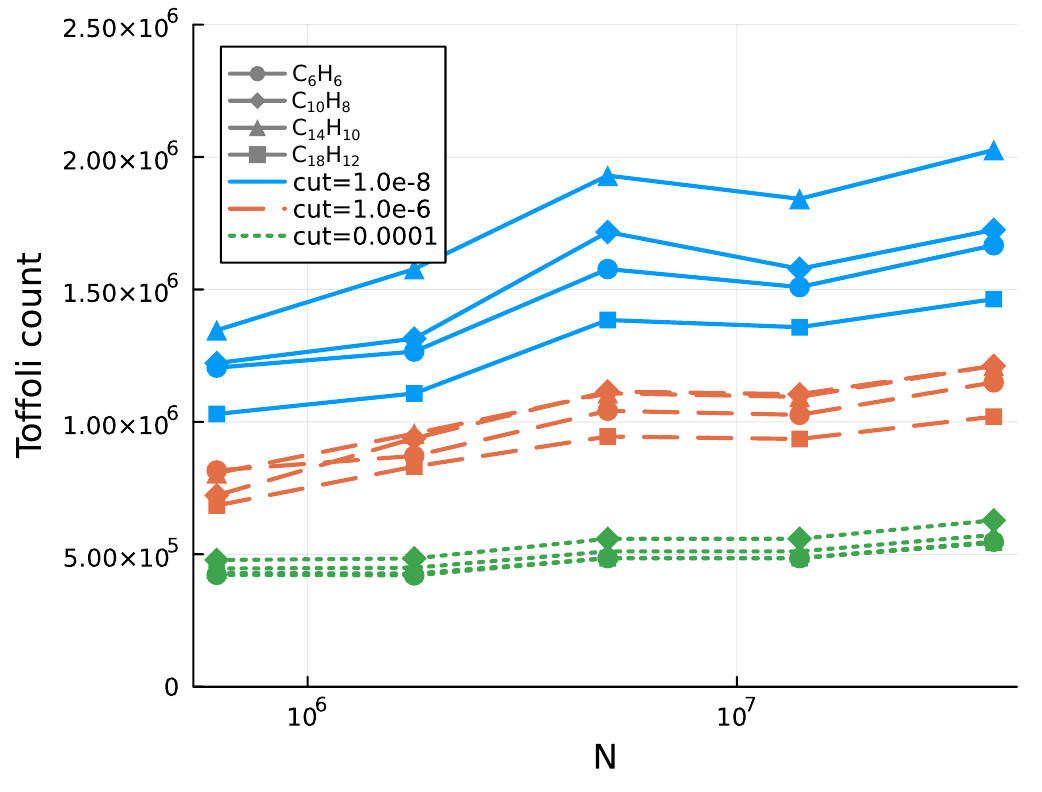}
    \caption{}
    \label{fig:mo_preparation_costs_left}
  \end{subfigure}
  \begin{subfigure}[t]{.49\textwidth}
    \centering
    \captionsetup{skip=-16pt,justification=raggedright,singlelinecheck=false}
    \includegraphics[width=.95\textwidth]{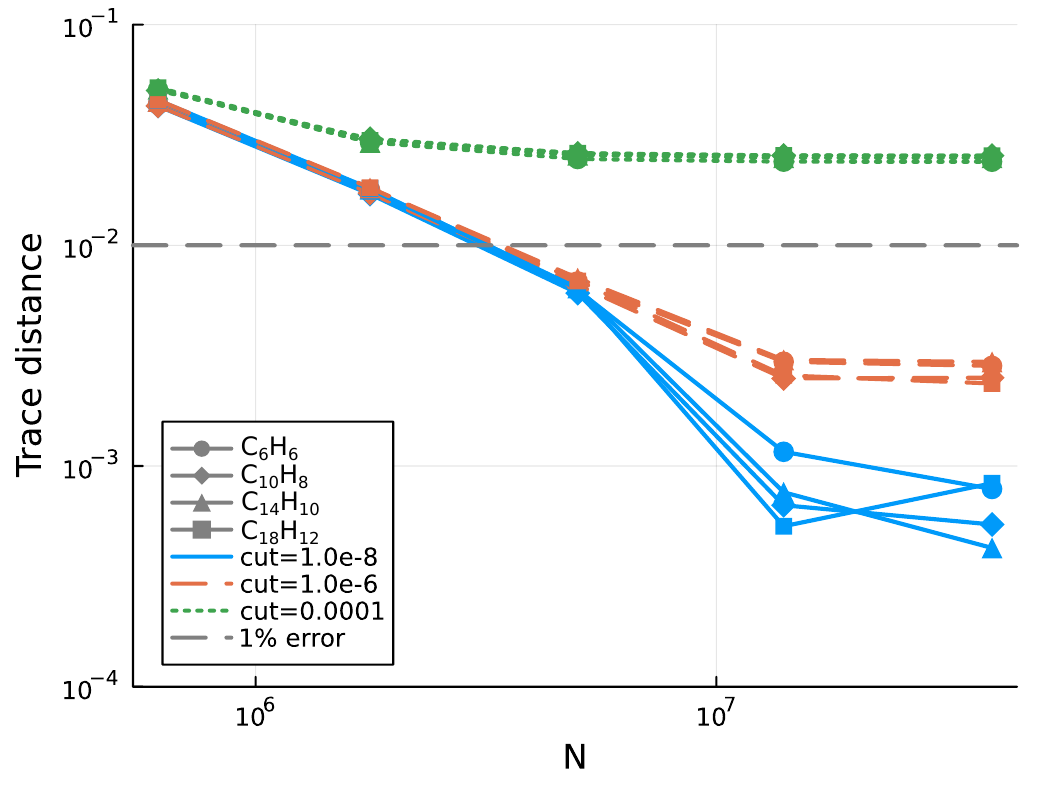}
    \caption{}
    \label{fig:mo_preparation_costs_right}
  \end{subfigure}
  \begin{subfigure}[t]{\textwidth}
    \vspace{3pt}
    \includegraphics[width=.9\textwidth]{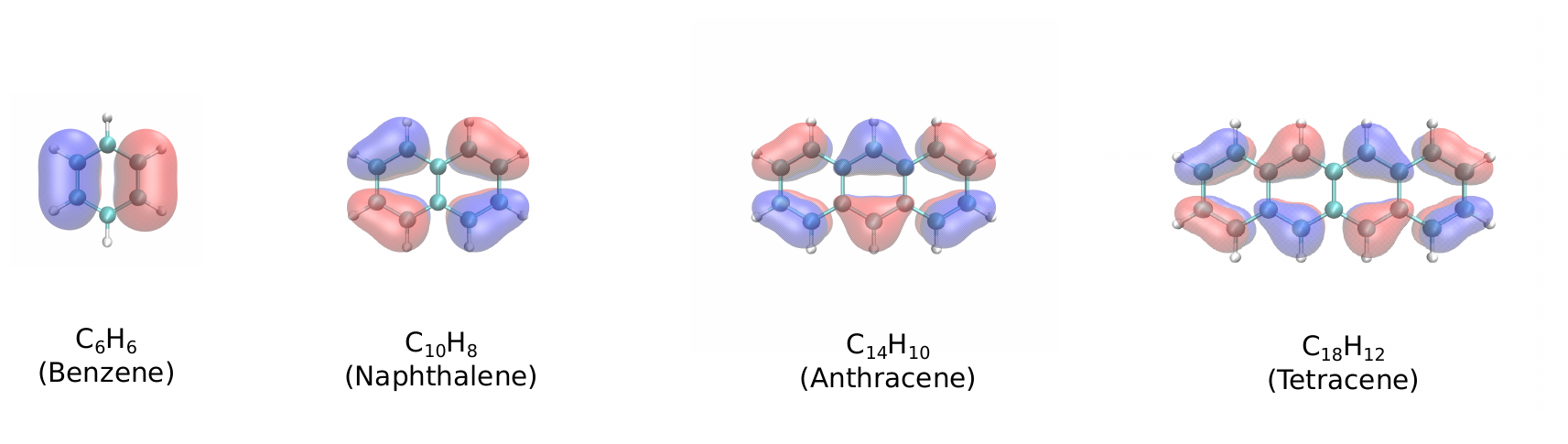}
  \end{subfigure}
  \caption{Toffoli counts (panel a) and error (in terms of the trace distance, panel b) for state preparation of the HOMOs of benzene (circles), naphthalene (diamonds), anthracene (triangles) and tetracene (squares), all in the cc-pVDZ basis set ($N_g=$ 204, 326, 448, and 570 respectively).
    The size of the computational unit cell is fixed by setting $L=60$ Bohr, and we vary the momentum cutoff so that the number of plane waves ranges from $N=6.14\times10^5$ up to $N=3.96\times10^7$.
    For all four molecules, we plot the number of Toffoli gates and the error (in terms of the trace distance) for three different levels of MPS compression, fixing the singular value cutoff to $10^{-8}$ (solid lines), $10^{-6}$ (long dashes), or $10^{-4}$ (short dashes).
  }
  \label{fig:mo_preparation_costs}
\end{figure*}

Examining the relative ordering of the lines in \Cref{fig:mo_preparation_costs_left}, we find that the resources required for state preparation do not monotonically increase as we increase the number of benzene rings.
Focusing on the solid blue lines, which represent the most accurate approximations to the molecular orbitals, we see that the expected number of Toffoli gates required for state preparation increases modestly as we increase in size from C$_6$H$_6$ (circular markers), to C$_{10}$H$_8$ (diamond markers), to C$_{14}$H$_{10}$ (triangular markers).
However, the HOMO of the largest molecule, Tetracene (square markers), is the one that requires the fewest resources for state preparation.
This again suggests that the the linear dependence on \(N_g\) from \Cref{lemma:mo_as_mps} may be too pessimistic. 
Indeed, the only consistent feature of our results with increasing \(N_g\), either by increasing the basis size or the number of atom centers, is a lack of any consistent trend in the bond dimension. 
This provides further evidence that the cost of state preparation is much more strongly dependent on the overall `shape' of the probability amplitudes described by the molecular orbital than on the number of constituent Gaussian functions.
Note that while these results are for the HOMOs, \Cref{fig:benzene_orbitals} suggests that we would expect similar results to hold for more localized orbitals as well.

In \Cref{fig:mo_preparation_costs_right}, we show the estimated error (in terms of the trace distance) of the approximated molecular orbitals with the exactly projected ones.
When a sufficiently large plane wave basis is used, we find that the error drops sharply as we decrease the singular value threshold of the matrix product state approximation.
Examining both the left and right panels together, we find that the error in the molecular orbital can be decreased by almost two orders of magnitude by using roughly three times the number of Toffoli gates for state preparation.
At the bottom of \Cref{fig:mo_preparation_costs}, we illustrate the isosurfaces of the HOMO for the four acene molecules.

\begin{figure*}
  \includegraphics[width=.49\textwidth]{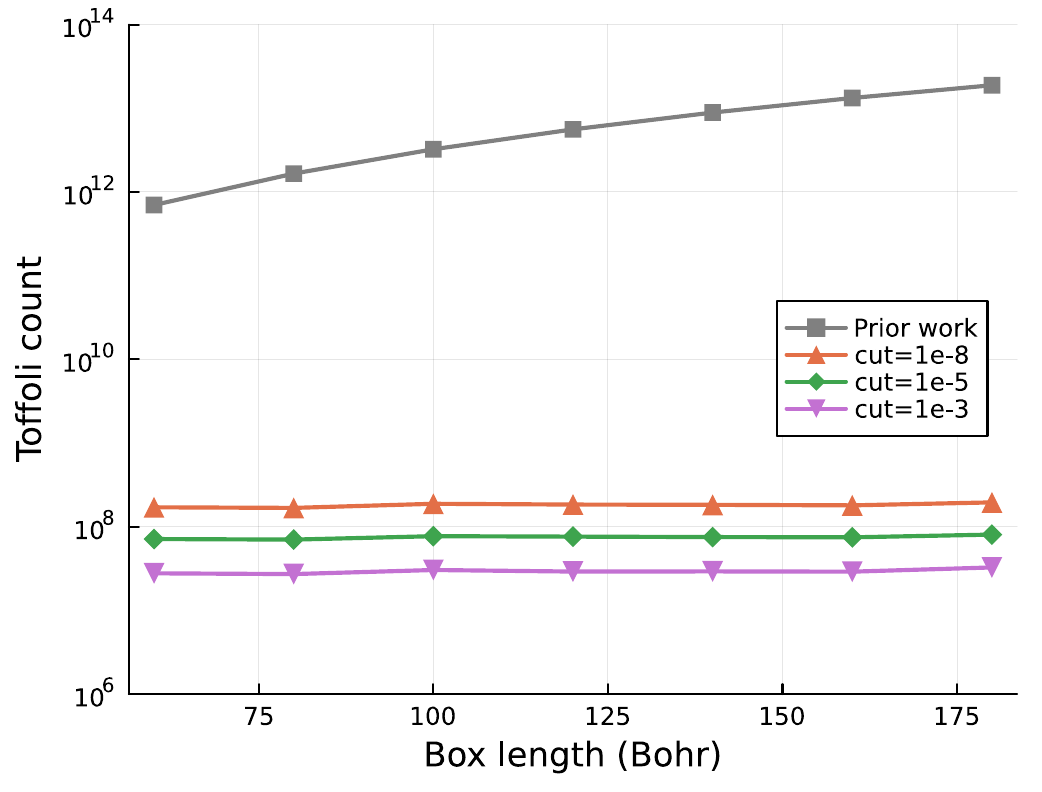}
  \includegraphics[width=.49\textwidth]{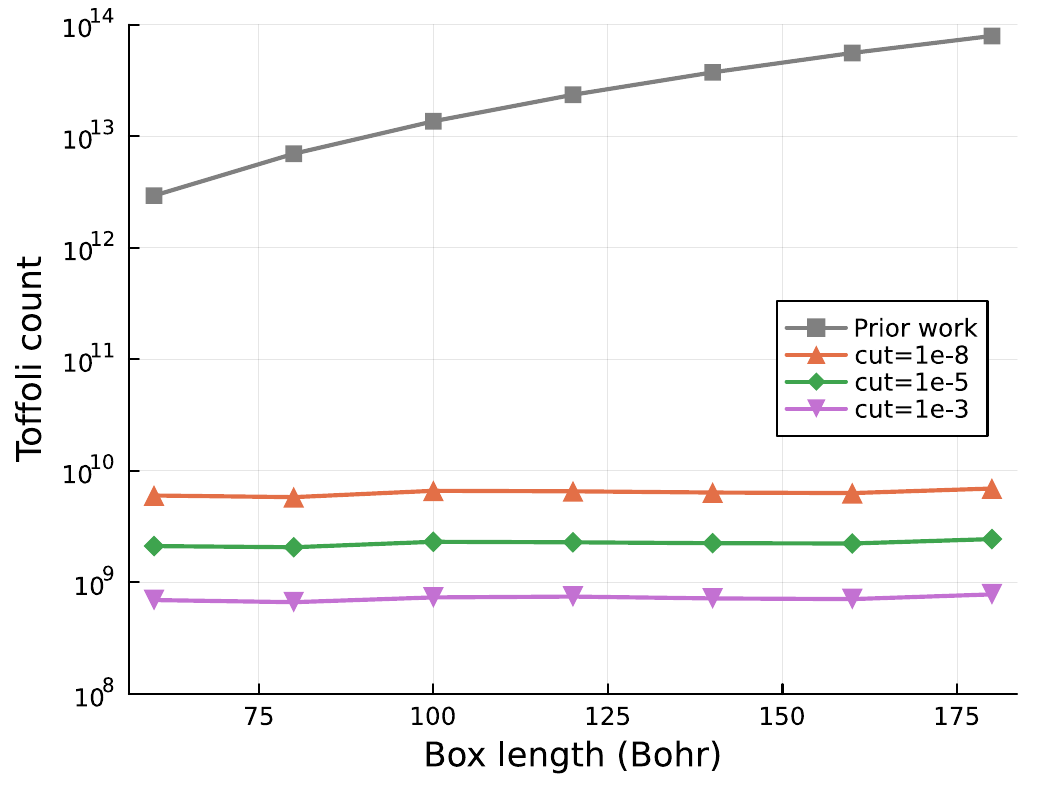}
  \caption{The number of Toffoli gates required to prepare an approximation to the Hartree-Fock state of water (left) and benzene (right).
  We obtain the Hartree-Fock state from calculations in a cc-pVDZ Gaussian basis set.
  We plot the number of Toffoli gates required as a function of the size of the computational unit cell, keeping the momentum cutoff fixed.
  We present data for three different levels of MPS compression for the molecular orbitals, with singular value cutoffs of \(10^{-3}\) (purple triangles pointing downwards), \(10^{-5}\) (green diamonds), and \(10^{-8}\) (orange triangles pointing upwards).
  For both molecules, the number of Toffoli gates required by our approach increases only marginally as the cutoff is lowered and stays nearly constant as the size of the computational unit cell is varied.
  For comparison, we also plot the number of Toffoli gates required for state preparation using the approach of Ref.~\citenum{Babbush2023-ud}.
  Compared with this prior work, our method of state preparation reduces the number of Toffoli gates required by two to five orders of magnitude for the range of parameters we consider.
  }
  \label{fig:overall_toff_count}
\end{figure*}

Finally, in \Cref{fig:overall_toff_count}, we study the overall cost of preparing the Hartree-Fock state of a water molecule (panel a) and a benzene molecule (panel b).
We compare the number of Toffoli gates required by our method with the number of Toffoli gates required using the previous state-of-the-art approach from Ref.~\citenum{Babbush2023-ud} (grey square markers).
For both molecules, we first calculate the occupied molecular orbitals in a cc-pVDZ Gaussian basis set before approximately projecting them onto a plane wave basis and factorizing them.
We plot the number of Toffoli gates as a function of the size of the computational unit cell (for a fixed momentum cutoff), varying \(L\) between \(60\) Bohr and \(180\), which corresponds to a number of plane waves that ranges between \(N = 3.19 \times 10^8\) and \(N =8.6 \times 10^9\). For comparison, the terminal hydrogens of the water molecule are separated by about 3 Bohr, while the terminal hydrogens of the benzene molecule are roughly 10 Bohr apart. 
In order to study the dependence of the overall cost on the accuracy of the individual molecular orbitals, we calculate this data for three different choices of singular value cutoff, \(10^{-3}\) (purple triangles pointing downwards), \(10^{-5}\) (green diamonds), and \(10^{-8}\) (orange triangles pointing upwards).

We provide the details of the Toffoli count analysis for the Slater determinant state preparation in \Cref{app:sd_preparation}, and for the method of Ref.~\citenum{Babbush2023-ud} in \Cref{app:naive_sd_prep}.
The specific Toffoli counts we quote come from \Cref{eq:toff_count_state_prep_mps_sd} and \Cref{eq:toff_count_standard_old_way}.
For both cases, we neglect the cost of preparing the initial antisymmetric state, which is negligible compared with the cost of implementing the other steps.

For the parameters ranges we study in \Cref{fig:overall_toff_count}, we find that our approach to state preparation requires between \(10^2\) and \(10^5\) fewer Toffoli gates than the prior state-of-the-art.
Moreover, unlike the previously proposed approach, the cost of our method is almost entirely independent of the size of the computational unit cell.
The cost of our approach depends on the singular value cutoff, which can be set to achieve a desired level of accuracy.
We find that we can reduce the singular value cutoff by five orders of magnitude at the cost of using at most an order of magnitude more Toffoli gates for state preparation.

In \Cref{tab:errors}, we present approximate upper bounds on the error in the Hartree-Fock states prepared by our method.
As in \Cref{fig:overall_toff_count}, we consider a water molecule and a benzene molecule.
The Hartree-Fock states are obtained from calculations in a cc-pVDZ basis set and we approximately project the occupied molecular orbitals onto a plane wave basis with \(L=60\) and \(N=8.6 \times 10^9\).
We consider three different levels of singular value truncation for the individual molecular orbitals, and we approximate the error in the trace distance as the sum of the errors estimated for each individual occupied molecular orbital (doubled since each orbital is occupied by two electrons), a calculation that we justify in \Cref{app:sd_preparation}.

Even though the individual molecular orbitals are generally accurate at small truncation thresholds, the errors accumulate significantly when we take into account the \(5\) doubly-occupied orbitals of the water molecule, or the \(21\) doubly-occupied orbitals of the benzene molecule.
For some practical applications where small errors in the initial state preparation are required, we may need to use a combination of smaller singular value cutoffs and higher plane wave counts (or pseudopotentials) to increase the accuracy of our approach.
As we have demonstrated in \Cref{fig:overall_toff_count}, it is possible to decrease the singular value cutoff with only a modest increase in the number of Toffoli gates required for state preparation.
Using a larger momentum cutoff is also achievable without a serious increase in cost for our state preparation method (as demonstrated in \Cref{fig:mo_preparation_costs}), although the cost of simulation itself scales less favorably with the momentum cutoff.

\begin{table}
  \centering
  \begin{tabular}{@{}llll@{}}
  \toprule
  Cutoff               & \(10^{-3}\)   & \(10^{-5}\)   & \(10^{-8}\)   \\ \midrule
  H\(_2\)O       & 0.851 & 0.125 & 0.057   \\
  C\(_6\)H\(_6\) & 3.757 & 0.510 & 0.171   \\ \bottomrule
  \end{tabular}%
  \caption{Approximate upper bounds for the error (in the trace distance) for the Hartree-Fock state preparation of water and benzene (\(L=60\), \(N=8.6 \times 10^9\). We estimate the error at three different levels of matrix product state compression, with singular value cutoffs set to \(10^{-3},\) \(10^{-5}\), and \(10^{-8}\). 
  While modest cutoffs may suffice for some applications (e.g., approximating ground state energy using phase estimation), smaller cutoffs, higher plane wave counts, or additional improvements (such as the use of pseudopotentials) may be necessary for quantitative accuracy in other cases.}
  \label{tab:errors}
\end{table}

\section{Discussion and conclusion}
\label{sec:discussion}

In this paper, we introduced a new approach for preparing the initial states required for first-quantized quantum simulations of molecular systems in a plane wave basis.
We demonstrated how to implement a single-particle basis transformation from a Gaussian-type orbital basis to a plane wave basis using a number of quantum gates that scales logarithmically with the size of the plane wave basis.
Our approach enables the approximate preparation of Hartree-Fock states using exponentially fewer quantum gates (in terms of the number of plane waves) than was possible with prior approaches.
It also makes it possible to use any efficiently preparable second-quantized wavefunction in a molecular orbital basis as an initial state for a first-quantized plane wave calculation, without incurring a cost that is linear in the number of plane waves.
This reduction is particularly significant because the state-of-the-art algorithms for time evolution in this context also have a cost that scales sublinearly with the basis set size, making state preparation a potential bottleneck in computational cost.

Our approach exploits the observation that it is possible to approximately represent molecular orbitals in a plane wave basis as matrix product states (or tensor trains) with very low bond dimension.
Our proof of this statement begins by showing that the coefficients of a primitive Gaussian basis function in a finite plane wave basis can be well-approximated by low degree polynomials.
This implies that the first-quantized wavefunctions corresponding to these primitive basis functions have efficient matrix product state descriptions.
We can therefore use standard tools for manipulating matrix product states and converting them into quantum circuits to obtain efficient circuits for preparing states that encode the molecular orbitals.
Using these circuits, it is straightforward to implement the desired change of basis.

In a series of numerical experiments on small molecular systems, we demonstrate that our approach is highly efficient in practice.
We find that the bond dimension required to represent a typical molecular orbital depends only weakly on the choice of the original Gaussian basis set, the size of the target plane wave basis set, and the accuracy parameters of the matrix product state compression.
Furthermore, the classical computational resources required by our approach are modest.
We compare the overall number of non-Clifford gates required to prepare the Hartree-Fock states of water and benzene using our approach with the number required by the prior state-of-the-art from Ref.~\citenum{Babbush2023-ud}.
Even for these small systems, we show that we can reduce the cost by several orders of magnitude for reasonable choices of parameters at the expense of introducing some error into the initial state preparation.

It is likely that the logarithmic scaling of our approach (with the size of the plane wave basis) could be obtained by other means, such as a modification of the state preparation works of Zalka in Ref.~\citenum{Zalka1998-zt} and Ward \textit{et al.} in Ref.~\citenum{Ward2009-ln} (generically known as ``Grover-Rudolph'' schemes).
However, unlike our approach, such a scheme would have high overheads stemming from the need to coherently perform a large number of integrals and would scale explicitly with the size of the underlying Gaussian basis sets.
By contrast, we have demonstrated that the overhead of our state preparation method is low in practice.
Furthermore, while the size of the original Gaussian basis set enters polynomially into our analytical upper bounds, empirically we find that the cost of our approach can be relatively independent of this quantity.
This highlights a strength of matrix product state and tensor train factorizations as a tool for quantum algorithm construction; the fact that they can be efficiently compressed numerically can allow for better-than-expected performance in certain situations.

As we have demonstrated, the flexibility and ease of use offered by matrix product state methods makes them an appealing tool for state preparation in the context of first-quantized plane wave algorithms for molecular systems.
We expect that they will also prove useful in a variety of other contexts.
Most immediately, it would be interesting to see if they can be applied to prepare initial states for periodic systems.
Although the details of our analytical results made heavy use of the properties of a Gaussian type basis set, we expect that the same intuition would apply to, e.g., maximally localized Wannier functions~\cite{Marzari2011-vj}.
One could also imagine directly solving the Hartree-Fock equations in a plane wave basis using a tensor train ansatz for the individual molecular orbitals, as Ref.~\citenum{Khoromskij2011-mn} did in a real space representation.
More generally, many quantum algorithms rely on efficiently loading or reading out classical data.
Matrix product state and tensor train factorizations of functions are a powerful and underexplored tool for these tasks.

The quantum simulation of real molecules and materials is one of the most highly anticipated applications of quantum computing, and first-quantized approaches to this problem promise to scale well in the asymptotic limit.
We hope that our results will help make the first-quantized quantum simulation of large molecular systems more practical, particularly in regimes where the state preparation costs might dominate.
We anticipate that the improved scaling of our proposed method will be especially beneficial when it is desirable to increase the size of the computational unit cell.
Our approach to state preparation and the state-of-the-art methods for Hamiltonian simulation both scale logarithmically with the basis set size in this limit, so our method allows for an exponential improvement in the end-to-end complexity of such a simulation.
This limit is especially relevant for molecular simulations, where it is often necessary to use a large computational unit cell in order to  minimize the interaction of an aperiodic system with its periodic images (which are artificially introduced due by the underlying periodicity of a plane wave basis set).

\textit{Note added:}
The application of matrix product state and tensor train techniques to various problems in quantum computing and quantum chemistry is an active area of research.
In the time between the initial release of our manuscript and its publication, a number of interesting related works have appeared.
We highlight a few of these here.
Ref.~\citenum{Rodriguez-Aldavero2024-lo} explores the use of Chebyshev interpolation for construction MPS/QTT approximations to multivariate functions in a more general context.
Ref.~\citenum{Berry2024-qe} proposed a variation of the matrix product state preparation techniques of Ref.~\citenum{Fomichev2023-vs} with significantly reduced constant factors. We do not incorporate their optimizations here, but we expect they would further reduce the costs of our approach.

\section*{Acknowledgments}
We thank Ryan Babbush, Nick Rubin, Elliot Eklund, Joonho Lee, Joshua Goings, and Subhayan Roy Moulik for helpful discussions.
O.L. and K.B.W. acknowledge support from the NSF QLCI program through grant number QMA-2016345 and from Dow.
T.F.S and K.B.W acknowledge support from a Google academic research award. T.F.S. was
supported as a Quantum Postdoctoral Fellow at the Simons Institute for the Theory of Computing, supported
by the U.S. Department of Energy, Office of Science, National Quantum Information Science Research Centers,
Quantum Systems Accelerator.

\section*{Code Availability}
Code which implements the numerical experiments of \Cref{sec:MO_MPS_numerics} is available under an open source license at \url{https://www.github.com/oskar-leimkuhler/orb2mps-fq}.
See \Cref{app:computational_methods} for more details.

% \bibliography{bill_paperpile}
%apsrev4-2.bst 2019-01-14 (MD) hand-edited version of apsrev4-1.bst
%Control: key (0)
%Control: author (8) initials jnrlst
%Control: editor formatted (1) identically to author
%Control: production of article title (0) allowed
%Control: page (0) single
%Control: year (1) truncated
%Control: production of eprint (0) enabled
%

\appendix
\widetext

\section{The size of the computational unit cell}
\label{app:box_size_considerations}

Our goal in this work is to enable the more efficient treatment of aperiodic molecular systems using quantum algorithms that are particularly efficient in a periodic plane wave basis.
We do so by mapping single-particle states in a Gaussian basis set to a plane wave basis set.
In order to simplify our analysis, we make the assumption that the Gaussian basis functions have negligible support outside of the computational unit cell.
In this appendix, we briefly explain why we believe that this assumption will be satisfied by any reasonable choice for the computational unit cell.

In practice, we expect that a computational unit cell that is significantly larger than the extent of the molecular wavefunction will be required for accurate calculations in a plane wave basis.
We expect the electron density of any bound state wavefunction to decay exponentially in the continuum limit~\cite{Katriel1980-xe}.
Furthermore, any linear combination of a fixed set of Gaussians will exhibit a Gaussian decay profile at long range.

In contrast, the strength of the Coulomb interaction decays with a power law.
Although we intend to represent an aperiodic system, by using a periodic single-particle basis we enforce periodic boundary conditions.
Because of these periodic boundary conditions, our molecular system will experience unphysical interactions with its periodic images.
Because the Coulomb interaction is fundamentally long range, these interactions can introduce complications at much larger length scales than the actual size of the system.

Naively, one could choose the size of the computational unit cell to be so large that the quantities of interest are well converged, or can be extrapolated effectively.
Ref.~\citenum{Makov1995-gk} showed that error in the ground state energy that arises from a finite cubic computational unit cell be made to decay as \(\mathcal{O}(L^{-5})\) by properly setting up the calculation.
One could combine a numerical analysis of the convergence with an extrapolation technique in order to accurately estimate the ground state energy in the \(L=\infty\) limit~\cite{Makov1995-gk, Martyna1999-av, Fusti-Molnar2002-ch}.
There is less prior work to draw on for estimating other quantities (although many static properties can be obtained by taking derivatives of the ground state energy with respect to parameters of the Hamiltonian), but one might hope that similarly quick convergence could be obtained.
Analyzing these approaches is outside of the scope of this work, but they would likely demand significantly larger values of \(L\) than we require for our assumptions to hold.

Alternatively, one could attempt to modify the construction of the first-quantized simulation algorithms for the electronic structure Hamiltonian in order to rigorously truncate the range of the Coulomb operator, following Ref.~\citenum{Fusti-Molnar2002-ch} or related works.
This possibility was alluded to in Ref.~\citenum{Su2021-uj}, but the necessary modifications have never been explored in detail.
This kind of approach relies on the assumption that the computational unit cell is roughly twice as large (in each direction) as the spatial extent of the system, so it would place a more stringent constraint on \(L\) than anything required by our technical results.

\section{Proof of \Cref{thm:main_technical_result} (\nameref*{thm:main_technical_result})}
\label{app:primitive_gaussian_mps_lemma_proof}

In this section, we prove \Cref{thm:main_technical_result}. The version of the lemma in the main text uses asymptotic notation to discuss the constraints on \(K\) and \(m\), whereas the version we state below provides more concrete inequalities.
\onedmpstheorem*

To begin, it will be useful to rewrite \(\tilde{g}_x\) in a different form from that presented in \Cref{eq:tilde_g_x_def}.
Specifically, it will be convenient to rewrite \(g_x\) in terms of Hermite-Gaussian functions, defined as
\begin{equation}
  \psi_n(x) = c_n  e^{-x^2/2} H_n(x),
  \label{eq:hermite_gaussian_def}
\end{equation}
and analytically evaluate the integrals that define \(\tilde{g}_x\).
Here \(H_n(x)\) are the so-called physicist's Hermite polynomials,
\begin{equation}
  H_n(x) = (-1)^{n} e^{x^2} \frac{d^n}{dx^n} e^{x^2},
  \label{eq:hermite_polynomial_def}
\end{equation}
and \(c_n\) are coefficients chosen so that the integral of \(|\psi_n(x)|^2\) over the whole real line is one,
\begin{equation}
  c_n = \left( 2^n n!
  \sqrt{\pi} \right)^{-1/2}.
  \label{eq:hermite_gaussian_c_def}
\end{equation}
The Hermite-Gaussian functions have a number of useful properties.
For one, they form an orthonormal basis for \(L^2(\mathbb{R})\), the set of square integrable functions on the real line.
They are also eigenfunctions of the Fourier transform.

After a change of variables to \(u = \sqrt{2 \gamma} x\), we can represent \(g_x\) as a linear combination of the Hermite-Gaussian functions for \(n \in \left[ 0..l \right]\),
\begin{equation}
  g_x(x)
  =
  g_x(\frac{u}{\sqrt{2 \gamma}})
  =
  \frac{c_{l, \gamma}}{\left( 2\gamma \right)^{l/2}} u^l e^{-u^2 / 2} = (2 \gamma)^{1/4} \sum_{n \in \left[ 0..l \right]} h_n \psi_n(u).
\end{equation}
In the implicit definition of the coefficients \(h_n\), we choose the normalization factor so that we have
\begin{align}
  \int_{x \in \mathbb{R}} |g_x(x)|^2 dx
   & =
  \frac{1}{\sqrt{2 \gamma}} \int_{u \in \mathbb{R}} \left|g_x(\frac{u}{\sqrt{2\gamma}})\right|^2 du
  \nonumber
  \\
   & =
  \int_{u \in \mathbb{R}} \left| \sum_{n \in \left[ 0..n \right]} h_n \psi_n(u)\right|^2 du
  \nonumber
  \\
   & =
  \sum_{n \in \left[ 0..n \right]} \sum_{m \in \left[ 0..n \right]} h_n^* h_m \int_{u \in \mathbb{R}} \psi_n(u)^* \psi_m(u) du
  \nonumber
  \\
   & =
  \sum_{n \in \left[ 0..n \right]} |h_n|^2.
  \label{eq:h_n_normalization}
\end{align}
Since \(g_x\) is \(L^2\)-normalized over the whole real line, the vector whose entries are \(h_n\) is also normalized (in the \(l^2\) norm).
The fact that the Hermite-Gaussian functions are eigenfunctions of the Fourier transform makes it simple to evaluate
\begin{align}
  \int_{x=-\infty}^{\infty} \phi_k^*(x) g_x(x) \phi_k(x) dx
   & =
  \frac{1}{\left(2 \gamma\right)^{1/4} } \sum_{n \in \left[ 0 .. l \right]} h_n \int_{u=-\infty}^{\infty} \phi^*_k\left(\frac{u}{\sqrt{2 \gamma}}\right)  \psi_n(u) du
  \nonumber
  \\
   & =
  \frac{2^{1/4}\sqrt{\pi}}{\gamma^{1/4}\sqrt{L}} \sum_{n \in \left[ 0 .. l \right]} \left(-i\right)^{n} h_n \psi_n\left(\frac{k}{\sqrt{2 \gamma}}\right).
\end{align}
Therefore, we can represent \(\tilde{g}_x\) as
\begin{equation}
  \tilde{g}_x(x)
  =
  \frac{2^{1/4}\sqrt{\pi}}{\widetilde{\mathcal{N}}_x\gamma^{1/4}\sqrt{L}} \sum_{k \in \mathbb{K}} \sum_{n \in \left[ 0 .. l \right]} \left(-i\right)^{n} h_n \psi_n\left(\frac{k}{\sqrt{2 \gamma}}\right) \phi_k(x).
  \label{eq:tilde_g_hermite_gaussian_def}
\end{equation}

Recall the definition \(\mathbb{K}_{cut} = \left\{ k \in \mathbb{K} : k \leq K \right\}\), and let
\begin{equation}
  g_{trunc}(x)
  =
  \frac{2^{1/4}\sqrt{\pi}}{\widetilde{\mathcal{N}}_x\mathcal{N}_t\gamma^{1/4}\sqrt{L}} \sum_{k \in \mathbb{K}_{cut}} \sum_{n \in \left[ 0 .. l \right]} \left(-i\right)^{n} h_n \psi_n\left(\frac{k}{\sqrt{2 \gamma}}\right) \phi_k(x),
  \label{eq:g_trunc_def}
\end{equation}
where \(\mathcal{N}_t \in \mathbb{R}_{> 0}\) is implicitly defined by demanding that \(\norm{g_{trunc}}=1\).\footnote{
Recall that we use the notation \(\norm{f}\) to denote \(\sqrt{\inp{f}{f}}\), where \(\inp{g}{f}\) denotes \(\int_{x=-L/2}^{L/2} g^*(x) f(x) dx\).}
Informally, we claim that the trace distance between \(\tilde{g}_x\) and \(g_{trunc}\) is negligible for a sufficiently large cutoff \(K\) because \(g_x\) does not have significant support on plane waves with very high frequencies.
This is captured formally in \Cref{lemma:momentum_cutoff}, which we prove in \Cref{sec:momentum_cutoff_proof}.
\begin{restatable}{lemma}{cutofflemma}
  Consider \(\tilde{g}_x\) as defined in \Cref{eq:tilde_g_hermite_gaussian_def} and \(g_{trunc}\) as defined in \Cref{eq:g_trunc_def}.

  For any \(K \in \mathbb{R}_{>0}\) and \(\epsilon_t \in \left( 0, 1 \right]\) that satisfy
  \begin{equation}
    K
    \geq
    2\sqrt{2 \gamma}\sqrt{
      2 \log \left( \epsilon_t^{-1} \right)
      + \log \left( 20 \right)
      + \log \left( 1 + \frac{2 \sqrt{\pi}}{L\sqrt{\gamma}} \right)
      - \log \left( \widetilde{\mathcal{N}}_x^2 \right)
      + l \log(4 l)
    }
    .
    \label{K_condition}
  \end{equation}
  then
  \begin{equation}
    D(\tilde{g}_x, g_{trunc}) = \sqrt{1 - |\inp{\tilde{g}_x}{g_{trunc}}|^2} \leq \epsilon_{t}
    \label{eq:g_whole_g_trunc_distance}
  \end{equation}
  and
  \begin{equation}
    1 - \epsilon_t \leq \mathcal{N}_t \leq 1.
  \end{equation}
  \label{lemma:momentum_cutoff}
\end{restatable}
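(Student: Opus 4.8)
The plan is to treat \(\tilde g_x\) and \(g_{trunc}\) as the full and high-momentum-truncated plane-wave expansions of the same normalized function, so that both conclusions collapse to a single tail-mass estimate. Since \(\{\phi_k\}_{k\in\mathbb{K}}\) is an orthonormal basis of \(L^2[-L/2,L/2]\), write \(\tilde g_x=\sum_{k\in\mathbb{K}}a_k\phi_k\) with, by \eqref{eq:tilde_g_hermite_gaussian_def}, \(a_k=\frac{2^{1/4}\sqrt\pi}{\widetilde{\mathcal{N}}_x\gamma^{1/4}\sqrt L}\,\Phi\!\big(\tfrac{k}{\sqrt{2\gamma}}\big)\), where \(\Phi(q):=\sum_{n=0}^{l}(-i)^n h_n\psi_n(q)=P(q)e^{-q^2/2}\) for a polynomial \(P\) of degree \(\le l\). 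Normalization of \(\tilde g_x\) forces \(\sum_{k\in\mathbb{K}}|a_k|^2=1\), hence the identity \(\sum_{k\in\mathbb{K}}|\Phi(k/\sqrt{2\gamma})|^2=\widetilde{\mathcal{N}}_x^2\sqrt\gamma L/(\sqrt2\,\pi)\). The definition \eqref{eq:g_trunc_def} reads \(g_{trunc}=\mathcal{N}_t^{-1}\sum_{k\in\mathbb{K}_{cut}}a_k\phi_k\) with \(\mathcal{N}_t^2=\sum_{k\in\mathbb{K}_{cut}}|a_k|^2\), and a one-line computation using the orthonormality of the \(\phi_k\) gives \(\inp{\tilde g_x}{g_{trunc}}=\mathcal{N}_t\), so \(D(\tilde g_x,g_{trunc})^2=1-\mathcal{N}_t^2=\delta\) where \(\delta:=\sum_{k\in\mathbb{K}\setminus\mathbb{K}_{cut}}|a_k|^2\), and \(\mathcal{N}_t=\sqrt{1-\delta}\in[1-\delta,1]\subseteq[1-\epsilon_t,1]\) whenever \(\delta\le\epsilon_t^2\). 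Combining, \(\delta=\big(\sum_{k\in\mathbb{K},\,|k|>K}|\Phi(k/\sqrt{2\gamma})|^2\big)\big/\big(\sum_{k\in\mathbb{K}}|\Phi(k/\sqrt{2\gamma})|^2\big)\), so everything reduces to bounding this ratio by \(\epsilon_t^2\) under the stated lower bound on \(K\).

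For the denominator, the hypothesis \(\widetilde{\mathcal{N}}_x\ge 2/3\) and the identity above give \(\sum_{k\in\mathbb{K}}|\Phi(k/\sqrt{2\gamma})|^2\ge \tfrac{2}{3}\sqrt\gamma L/(\sqrt2\,\pi)\). For the numerator I would first bound \(|\Phi(q)|^2\) pointwise: by Cauchy--Schwarz and \(\sum_n|h_n|^2=1\) (see \eqref{eq:h_n_normalization}), \(|\Phi(q)|^2\le\sum_{n=0}^{l}\psi_n(q)^2\), and using \(c_n^2=(2^n n!\sqrt\pi)^{-1}\) together with the elementary bound \(|H_n(q)|\le(2|q|)^n\) — valid once \(|q|\) exceeds a threshold that the hypothesis on \(K\) places well below the cutoff — and Stirling, one obtains an envelope \(\sum_{n=0}^l\psi_n(q)^2\le \tfrac{l+1}{\sqrt\pi}\big(\tfrac{2eq^2}{l}\big)^l e^{-q^2}\), monotonically decreasing in \(|q|\) for \(|q|>\sqrt l\) (and the hypothesis forces \(K^2/(2\gamma)\gg l\)). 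I would then sum this envelope over \(\{k\in\mathbb{K}:|k|>K\}\), whose spacing in the variable \(q=k/\sqrt{2\gamma}\) is \(\tfrac{2\pi}{L\sqrt{2\gamma}}\), by comparison with an integral plus a single boundary term; the integral contributes a factor \(\tfrac{L\sqrt{2\gamma}}{2\pi}\) that cancels the \(1/(\sqrt\gamma L)\) from the denominator, while the boundary term leaves a residual prefactor \(\sim 1/(\sqrt\gamma L)\) responsible for the \(\log(1+\tfrac{2\sqrt\pi}{L\sqrt\gamma})\) appearing in the hypothesis. Finally \(\int_{Q}^\infty q^{2l}e^{-q^2}\,dq\le Q^{2l-1}e^{-Q^2}\) by repeated integration by parts, valid because the hypothesis gives \(Q^2\ge 2(2l-1)\) with \(Q=K/\sqrt{2\gamma}\).

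Assembling the pieces yields a bound of the shape \(\delta\le C\big(1+\tfrac{2\sqrt\pi}{L\sqrt\gamma}\big)\,\widetilde{\mathcal{N}}_x^{-2}\big(\tfrac{2eQ^2}{l}\big)^l Q^{2l-1}e^{-Q^2}\) for an explicit numerical constant \(C\). Taking logarithms, \(\delta\le\epsilon_t^2\) follows once \(Q^2\) exceeds \(2\log\epsilon_t^{-1}-\log\widetilde{\mathcal{N}}_x^2+\log(1+\tfrac{2\sqrt\pi}{L\sqrt\gamma})+\log C+2l\log Q-l\log l+l\log(2e)\); since the hypothesis is equivalent to \(Q^2=K^2/(2\gamma)\ge 4\big(2\log\epsilon_t^{-1}+\log 20+\log(1+\tfrac{2\sqrt\pi}{L\sqrt\gamma})-\log\widetilde{\mathcal{N}}_x^2+l\log(4l)\big)\), there is roughly a factor of four of headroom, which I would spend absorbing the polynomial correction (\(2l\log Q\le\tfrac12 Q^2\) in the relevant range), the \(l\log(2e)\) and \(l\log l\) terms against \(l\log(4l)\), and \(\log C\) against \(\log 20\). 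The estimate \(1-\epsilon_t\le\mathcal{N}_t\le 1\) then follows immediately from \(\delta\le\epsilon_t^2\le\epsilon_t\).

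The main obstacle I anticipate is bookkeeping rather than conceptual content: pinning down the exact threshold in \(|q|\) beyond which both \(|H_n(q)|\le(2|q|)^n\) and the monotonicity of the envelope hold, confirming that this threshold and the integration-by-parts condition \(Q^2\ge 2(2l-1)\) are implied by the hypothesis on \(K\), and — most delicate — carrying the lattice-to-integral discretization error through cleanly enough that the \(L\)- and \(\gamma\)-dependence collapses into the single tidy factor \(1+\tfrac{2\sqrt\pi}{L\sqrt\gamma}\) rather than something messier, while checking that the factor-of-four slack in the \(K\)-hypothesis genuinely swallows all the \(l\log l\)-type subexponential corrections.
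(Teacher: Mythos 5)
Your proposal is correct and follows essentially the same route as the paper's proof: reduce both conclusions to the high-momentum tail mass $\sum_{|k|>K}|\inp{\tilde{g}_x}{\phi_k}|^2$ via orthonormality, bound the Hermite--Gaussian factors pointwise by $c_n(2|q|)^n e^{-q^2/2}$ for $|q|\geq\sqrt{2n}$ (guaranteed by the hypothesis on $K$), compare the lattice sum to an integral plus a boundary term (which is exactly where the $1+\tfrac{2\sqrt{\pi}}{L\sqrt{\gamma}}$ factor arises), and absorb the residual $l\log Q$-type prefactors into the slack of the $K$-condition via $\tfrac12 x^2 - 2\log x\geq 0$. The only cosmetic differences are that you use Cauchy--Schwarz to get a diagonal sum $\sum_n\psi_n^2$ where the paper uses $|h_n|\leq 1$ and a double sum over $n,m$ (whence its constant $20$), and you invoke the $\widetilde{\mathcal{N}}_x\geq 2/3$ assumption, which belongs to the parent lemma rather than this one and is not needed here since the $K$-condition already carries $-\log\widetilde{\mathcal{N}}_x^2$.
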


The trace distance between \(\tilde{g}_x\) and \(g_{trunc}\) is one of two components of our overall error.
In order to bound the overall error by \(\epsilon\), we seek to bound this component by \(\epsilon/2\) by setting \(\epsilon_t = \epsilon/2\).
The resulting condition on \(K\) is that
\begin{equation}
  K
  \geq
  2\sqrt{2 \gamma}\sqrt{
    2 \log \left( 2 \epsilon^{-1} \right)
    + \log \left( 20 \right)
    + \log \left( 1 + \frac{2 \sqrt{\pi}}{L\sqrt{\gamma}} \right)
    - \log \left( \widetilde{\mathcal{N}}_x^2 \right)
    + l \log(4 l)
  }
  .
  \label{eq:K_epsilon_inequality}
\end{equation}
Observe that this inequality is implied by the combination of our assumption on \(K\) from the statement of the lemma together with our assumption that \(\widetilde{\mathcal{N}}_x \geq 2/3\).
Therefore, we have that
\begin{equation}
  D(\tilde{g}_x, g_{trunc}) \leq \frac{\epsilon}{2}
  \label{eq:g_whole_trunc_bound_proof}
\end{equation}
and
\begin{equation}
  1 - \frac{\epsilon}{2} \leq \mathcal{N}_t \leq 1.
\end{equation}

Our next goal is to approximate \(g_{trunc}\) by a normalized function of the form
\begin{equation}
  f_{x}(x) = \sum_{k \in \mathbb{K}_{cut}} p(k) \phi_k(x),
\end{equation}
where the function \(p\) is a polynomial of some reasonable degree.
Once this is accomplished, we will be able to use existing machinery for encoding tensor train representations of polynomials to complete the proof.
To this end, we prove the following lemma in \Cref{sec:polynomial_approximation_proof},
\begin{restatable}{lemma}{polynomiallemma}
  Consider \(g_{trunc}\) defined as in \Cref{eq:g_trunc_def}.
  For an arbitrary \(\epsilon_{p} \in \left( 0, 1 \right)\), \(m \in \mathbb{Z}\) that satisfy the following conditions
  \begin{align}
    m \geq &
    2,
    \nonumber
    \\
    m \geq &
    \frac{eK}{2\sqrt{\gamma}} \left( \frac{eK}{2\sqrt{\gamma}} + \sqrt{2l + 1} \right),
    \label{eq:m_conditions}
    \\
    m \geq &
    \frac{ 4 \log \left( \epsilon_{p}^{-1}\right) - 2\log \left( \widetilde{\mathcal{N}}_x \right) - 2\log \left(\mathcal{N}_t\right) + \log(\frac{K}{\sqrt{\gamma}} + \frac{\pi}{L \sqrt{\gamma}}) + \log \left( l + 1 \right)}{\log 2} + 4.5,
    \nonumber
  \end{align}
  there exists a degree \(m-1\) polynomial \(p\) such that the following holds:

  Let
  \begin{equation}
    f_{x}(x) = \sum_{k \in \mathbb{K}_{cut}} p(k) \phi_k(x).
  \end{equation}
  Then
  \begin{equation}
    \norm{f_{x}}^2 = 1,
  \end{equation}
  and
  \begin{equation}
    D(g_{trunc}, f_{x}) \leq \epsilon_p.
  \end{equation}
  \label{lemma:polynomial_approximation}
\end{restatable}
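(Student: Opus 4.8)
The plan is to exploit the Hermite--Gaussian form of $g_{trunc}$ derived in \Cref{eq:g_trunc_def} and reduce the problem to approximating a single Gaussian factor by a truncated Taylor polynomial. Writing $u = k/\sqrt{2\gamma}$ and using $\psi_n(u) = c_n e^{-u^2/2} H_n(u)$, the coefficient of $\phi_k$ in $g_{trunc}$ is a fixed constant times $e^{-u^2/2} P(u)$, where $P(u) = \sum_{n\in[0..l]}(-i)^n h_n c_n H_n(u)$ is a polynomial of degree at most $l$ and $u^2/2 = k^2/(4\gamma)$. Hence $g_{trunc}$ has plane-wave coefficients of the form $q(k) = C\, e^{-k^2/(4\gamma)} P(k/\sqrt{2\gamma})$, and the only obstruction to $q$ being a polynomial is the Gaussian.

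First I would approximate $e^{-k^2/(4\gamma)}$ on $\abs{k}\le K$ by the partial sum $T_J(k) = \sum_{j=0}^{J-1} \frac{(-1)^j}{j!}\left(\frac{k^2}{4\gamma}\right)^j$ of its Taylor series. Taylor's theorem gives $\abs{e^{-t} - \sum_{j<J}(-t)^j/j!} \le t^J/J!$ for $t\ge 0$, so on the relevant interval the pointwise error is at most $(K^2/(4\gamma))^J/J!$. The bound $J! \ge (J/e)^J$ shows that once $J$ exceeds a constant multiple of $K^2/\gamma$ --- this is where the condition $m \ge \frac{eK}{2\sqrt\gamma}\left(\frac{eK}{2\sqrt\gamma}+\sqrt{2l+1}\right)$ enters, guaranteeing the admissible $J\approx(m-1-l)/2$ sits past the ``turning point'' of the incomplete exponential --- the remainder decays by at least a constant factor per unit increase of $J$, and a further $\frac{4\log(\epsilon_p^{-1})+\ldots}{\log 2}$ units of $J$ push it below the target. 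I then set $\hat p(k) = C\, T_J(k/\sqrt{2\gamma})\, P(k/\sqrt{2\gamma})$, a polynomial in $k$ of degree $2(J-1)+l \le m-1$.

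Next I would control the error in norm. Since the $\phi_k$ with $k\in\mathbb{K}_{cut}$ are orthonormal over $[-L/2,L/2]$, we have $\norm{g_{trunc} - \hat f}^2 = \sum_{k\in\mathbb{K}_{cut}} \abs{q(k)-\hat p(k)}^2 \le \abs{\mathbb{K}_{cut}} \max_{k} \abs{q(k)-\hat p(k)}^2$ with $\hat f = \sum_k \hat p(k)\phi_k$. The pointwise difference is $\abs{C}\,\abs{P(u)}$ times the Taylor remainder; I would bound $\abs{P(u)} \le \big(\sum_n \abs{h_n}^2\big)^{1/2}\big(\sum_n c_n^2 H_n(u)^2\big)^{1/2} = \big(\sum_{n\le l}\abs{\psi_n(u)}^2\big)^{1/2} e^{u^2/2}$ by Cauchy--Schwarz together with $\sum_n \abs{h_n}^2 = 1$ from \Cref{eq:h_n_normalization} and the uniform bound on Hermite functions, producing a factor $\sqrt{l+1}\,e^{K^2/(4\gamma)}$. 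Crucially, $\abs{C}$ carries a $1/\sqrt L$ that cancels the $\sqrt{\abs{\mathbb{K}_{cut}}}\sim\sqrt{KL}$ from the grid count, which is why $L$ survives only inside the harmless $\pi/(L\sqrt\gamma)$ correction in the statement, and the explosive $e^{K^2/(4\gamma)}$ is beaten by the super-exponentially small Taylor remainder precisely because $J$ was chosen past the turning point. Bookkeeping the constants $\widetilde{\mathcal N}_x$, $\mathcal N_t$, $l+1$ and the grid factor reproduces the terms in the third bound on $m$ in \Cref{lemma:polynomial_approximation}.

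Finally I would pass from $\hat f$ to the normalized $f_x := \hat f / \norm{\hat f}$, absorbing the rescaling into $p$. A short computation shows that if $\norm{g_{trunc}-\hat f}\le \delta$ with $\norm{g_{trunc}} = 1$, then $\abs{\inp{g_{trunc}}{\hat f}} \ge 1-\delta$ and $\norm{\hat f} \le 1+\delta$, so $D(g_{trunc}, f_x) = \sqrt{1 - \abs{\inp{g_{trunc}}{\hat f}}^2/\norm{\hat f}^2} \le 2\sqrt\delta$; demanding $\delta \le \epsilon_p^2/4$ and recalling $m \gtrsim 2J \gtrsim 2\log(\delta^{-1})/\log 2$ accounts for the $4\log(\epsilon_p^{-1})$ appearing in the statement. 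The main obstacle is the second step: establishing, with explicit constants, that polynomial degree $\Theta(K^2/\gamma + l)$ suffices to approximate a Gaussian on an interval of half-width $\sim K/\sqrt\gamma$ that grows with the momentum cutoff, despite the $e^{K^2/(4\gamma)}$ growth of $P$ --- this is the one place where the interplay between the Stirling estimate, the turning point of the incomplete-exponential series, and the uniform Hermite-function bound must be handled sharply rather than crudely.
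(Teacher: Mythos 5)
Your decomposition is genuinely different from the paper's at the key step: you factor the plane-wave coefficient of \(g_{trunc}\) as a single common Gaussian \(e^{-k^2/(4\gamma)}\) times an exact degree-\(l\) polynomial and approximate only the Gaussian by a Taylor partial sum, whereas the paper approximates each Hermite--Gaussian \(\psi_n(k/\sqrt{2\gamma})\) separately by interpolation at Chebyshev nodes (\Cref{lemma:chebyshev_approximation}), bounding \(\abs{\psi_n^{(m)}}\) via the derivative recursion and Cram\'er's inequality. Your surrounding scaffolding --- pointwise bound, conversion to a \(2\)-norm through the \(\mathcal{O}(KL/\pi)\) grid count with the \(1/\sqrt{L}\) cancellation, and the \(D \leq 2\sqrt{\delta}\) normalization step --- matches the paper's proof (which packages the last step as \Cref{lemma:simple_fidelity_bound}) essentially line for line.

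The gap is in the Taylor step, and it is quantitative but real: your construction does not establish the lemma with its stated constants. Write \(t = K^2/(4\gamma)\). Your error is governed by \(e^{t}\,t^{J}/J!\), the \(e^{t}\) coming from the Cauchy--Schwarz/Cram\'er bound \(\abs{P(u)} \leq \sqrt{l+1}\,\pi^{-1/4}e^{u^2/2}\), and the lemma's second condition only guarantees \(J \gtrsim e^2 t/2\), at which point \(e^{t}(et/J)^{J} \approx e^{-0.13t} = 2^{-0.05 J}\): a positive margin, but nowhere near the \(2^{-m/2}\) against which the third condition is calibrated. Because the three conditions are a conjunction, \(m\) need only be the \emph{maximum} of the three right-hand sides, so your construction must already achieve error \(\sim 2^{-J}\) at the threshold \(J = e^{2}t/2\); each additional unit of \(J\) beyond that threshold buys only a factor \(2/e^{2}\), so the required excess must be \emph{added} to the quadratic threshold, which the stated conditions do not provide. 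Concretely, for \(l=0\), \(t=100\) (so conditions force \(m = 767\), \(J = 384\)) and \(\epsilon_p = e^{-20}\approx 2\times 10^{-9}\), the third condition is not binding, your remainder bound evaluates to roughly \(e^{-36}\), and after the prefactor and the \(2\sqrt{\delta}\) step you certify only \(D \lesssim 5\times 10^{-8} > \epsilon_p\). The paper escapes this because the interpolation error \(\frac{(b-a)^{m}}{2^{2m-1}m!}\max\abs{f^{(m)}}\) in \Cref{lemma:standard_chebyshev_approximation_citation} carries the extra \(2^{-2m}\), so once \(m\) clears the quadratic threshold the error is a full \((1/2)^{m/2}\) and the degree conditions decouple. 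To repair your route, either enlarge the quadratic threshold (roughly \(m \gtrsim 3.3\,K^{2}/\gamma\)) \emph{and} sum rather than max the conditions --- proving a weaker lemma --- or keep your (appealing) idea of approximating only the single Gaussian factor but replace the Taylor partial sum of \(e^{-s}\) on \([0,t]\) by its Chebyshev interpolant, which restores the \(2^{-\Theta(J)}\) decay at the threshold.
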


Recall the condition on \(K\) from the statement of the lemma,
\begin{equation}
  K
  \geq
  2\sqrt{2 \gamma}\sqrt{
    2 \log \left( 2\epsilon^{-1} \right)
    + \log \left( 45 \right)
    + \log \left( 1 + \frac{2 \sqrt{\pi}}{L\sqrt{\gamma}} \right)
    + l \log(4 l)
  }
  \label{eq:K_condition_recap}
  .
\end{equation}
We can apply \Cref{lemma:polynomial_approximation}, setting \(\epsilon_{p} = \epsilon/2\) and use this inequality to simplify the conditions on \(m\).
By inspection, \(\frac{eK}{2\sqrt{\gamma}} \gg 1\).
Therefore, the first condition is redundant because the second condition already implies that \(m\geq2\).
Examining the second condition, we can see from \Cref{eq:K_condition_recap} that the inequality \(\frac{eK}{2 \sqrt{\gamma}} > \sqrt{2l + 1}\) is manifestly true, so for our purposes we can simplify the second condition to the more stringent requirement that
\begin{equation}
  m \geq \frac{e^2 K^2}{2 \gamma}.
  \label{eq:m_condition_recapped}
\end{equation}
Now we need to show that this inequality implies the third condition as well.
Rewriting \Cref{eq:m_condition_recapped} as the sum of two equal terms and substituting in the smallest possible value of \(K\) from \Cref{eq:K_condition_recap}, we see that \Cref{eq:m_condition_recapped} implies that
\begin{equation}
  m \geq 2e^2 \left(   2 \log \left( 2\epsilon^{-1} \right)
  + \log \left( 45 \right)
  + \log \left( 1 + \frac{2 \sqrt{\pi}}{L\sqrt{\gamma}} \right)
  + l \log(4 l) \right) + \frac{e^2 K^2}{4 \gamma}.
  \label{eq:m_second_condition_explicit}
\end{equation}

Now we would like to show that any \(m\) that satisfies \Cref{eq:m_second_condition_explicit} also satisfies the third condition.
Taking \(\epsilon_p\) to be \(\epsilon/2\) and making use of the assumption that \(\widetilde{\mathcal{N}}_x \geq 2/3\), the fact that \(\mathcal{N}_t \geq 1 - \epsilon/2\), and the assumption that \(0 < \epsilon < 1\), we can write a more stringent version of the third condition as 
\begin{equation}
    m \geq 
    \frac{ 4 \log \left( 2\epsilon^{-1}\right) + 2\log \left( 3/2 \right) + 2\log \left(2\right) + \log(\frac{K}{\sqrt{\gamma}} + \frac{\pi}{L \sqrt{\gamma}}) + \log \left( l + 1 \right)}{\log 2} + 4.5.
\end{equation}
By inspection of \Cref{eq:K_condition_recap}, \(\frac{K}{\sqrt{\gamma}} > 4\), so we can use the fact that \(ab > a + b\) for \(a, b > 2\) to establish that 
\begin{equation}
\log(\frac{K}{\sqrt{\gamma}} + \frac{\pi}{L \sqrt{\gamma}}) = 
\log\left( \left(\frac{K}{\sqrt{\gamma}} - 2 \right) + \left( 2 + \frac{\pi}{L \sqrt{\gamma}} \right)\right) <
\log\left( \frac{K}{\sqrt{\gamma}} - 2 \right) +  \log \left( 2 + \frac{\pi}{L \sqrt{\gamma}} \right).
\end{equation}
Therefore, with a bit of algebraic manipulation, we can write an even more stringent version of the third condition as
\begin{equation}
    m \geq 
    \frac{ 4 \log \left( 2\epsilon^{-1}\right) + 2\log \left( 3\right) + \log\left( \frac{K}{\sqrt{\gamma}} - 2 \right) +  \log \left( 1 + \frac{\pi}{2 L \sqrt{\gamma}} \right) + \log \left( l + 1 \right)}{\log 2} + 5.5.
    \label{eq:third_m_condition_tweaked}
\end{equation}
By inspection, we have the following inequalities,
\begin{align}
  4 e^2 \log(2 \epsilon^{-1}) & > \frac{4 \log(2 \epsilon^{-1})}{\log(2)}
  \nonumber \\
  2e^2 \log(45) & > \frac{2 \log(3)}{\log (2)}  + 5.5
  \nonumber \\
  2e^2 \log \left( 1 + \frac{2 \sqrt{\pi}}{L \sqrt{\gamma}} \right) &> \frac{\log\left( 1 + \frac{\pi}{2 L \sqrt{\gamma}} \right)}{\log \left( 2 \right)}
  \nonumber \\
  2e^2 l \log(4l) \geq \frac{\log(l + 1)}{\log(2)}
  \nonumber \\
  \frac{e^2 K^2}{4 \gamma} &> \frac{\log \left( \frac{K}{\sqrt{\gamma}} - 2\right)}{\log(2)},
\end{align}
with the last inequality following from the fact that \(\frac{K}{\sqrt{\gamma}}\) is large.
Therefore, we see that \Cref{eq:m_second_condition_explicit} implies that \Cref{eq:third_m_condition_tweaked} is satisfied.

We have therefore shown that any \(m\) satisfying \Cref{eq:m_condition_recapped} satisfies all three requirements from the statement of \Cref{lemma:polynomial_approximation} (with \(\epsilon_p\) set to \(\epsilon/2\)).
By the assumption in the statement of \Cref{lemma:mo_as_mps}, \(m\) satisfies \Cref{eq:m_condition_recapped}.
Thus, \Cref{lemma:polynomial_approximation} guarantees that there exists a degree \(m-1\) polynomial \(p\) such that
\begin{equation}
  f_{x}(x) = \sum_{k \in \mathbb{K}_{cut}} p(k) \phi_k(x)
\end{equation}
satisfies
\begin{equation}
  \norm{f_{x}}^2 = 1,
\end{equation}
and
\begin{equation}
  D(g_{trunc}, f_{x}) \leq \epsilon/2.
  \label{eq:g_trunc_poly_bound_proof}
\end{equation}
Applying the triangle inequality to \Cref{eq:g_whole_trunc_bound_proof} and \Cref{eq:g_trunc_poly_bound_proof}, we have that
\begin{equation}
  D(\tilde{g}_{x}, f_{x}) \leq \epsilon.
\end{equation}

To complete our proof, we will use the following result, which we recall and rephrase from Ref.~\citenum{Grasedyck2010-pn}.
\begin{proposition}[Adapted from Corollary 28 of Ref.~\citenum{Grasedyck2010-pn}]
  Let \(p: \mathbb{R} \rightarrow \mathbb{C}\) be a polynomial of degree at most \(d\).\footnote{
    \scriptsize
    Note that the statement of Corollary 28 and the related theorems in Ref.~\citenum{Grasedyck2010-pn} are made in terms of polynomials with real coefficients and vectors with real entries, but we claim that the same arguments hold for the complex case.
    If we were to hold ourselves strictly to the text of the corollary as proved, we could obtain a similar result with a doubled bond dimension by writing \(p(x) = a(x) + i b(x)\) for two real-valued polynomials and adding the resulting tensor trains.
  }
  For an arbitrary \(a, b \in \mathbb{R}\) with \(a<b\), \(N \in \mathbb{Z}\) with \(N > 1\), define the vector \(x\) by
  \begin{equation}
    x_i = a + \frac{b-a}{N-1}i, \;\;\; i \in \left[ 0..N-1 \right].
  \end{equation}
  For any \(n \in \mathbb{Z}\) such that \(n \geq \log_2 N\), let \(T\) be the rank-\(n\) tensorization of \(p\) over the grid specified by \(x\), as defined in \Cref{def:tensorization}.
  In other words, let \(T\) be defined by
  \begin{equation}
    T^{s_1s_2\cdots s_n} =
    \begin{cases}
      p(x_i) & \text{ for } i < N
      \\
      0      & \text{ for } i \geq N
    \end{cases}
    ,
    \;\;\; i = \sum_{j=1}^n 2^{n-j} s_j.
  \end{equation}
  Then \(T\) can be represented as a tensor train with bond dimension at most \(d+2\).
  \label{prop:mps_dimension}
\end{proposition}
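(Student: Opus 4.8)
The plan is to reduce the statement to a uniform bound on the Schmidt rank of the tensor $T$ across every ``contiguous'' bipartition of its indices, and then invoke the standard sequential-SVD (TT-SVD) construction~\cite{Oseledets2011-xm}, which produces a tensor train whose bond dimension at bond $j$ equals the Schmidt rank of $T$ across the cut $\{s_1,\dots,s_j\}\,|\,\{s_{j+1},\dots,s_n\}$. So it suffices to show that for every $j\in\{1,\dots,n-1\}$ the $2^j\times 2^{n-j}$ matricization $M$ of $T$ (rows indexed by $(s_1,\dots,s_j)$, columns by $(s_{j+1},\dots,s_n)$) has rank at most $d+2$.

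First I would split the global index into ``high/low'' parts: with $i=\sum_k 2^{n-k}s_k$, setting $i_{\mathrm{hi}}=\sum_{k=1}^{j}2^{j-k}s_k\in\{0,\dots,2^j-1\}$ and $i_{\mathrm{lo}}=\sum_{k=j+1}^{n}2^{n-k}s_k\in\{0,\dots,2^{n-j}-1\}$ gives $i=2^{n-j}i_{\mathrm{hi}}+i_{\mathrm{lo}}$, hence $x_i=u+v$ with $u=\tfrac{b-a}{N-1}\,2^{n-j}\,i_{\mathrm{hi}}$ depending only on the row index and $v=a+\tfrac{b-a}{N-1}\,i_{\mathrm{lo}}$ depending only on the column index. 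Taylor's formula is exact for a degree-$d$ polynomial, $p(u+v)=\sum_{r=0}^{d}\tfrac{p^{(r)}(v)}{r!}\,u^r$, which exhibits $p(x_i)$ as a sum of $d+1$ products of a function of $i_{\mathrm{hi}}$ alone with a function of $i_{\mathrm{lo}}$ alone. If $N=2^n$ (no zero-padding) this already writes $M$ as a sum of $d+1$ rank-one matrices, so the Schmidt rank is at most $d+1$.

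To absorb the zero-padding of \Cref{def:tensorization} I would write $M=M_1+M_2$ using $N=q\,2^{n-j}+s$ with $0\le s<2^{n-j}$: the constraint $i<N$ holds for all columns when $i_{\mathrm{hi}}<q$, fails for all columns when $i_{\mathrm{hi}}>q$, and holds exactly for $i_{\mathrm{lo}}<s$ when $i_{\mathrm{hi}}=q$. Let $M_1$ equal $p(x_i)$ on the rows $i_{\mathrm{hi}}<q$ and vanish elsewhere; restricting the row-functions $u^r$ in the expansion above to $i_{\mathrm{hi}}<q$ still presents $M_1$ as a sum of $d+1$ rank-one matrices, so $\operatorname{rank}(M_1)\le d+1$. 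Let $M_2$ be supported on the single row $i_{\mathrm{hi}}=q$, where it equals $p(x_i)\,\mathbbm{1}[i_{\mathrm{lo}}<s]$; being supported on one row, $\operatorname{rank}(M_2)\le 1$. Since $M=M_1+M_2$ reproduces the zero-padded tensor exactly, $\operatorname{rank}(M)\le d+2$, and when $2^{n-j}\mid N$ we have $s=0$, $M_2=0$, and the bound improves to $d+1$ — which is precisely why the extra unit is only needed away from powers of two. Applying this for every $j$ and feeding the common bound $d+2$ into the TT-SVD construction yields the claimed tensor train.

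The only place that needs genuine care is the bookkeeping around the padding window: verifying that the decomposition $M=M_1+M_2$ is exact, that the row-truncated functions keep $\operatorname{rank}(M_1)$ at $d+1$, and that the $i_{\mathrm{hi}}=q$ boundary row is handled correctly (including the degenerate case $q=2^j$, where every row is full and $M=M_1$). The polynomial-theoretic core — the $u+v$ decomposition and the exact degree-$d$ expansion — is elementary, and everything downstream is routine linear algebra, which is why the statement is presented as an adaptation of Corollary 28 of Ref.~\citenum{Grasedyck2010-pn} rather than reproved in full here.
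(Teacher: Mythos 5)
Your proposal is correct and follows essentially the same route as the paper (which defers the proof to Corollary~28 of Ref.~\citenum{Grasedyck2010-pn} and only sketches the idea in \Cref{sec:MPS_and_TT}): the exact Taylor expansion \(p(u+v)=\sum_{r=0}^{d}p^{(r)}(v)u^{r}/r!\) is precisely the observation that the Schmidt vectors across any cut span the \((d+1)\)-dimensional space of tensorized degree-\(d\) polynomials, and your single boundary row at \(i_{\mathrm{hi}}=q\) is the ``technical reason'' the paper cites for the extra \(+1\) when \(N\) is not a power of two. The padding bookkeeping and the reduction to unfolding ranks via TT-SVD are handled correctly, so nothing is missing.
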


The standard representation for first-quantized simulation indexes the momentum eigenstates using signed integers rather than unsigned integers, so \Cref{prop:mps_dimension} does not apply directly.
However, it is straightforward to prove the following corollary.
\begin{corollary}[Signed tensor train representation of a polynomial]
  Let \(p: \mathbb{R} \rightarrow \mathbb{C}\) be a polynomial of degree at most \(d\).
  For an arbitrary \(a \in \mathbb{R}\) with \(a>0\), \(N \in \mathbb{Z}\) with \(N > 1\) and \(N\) odd, define the vector \(x\) by
  \begin{equation}
    x_i = \frac{2a}{N-1}i, \;\;\; i \in \left[ -\frac{N-1}{2}..\frac{N-1}{2} \right].
  \end{equation}
  For any \(n \in \mathbb{Z}\) such that \(n \geq \log_2 N\), let \(T\) be defined by
  \begin{equation}
    T^{s_1s_2\cdots s_n} =
    \begin{cases}
      p(x_i) & \text{ for } i \in \left[ -\frac{N-1}{2}..\frac{N-1}{2} \right]
      \\
      0      & \text{ for } i \notin \left[ -\frac{N-1}{2}..\frac{N-1}{2} \right]
    \end{cases}
    ,
    \;\;\; i = \left( -1 \right)^{s_1} \sum_{j=2}^n 2^{n-j} s_j,
    \label{eq:signed_tensor_T_def}
  \end{equation}
  except that \(T^{1 0 \cdots 0} = 0\) (so that we don't encode ``\(-0\)'').
  Then \(T\) can be represented as a tensor train with bond dimension at most \(2d+5\).

  \label{cor:signed_mps_dimension}
\end{corollary}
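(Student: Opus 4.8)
The plan is to decompose $T$ according to the value of the sign bit $s_1$ and reduce to \Cref{prop:mps_dimension} twice. Write $T = T^{(+)} + T^{(-)}$, where $T^{(+)}$ is the restriction of $T$ to $s_1 = 0$ (and zero whenever $s_1 = 1$) and $T^{(-)}$ is its restriction to $s_1 = 1$. For a fixed value of $s_1$, what remains is a rank-$(n-1)$ tensor in the indices $s_2, \dots, s_n$, which we read as an unsigned binary index $t = \sum_{j=2}^{n} 2^{n-j} s_j \in [0 .. 2^{n-1}-1]$. On the $s_1 = 0$ branch the nonzero entries are $p(x_t) = \tilde p(t)$ for $t \le (N-1)/2$, where $\tilde p(t) := p\!\left(\tfrac{2a}{N-1} t\right)$ is a polynomial of degree at most $d$; on the $s_1 = 1$ branch they are $p(x_{-t}) = \hat p(t)$ for $1 \le t \le (N-1)/2$, where $\hat p(t) := p\!\left(-\tfrac{2a}{N-1} t\right)$, again of degree at most $d$.

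For the $s_1 = 0$ branch I would apply \Cref{prop:mps_dimension} directly. Since $N$ is odd with $N > 1$, and since $n \ge \log_2 N$ forces $2^n \ge N+1$, we get $2^{n-1} \ge (N+1)/2$, so $n-1$ bits suffice to index the $\big((N-1)/2 + 1\big)$-point uniform grid on which $\tilde p$ is sampled (padded with zeros past index $(N-1)/2$). \Cref{prop:mps_dimension} then yields a tensor train for this rank-$(n-1)$ tensorization of bond dimension at most $d+2$; prepending the trivial rank-one factor that pins $s_1 = 0$ leaves the bond dimension at $d+2$, so $T^{(+)}$ admits a TT of bond dimension at most $d+2$.

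The $s_1 = 1$ branch is identical except for the ``$-0$'' convention: $T^{(-)}$ must vanish at $t = 0$ even though $\hat p(0) = p(0)$ is generally nonzero. I would write $T^{(-)} = \hat T - R$, where $\hat T$ is the (prepended) rank-$(n-1)$ tensorization of $\hat p$ over the grid $t = 0, 1, \dots$, of bond dimension at most $d+2$ by \Cref{prop:mps_dimension}, and $R$ is the rank-one tensor supported only on $(s_1, \dots, s_n) = (1, 0, \dots, 0)$ with value $p(0)$, which has bond dimension $1$. Hence $T^{(-)}$ admits a TT of bond dimension at most $d+3$. Finally, using that the bond dimension of a sum of tensor trains is at most the sum of the summands' bond dimensions (as reviewed in \Cref{sec:MPS_and_TT}), $T = T^{(+)} + \hat T - R$ has bond dimension at most $(d+2) + (d+2) + 1 = 2d+5$, as claimed.

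I would flag the only mildly delicate points as bookkeeping rather than genuine obstacles: confirming that $n-1$ qubits really do cover each half-grid — which is precisely where the hypothesis that $N$ is odd enters — and isolating the single spurious entry $p(0)$ on the negative branch as a rank-one correction, which is what costs the extra $+1$ over the naive bound $2d+4$.
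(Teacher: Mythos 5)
Your proof is correct and follows essentially the same route as the paper's: split $T$ by the sign bit, apply \Cref{prop:mps_dimension} once to $p$ on the non-negative half-grid and once to $p(-x)$ on the same half-grid, and absorb the spurious $p(0)$ entry at ``$-0$'' into a rank-one correction, giving $(d+2)+(d+2)+1 = 2d+5$. Your explicit check that $n-1$ bits suffice for the $(N+1)/2$-point half-grid (using that $N$ is odd) is a detail the paper leaves implicit, but the argument is the same.
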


\begin{proof}
  Let \(T_{\geq 0}\) be the tensor whose entries are identical to \(T\) as defined in \Cref{eq:signed_tensor_T_def} when the index \(i\) is non-negative, and zero otherwise.
  We can invoke \Cref{prop:mps_dimension} to guarantee the existence of a tensor train for a rank \(n-1\) tensor that encodes \(p\) on the appropriate grid over the interval \(\left[ 0, a \right]\) (corresponding to the non-negative values of \(i\) in \Cref{eq:signed_tensor_T_def}).
  This tensor train will have a bond dimension of at most \(d + 2\).
  The tensor product of this tensor train with \(\left( 1, 0 \right)\) yields a tensor train for \(T_{\geq 0}\) with the same bond dimension bounds.

  We also define \(T_{< 0}\) in a similar fashion, letting the entries be identical to \(T\) when the index \(i\) is negative, and zero otherwise.
  Invoking \Cref{prop:mps_dimension}, we can guarantee the existence of a tensor train that encodes \(p(-x)\) on a grid over \(\left[ 0, a\right]\).
  As above, the tensor product of \(\left( 0, 1 \right)\) with this tensor train yields a new tensor train with bond dimension at most \(d + 2\).
  Let us denote the tensor encoded by this tensor train as \(T_{\leq 0}\).
  By construction, every entry of \(T_{\leq 0}\) is equal to the corresponding entry of \(T_{<0}\), except that \(T_{\leq 0 }^{10\cdots 0} = p(0)\) instead of \(0\).

  Let \(T_{-0}\) be the tensor with the same shape as \(T\) and a single nonzero element, \(T^{10\cdots0} = -p(0)\).
  Because \(T_{-0}\) contains a single non-zero entry, the maximum rank of any matricization of \(T\) is one, and therefore there exists a bond dimension one tensor train representation of \(T_{-0}\).
  By construction, we have that \(T = T_{\geq 0} + T_{\leq 0} + T_{-0}\).
  We can therefore obtain a tensor train for \(T\) whose bond dimension is bounded by the sum of the bond dimensions of the tensor trains for \(T_{\geq 0}\), \(T_{\leq 0}\), and \(T_{-0}\), completing the proof.
\end{proof}

Because the coefficients of \(f_{x}\) with respect to the basis \(\left\{ \phi_k : k \in \mathbb{K}_{cut}  \right\}\) are given by the polynomial \(p\), \Cref{cor:signed_mps_dimension} tells us that we can encode them using a (signed) tensor train with bond dimension at most \(2m+3\).
Consider the matrix product state
\begin{equation}
  \ket{f_x} =
  \sum_{\left\{ s \right\}} \sum_{\left\{ \alpha \right\}} A^{s_1}_{\alpha_1} A^{s_2}_{\alpha_1 \alpha_2} \cdots A^{s_n}_{\alpha_{n-1}} \ket{s_1 s_2 \cdots s_n},
\end{equation}
where the \(A\) tensors are the components of the signed tensor train encoding \(p\).
Note that the computational basis states \(\ket{s_1s_2 \cdots s_n}\) are binary encodings of the signed integers (\(p = \frac{kL}{2 \pi}\)) indexing the momenta \(k \in \mathbb{K}_{cut}\),
\begin{equation}
  p \in \left[ -\frac{\left| \mathbb{K}_{cut} \right| - 1}{2}, \frac{\left| \mathbb{K}_{cut} \right| - 1}{2} \right] = \left[ -\frac{N - 1}{2}, \frac{N - 1}{2} \right] \subset \mathbb{Z}.
\end{equation}
Therefore, we have
\begin{equation}
  \ket{f_x} = \sum_{k \in \mathbb{K}_{cut}} p(k) \ket{\frac{kL}{2\pi}}
\end{equation}
as desired, completing the proof.

\subsection{Proof of \Cref{lemma:momentum_cutoff}}
\label{sec:momentum_cutoff_proof}

Let us recall the statement of the lemma,
\cutofflemma*

Let us recall the definitions of \(\ket{\tilde{g}_x}\) and \(\ket{g_{trunc}}\) from \Cref{eq:tilde_g_hermite_gaussian_def} and \Cref{eq:g_trunc_def},
\begin{align}
  \tilde{g}_x(x)
  = &
  \frac{2^{1/4}\sqrt{\pi}}{\widetilde{\mathcal{N}}_x\gamma^{1/4}\sqrt{L}} \sum_{k \in \mathbb{K}} \sum_{n \in \left[ 0 .. l \right]} \left(-i\right)^{n} h_n \psi_n\left(\frac{k}{\sqrt{2 \gamma}}\right) \phi_k(x),
  \label{eq:g_whole_def_in_proof}
  \\
  g_{trunc}(x)
  = &
  \frac{2^{1/4}\sqrt{\pi}}{\widetilde{\mathcal{N}}_x\mathcal{N}_t\gamma^{1/4}\sqrt{L}} \sum_{k \in \mathbb{K}_{cut}} \sum_{n \in \left[ 0 .. l \right]} \left(-i\right)^{n} h_n \psi_n\left(\frac{k}{\sqrt{2 \gamma}}\right) \phi_k(x).
  \label{eq:g_trunc_def_in_proof}
\end{align}
Using the definitions of these two wavefunctions, we can write \(\abs{\inp{\tilde{g}_x}{g_{trunc}}}^2\) as
\begin{align}
  \abs{\inp{\tilde{g}_x}{g_{trunc}}}^2
   & =
  \left| \sum_{k\in \mathbb{K}} \inp{\tilde{g}_x}{\phi_k} \inp{\phi_k}{g_{trunc}} \right|^2
  \nonumber
  \\
   & =
  \abs{
    \sum_{k \in \mathbb{K}_{cut}} \inp{\tilde{g}_x}{\phi_k} \inp{\phi_k}{g_{trunc}}
  }^2
  \nonumber
  \\
   & =
  \frac{1}{\mathcal{N}_t^2}
  \abs{
    \sum_{k \in \mathbb{K}_{cut}} \inp{\tilde{g}_x}{\phi_k} \inp{\phi_k}{\tilde{g}_x}
  }^2.
  \label{eq:whole_trunc_squared}
\end{align}
Similarly, from the definition of \(\tilde{g}_x\), we can show that
\begin{align}
  1 & = \inp{g_{trunc}}{g_{trunc}}
    & =
  \nonumber
  \\
    & =
  \sum_{k\in \mathbb{K}} \inp{g_{trunc}}{\phi_k} \inp{\phi_k}{g_{trunc}}
  \nonumber
  \\
    & =
  \sum_{k \in \mathbb{K}_{cut}} \inp{g_{trunc}}{\phi_k} \inp{\phi_k}{g_{trunc}}
  \nonumber
  \\
    & =
  \frac{1}{\mathcal{N}_t^2}
  \sum_{k \in \mathbb{K}_{cut}} \inp{\tilde{g}_x}{\phi_k} \inp{\phi_k}{\tilde{g}_x}.
  \label{eq:N_t_equation_equals_1}
\end{align}
Combining this with \Cref{eq:whole_trunc_squared}, we find that
\begin{align}
  1 -   \abs{\inp{\tilde{g}_x}{g_{trunc}}}^2
   & =
  1 -
  \sum_{k \in \mathbb{K}_{cut}} \inp{\tilde{g}_x}{\phi_k} \inp{\phi_k}{\tilde{g}_x}
  \nonumber
  \\
   & =
  \sum_{k \in \mathbb{K} : |k| > K} \inp{\tilde{g}_x}{\phi_k} \inp{\phi_k}{\tilde{g}_x},
  \label{eq:whole_trunc_overlap_to_complement}
\end{align}
where the last equality follows from the fact that \(\left\{ \phi_k \right\}_{|k| \leq K}\) and \(\left\{ \phi_k \right\}_{|k| > K}\) are complementary subspaces.

Therefore, in order to bound the trace distance, we first focus on bounding the quantity
\begin{equation}
  \sum_{k \in \mathbb{K} : |k| > K }
  \abs{\inp{\tilde{g}_x}{\phi_k}}^2
  =
  \sum_{k \in \mathbb{K} : |k| > K }
  \frac{\pi \sqrt{2}}{\widetilde{\mathcal{N}}_x^2 L \sqrt{\gamma}}
  \left| \sum_{n \in \left[ 0 .. l \right]} \left(-i\right)^{n} h_n \psi_n\left(\frac{k}{\sqrt{2 \gamma}}\right) \right|^2.
\end{equation}
We begin by using the triangle inequality, combined with the fact that \(|h_n| \leq 1\) for all \(n\) due to normalization to yield
\begin{align}
  \sum_{ k \in \mathbb{K} : |k| > K}
  \abs{\inp{\tilde{g}_x}{\phi_k}}^2
   & \leq
  \sum_{ k \in \mathbb{K} : |k| >  K }
  \frac{\pi \sqrt{2}}{\widetilde{\mathcal{N}}_x^2 L \sqrt{\gamma}}
  \sum_{n \in \left[ 0 .. l \right]} \sum_{m \in \left[ 0 .. l \right]} \left| \psi_n\left(\frac{k}{\sqrt{2 \gamma}}\right) \psi_m\left(\frac{k}{\sqrt{2 \gamma}}\right)\right|
  \nonumber
  \\
   & =
  \frac{\pi \sqrt{2}}{\widetilde{\mathcal{N}}_x^2 L \sqrt{\gamma}}
  \sum_{n \in \left[ 0 .. l \right]} \sum_{m \in \left[ 0 .. l \right]}
  \sum_{ k \in \mathbb{K} : |k| >  K }
  \left| \psi_n\left(\frac{k}{\sqrt{2 \gamma}}\right) \psi_m\left(\frac{k}{\sqrt{2 \gamma}}\right)\right|.
  \label{eq:g_phi_sum_bound}
\end{align}
We define the quantity \(R\) as
\begin{equation}
  R = \sum_{k \in \mathbb{K} : |k| > K }
  \left| \psi_n\left(\frac{k}{\sqrt{2 \gamma}}\right) \psi_m\left(\frac{k}{\sqrt{2 \gamma}}\right)\right|.
\end{equation}
For simplicity of notation, we suppress the dependence of \(R\) on the various parameters.
We will work to bound \(R\) and then return to \Cref{eq:g_phi_sum_bound}.

It is helpful to make some further simplifications.
The Hermite-Gaussian functions are either even or odd, so any product of them has definite parity.
This allows us to consider only positive values of \(k\),
\begin{equation}
  R =
  2 \sum_{k \in \mathbb{K}: k > K }
  \left| \psi_n\left(\frac{k}{\sqrt{2 \gamma}}\right) \psi_m\left(\frac{k}{\sqrt{2 \gamma}}\right)\right|.
\end{equation}
Without loss of generality we can assume that \(n \geq m\) since the case where \(m \geq n \) is symmetric and covers the remaining possibilities.
Note that our assumptions on \(K\) imply that \(K \geq \sqrt{4 \gamma l}
\geq \sqrt{4 \gamma n}\).
As a result, \(\frac{k}{\sqrt{2 \gamma}} > \sqrt{2 n}\).
We can therefore apply \Cref{lemma:hermite_gauss_bound}, which we state and prove below, to bound \(R\) by the expression
\begin{equation}
  R \leq 2 \sum_{k \in \mathbb{K}: k > K }
  c_n c_m 2^{n + m} \left( \frac{k}{\sqrt{2 \gamma}} \right)^{n + m} e^{-\frac{k^2}{4\gamma}}.
  \label{eq:R_bound_without_abs}
\end{equation}

\begin{lemma}
  Let \(\psi_n(x)\) be the \(n\)th Hermite-Gaussian function and let \(c_n\) be the normalization coefficient, as defined in \Cref{eq:hermite_gaussian_def} and \Cref{eq:hermite_gaussian_c_def}.
  Then, for any \(x \in \mathbb{R}\) such that \(x \geq \sqrt{2n}\),
  \begin{equation}
    0 \leq \psi_n(x) \leq c_n 2^n x^n e^{-x^2/2}.
  \end{equation}
  \label{lemma:hermite_gauss_bound}
\end{lemma}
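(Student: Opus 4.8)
The plan is to reduce the statement to a two-sided estimate on the physicist's Hermite polynomial and then prove that estimate by induction on $n$ using the three-term recurrence. Since $\psi_n(x) = c_n e^{-x^2/2} H_n(x)$ with $c_n = (2^n n!\,\sqrt\pi)^{-1/2} > 0$ and $e^{-x^2/2} > 0$, dividing through shows that the claimed inequality is equivalent to
\begin{equation}
  0 \le H_n(x) \le (2x)^n \qquad \text{for all } x \ge \sqrt{2n},
\end{equation}
where $(2x)^n = 2^n x^n$ is precisely the leading term of $H_n$. Thus the upper bound asserts that the subleading corrections to $H_n$ are nonpositive on $[\sqrt{2n},\infty)$, while the lower bound asserts that $H_n$ has no zero there. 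The case $n=0$ is trivial ($H_0\equiv 1$), so assume $n\ge 1$.

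First I would establish, by induction on $n$, the pair of statements (i) $H_n(x) > 0$ and (ii) $H_n(x) \ge x\,H_{n-1}(x)$, valid for all $x \ge \sqrt{2n}$, using the conventions $H_{-1}\equiv 0$, $H_0\equiv 1$. The base case $n=1$ is immediate from $H_1(x) = 2x$. For the inductive step, fix $x \ge \sqrt{2(n+1)}$; then $x > 0$, $x^2 > 2n$, and $x$ also exceeds $\sqrt{2n}$ and $\sqrt{2(n-1)}$, so the induction hypotheses supply $H_n(x) > 0$, $H_n(x) \ge x H_{n-1}(x)$, and $H_{n-1}(x) > 0$. Combining $H_{n+1} = 2x H_n - 2n H_{n-1}$ with $x H_n(x) \ge x^2 H_{n-1}(x)$,
\begin{equation}
  H_{n+1}(x) - x H_n(x) = x H_n(x) - 2n H_{n-1}(x) \ge (x^2 - 2n)\,H_{n-1}(x) > 0,
\end{equation}
which gives (ii) at level $n+1$, and then $H_{n+1}(x) \ge x H_n(x) > 0$ gives (i). This proves the lower bound $H_n(x) \ge 0$ on $[\sqrt{2n},\infty)$. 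The upper bound follows from a second, simpler induction: the cases $n=0,1$ hold with equality, and if $H_n(x) \le (2x)^n$ on $[\sqrt{2n},\infty)$, then for $x \ge \sqrt{2(n+1)}$ we have $H_{n-1}(x)\ge 0$ (just shown, since $x \ge \sqrt{2(n-1)}$) and $0 \le H_n(x) \le (2x)^n$, so
\begin{equation}
  H_{n+1}(x) = 2x H_n(x) - 2n H_{n-1}(x) \le 2x H_n(x) \le (2x)^{n+1}.
\end{equation}
Multiplying the chain $0 \le H_n(x) \le (2x)^n$ by $c_n e^{-x^2/2} > 0$ recovers the lemma.

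The one delicate point — and the reason a naive induction on the two-sided bound alone does not close — is the auxiliary inequality (ii): the trivial lower bound $H_n \ge 0$ is too weak to propagate through the three-term recurrence and produce a positive lower bound for $H_{n+1}$, so one must carry the stronger relation $H_n(x) \ge x H_{n-1}(x)$ through the induction. As an alternative that bypasses the recurrence, one could instead invoke the standard fact that the zeros of $H_n$ are the eigenvalues of the symmetric tridiagonal Jacobi matrix with zero diagonal and off-diagonal entries $\sqrt{j/2}$ for $j=1,\dots,n-1$, bound its spectral radius by $\sqrt{2(n-1)} < \sqrt{2n}$ via Gershgorin's theorem, and then combine this with the $\pm$-symmetry of the Hermite zeros and the positive leading coefficient to obtain both inequalities at once; I would present the self-contained recurrence argument.
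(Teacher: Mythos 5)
Your proof is correct, but it takes a genuinely different route from the paper's. Both arguments reduce the lemma to $0 \le H_n(x) \le (2x)^n$ on $[\sqrt{2n},\infty)$, but the paper then factors $H_n$ over its roots, invokes the localization of all zeros inside $[-\sqrt{2n-2},\sqrt{2n-2}]$ (asserted without proof there), and bounds each quadratic factor $(x+r)(x-r)$ by $x^2$; you instead run an induction on the three-term recurrence $H_{n+1}=2xH_n-2nH_{n-1}$, strengthening the inductive hypothesis to $H_n(x)\ge x\,H_{n-1}(x)$ so that positivity propagates. Your version is fully self-contained: it never needs the zero-localization fact, which is exactly the external input the paper leans on (and which, as you note, is usually derived from the Jacobi-matrix/Gershgorin argument you sketch as an alternative). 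The paper's factorization argument is shorter and makes the geometric content transparent — once you know where the roots are, both inequalities are immediate — but at the cost of citing a nontrivial property of Hermite zeros. Your identification of the delicate point (that the naive two-sided induction does not close without the auxiliary inequality) is accurate, and the inductive step is sound: for $x\ge\sqrt{2(n+1)}$ one has $x^2-2n\ge 2>0$ and $H_{n-1}(x)>0$ from the hypothesis at level $n-1$, so $H_{n+1}(x)-xH_n(x)\ge(x^2-2n)H_{n-1}(x)>0$, and the upper bound then follows by discarding the nonnegative term $2nH_{n-1}(x)$. Either proof would serve; yours trades a little length for eliminating an unproved citation.
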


\begin{proof}
  From the definition of the Hermite-Gaussian functions, it is clear that the lemma is equivalent to the claim that \(0 \leq H_n(x) \leq 2^n x^n\) for \(x \geq \sqrt{2n}\), where \(H_n(x)\) is the \(n\)th physicist's Hermite polynomial.
  We now recall a few useful properties of these Hermite polynomials.
  The Hermite polynomials have \(n\) distinct real roots, all of which can be shown to lie in the interval \(\left[- \sqrt{2n - 2}, \sqrt{2n - 2}\right] \subset \left[ -\sqrt{2n}, \sqrt{2n} \right]\).
  They also have definite parity, being odd functions if \(n\) is odd and even if \(n\) is even.
  Finally, the coefficient of their highest degree term is \(2^n\).
  We can use the properties of the Hermite polynomials to rewrite them in a more convenient form.

  Specifically, we can express \(H_n\) as
  \begin{equation}
    H_n(x) =
    \begin{cases}
      2^n x \left(\prod_{r \in R^+_n} \left( x+r \right)\left( x-r \right)\right) & \text{if}\ n\ \text{odd}
      \\
      2^n \left(\prod_{r \in R^+_n} \left( x+r \right)\left( x-r \right)\right)   & \text{if}\ n\ \text{even},
    \end{cases}
    \label{eq:hermite_polynomial_bound_expansion}
  \end{equation}
  where \(R^{+}_{n}\) denotes the set of positive roots of the \(n\)th Hermite polynomial.
  Notice that \(0 <  x^2 - r^2 = \left( x+r \right)\left(x - r\right) < x^2\) for all \(r \in R^+_n\), \(x > r\).
  The condition that \(x \geq \sqrt{2n}\) implies that \(x>r\) for all roots of \(H_n(x)\).
  Therefore, we can upper bound \(H_n(x)\) by replacing each of the terms \(\left( x+r \right)\left( x-r \right)\) in the product in \Cref{eq:hermite_polynomial_bound_expansion} with \(x^2\).
  This yields the desired upper bound,
  \begin{equation}
    H_n(x) < 2^n x^n
  \end{equation}
  for all \(x \geq \sqrt{2n}\).
  The lower bound follows from the fact that each term in the product is non-negative under the assumption on \(x\).
\end{proof}

Returning to the main thread of our proof, we now show that our assumption on \(K\) also guarantees that the sum in \Cref{eq:R_bound_without_abs} is monotonically decreasing.
This is true by inspection when \(n + m = 0\).
For the general case, consider the derivative of \(f(x) = x^{a}e^{-x^2/2}\) under the assumption that \(a \geq 1\),
\begin{equation}
  f'(x) = (a - x^2)x^{a-1}e^{-x^2/2}.
\end{equation}
The derivative is non-positive for all \(x \geq \sqrt{a}\), implying that \(\left( \frac{k}{\sqrt{2 \gamma}} \right)^{n + m} e^{-\frac{k^2}{4 \gamma}}\) is non-increasing for all \(\frac{k}{\sqrt{2 \gamma}} \geq \sqrt{n + m}\).
By our assumption on \(K\), we have that \(k > K \geq \sqrt{4 \gamma l}\), and therefore \(\frac{k}{\sqrt{2 \gamma}} > \sqrt{2 l} \geq \sqrt{n + m}\).
Therefore, the sum is monotonically decreasing.

In order to proceed, let us convert the sum to a series (summing over the momentum index \(p\) rather than the values of the momentum).
Substituting \(k = \frac{2 \pi p }{L}\) yields
\begin{align}
  R & \leq
  2 c_n c_m 2^{n+m}\sum_{k \in \mathbb{K} : k > K}
  \left( \frac{k}{\sqrt{2 \gamma}} \right)^{n + m} e^{-\frac{k^2}{4\gamma}}
  \nonumber
  \\
    & =
  2 c_n c_m 2^{n+m}\sum_{ p = \frac{K L}{2 \pi} + 1 - s}^{\infty}
  \left( \frac{2 \pi p}{L \sqrt{2 \gamma}} \right)^{n + m} e^{-\frac{(2\pi p)^2}{L^2 4 \gamma}},
\end{align}
where \(s \in \left[0, 1 \right)\) is chosen so that \(\frac{K L}{2 \pi} + 1 - s\) is a whole number (in order to ensure that we sum over all indices in \(\mathbb{K}\) where \(k > K\)).
Since the series is monotonically decreasing, we can bound it by an integral, yielding
\begin{equation}
  R \leq
  2 c_n c_m 2^{n+m}\left( \left( \frac{2 \pi a}{L \sqrt{2 \gamma}} \right)^{n + m} e^{-\frac{(2\pi a)^2}{L^2 4 \gamma}} + \int_{p = a}^\infty
  \left( \frac{2 \pi p}{L \sqrt{2 \gamma}} \right)^{n + m} e^{-\frac{(2\pi p)^2}{L^2 4 \gamma}} dp\right),
\end{equation}
where \(a = \frac{K L}{2 \pi} + 1 - s\).
Since the function in the series is monotonically decreasing but positive for \(k \geq K\),
we can replace \(a\) by \(\frac{KL}{2 \pi}\) and the result still holds.
Therefore, changing variables back to \(k\) and \(K\), we have that
\begin{equation}
  R \leq
  \frac{L}{\pi} c_n c_m 2^{n+m} \left( \frac{2 \pi}{L}\left( \frac{K}{\sqrt{2 \gamma}} \right)^{n + m} e^{-\frac{K^2}{4\gamma}} +  \int_{ k = K}^{\infty} \left( \frac{k}{\sqrt{2 \gamma}} \right)^{n + m} e^{-\frac{k^2}{4\gamma}} dk \right).
  \label{eq:R_bound_messy_integral}
\end{equation}

In order to simplify \Cref{eq:R_bound_messy_integral}, we can perform the substitutions \(w = n + m\), \(a=\frac{K}{\sqrt{2 \gamma}}\), \(u + a = \frac{k}{\sqrt{2\gamma}}\), \(C = \frac{L}{\pi} c_n c_m 2^{n + m}\), to yield
\begin{align}
  R \leq &
  C \left(
  \frac{2 \pi}{L} a^w e^{-a^2/2}
  +
  \sqrt{2 \gamma} \int_{u=0}^{\infty}
  \left(u + a\right)^{w} e^{-\left(u + a\right)^2/2}du
  \right)
  \nonumber
  \\
  =      &
  C \left(
  \frac{2\pi}{L}a^w e^{-a^2/2}
  +
  \sqrt{2 \gamma} e^{-a^2/2}\int_{u=0}^{\infty}
  \left(u + a\right)^{w} e^{-au} e^{-u^2/2}du
  \right).
  \label{eq:R_bound_nice_integral}
\end{align}
We can bound the expression in \Cref{eq:R_bound_nice_integral} by upper bounding the integrand with a simpler function.
Let \(f(u) = \left( u + a \right)^w e^{-au}\).
Let us examine the derivative (noting that \(u + a > 0\) by assumption),
\begin{equation}
  f'(u)=\left( w -a^2 - au\right)\left( u+a \right)^{w-1}e^{-au}.
\end{equation}
We see that \(f(u)\) is non-increasing when \(u \geq \frac{w - a^2}{a}\) since \(u > -a\) by assumption.
The observation that \(K \geq \sqrt{4 \gamma l}\) implies that \(a^2 \geq 2l\), which in turn implies that \(w - a^2\) is non-positive and therefore that \(u \geq 0 \geq \frac{w - a^2}{a}\).
Since \(f(u)\) is non-increasing over the whole region of integration, it is upper bounded by \(f(0) = a^w\).

We can therefore simplify \Cref{eq:R_bound_nice_integral} to yield
\begin{equation}
  R \leq
  C a^w e^{-a^2/2}\left(\frac{2\pi}{L} + \sqrt{2\gamma}\int_{u=0}^{\infty}e^{-u^2/2}du
  \right)
  =
  C a^w e^{-a^2/2}\left(\frac{2\pi}{L} + \sqrt{\pi \gamma}\right).
\end{equation}
Now we can substitute the values of \(a\), \(w\), and \(C\) back in, as well as the concrete expressions for \(c_n\) and \(c_m\),
\begin{align}
  R & \leq
  e^{-\frac{K^2}{4 \gamma}} (\frac{K}{\sqrt{2 \gamma}})^{n + m}
  \frac{L}{\pi} \frac{1}{\sqrt{2^n n!
      \sqrt{\pi}}} \frac{1}{\sqrt{2^m m! \sqrt{\pi}}} 2^{n + m} \left( \frac{2\pi}{L}  + \sqrt{\pi \gamma} \right)
  \nonumber
  \\
    & =
  \frac{L\left(  \frac{2\pi}{L} + \sqrt{\pi \gamma} \right)}{\pi^{3/2}}
  e^{-\frac{K^2}{4 \gamma}} \left(\frac{K}{\sqrt{\gamma}}\right)^{n + m}
  \frac{1}{\sqrt{n! }} \frac{1}{\sqrt{m! }}
  \nonumber
  \\
    & \leq
  \frac{L\left(  \frac{2\pi}{L} + \sqrt{\pi \gamma} \right)}{\pi^{3/2}}
  e^{-\frac{K^2}{4 \gamma}} \left(\frac{K}{\sqrt{\gamma}}\right)^{2 l}
  \frac{1}{\sqrt{n! }} \frac{1}{\sqrt{m! }},
\end{align}
where the last inequality holds due to the fact that our assumption on \(K\) implies that \(\frac{K}{\sqrt{\gamma}} \geq 1\) unless \(l=0\).
We can plug this expression back into the inequality from \Cref{eq:g_phi_sum_bound} and simplify, finding that
\begin{align}
  \sum_{k \in \mathbb{K} : |k| > K }
  \abs{\inp{\tilde{g}_x}{\phi_k}}^2
   & \leq
  \frac{\pi \sqrt{2}}{\widetilde{\mathcal{N}}_x^2 L \sqrt{\gamma}}
  \sum_{n \in \left[ 0 .. l \right]} \sum_{m \in \left[ 0 .. l \right]} R
  \\
   & \leq
  \frac{\sqrt{2}\left( 1 + \frac{2 \sqrt{\pi}}{L\sqrt{\gamma}} \right)}{\widetilde{\mathcal{N}}_x^2}
  e^{-\frac{K^2}{4 \gamma}} \left(\frac{K}{\sqrt{\gamma}}\right)^{2 l}
  \sum_{n \in \left[ 0 .. l \right]} \sum_{m \in \left[ 0 .. l \right]}
  \frac{1}{\sqrt{n!
    }} \frac{1}{\sqrt{m! }}.
\end{align}

We can bound the contribution from the sums,
\begin{align}
  \sum_{n \in \left[ 0 .. l \right]} \sum_{m \in \left[ 0 .. l \right]}
  \frac{1}{\sqrt{n!
    }}\frac{1}{\sqrt{m!}}
   & =
  \left(\sum_{n \in \left[ 0 .. l \right]} \frac{1}{\sqrt{n!}} \right)^2
  \nonumber
  \\
   & \leq
  \left(\sum_{n=0}^{\infty} \frac{1}{\sqrt{n!}} \right)^2
  \nonumber
  \\
   & =
  \left( 2 + \frac{1}{\sqrt{2}} + \sum_{n=3}^{\infty} \frac{1}{\sqrt{n!}} \right)^2
  \nonumber
  \\
   & \leq
  \left(2 + \frac{1}{\sqrt{2}} + \sum_{n=1}^{\infty} 2^{-k}\right)^2
  \nonumber
  \\
   & \leq \frac{20}{\sqrt{2}}.
\end{align}
Therefore, we have that
\begin{equation}
  \sum_{ k \in K : |k| > K }
  \abs{\inp{\tilde{g}_x}{\phi_k}}^2
  \leq
  \frac{20\left( 1 + \frac{2 \sqrt{\pi}}{L\sqrt{\gamma}} \right)}{\widetilde{\mathcal{N}}_x^2}
  e^{-\frac{K^2}{4 \gamma}} (\frac{K}{\sqrt{\gamma}})^{2 l}.
\end{equation}
Recall that we are ultimately interesting in establishing sufficient conditions to guarantee that the trace distance is bounded by \(\epsilon_t\), i.e,
\begin{equation}
  D(\tilde{g}_x, g_{trunc}) = \sqrt{1 -   \abs{\inp{\tilde{g}_x}{g_{trunc}}}^2} =
  \sqrt{  \sum_{ k \in K : |k| > K }
    \abs{\inp{\tilde{g}_x}{\phi_k}}^2}
  \leq \epsilon_t.
  \label{eq:infidelity_bound_whole_trunc}
\end{equation}
We can ensure that this holds by choosing a \(K\) such that
\begin{equation}
  \frac{20\left( 1 + \frac{2 \sqrt{\pi}}{L\sqrt{\gamma}} \right)}{\widetilde{\mathcal{N}}_x^2}
  e^{-\frac{K^2}{4 \gamma}} \left(\frac{K}{\sqrt{\gamma}}\right)^{2 l} \leq \epsilon_t^2.
  \label{eq:bound_for_K_getting_close}
\end{equation}

In the case where \(l > 0\), taking the logarithm of both sides and rearranging yields the inequality
\begin{equation}
  \frac{K^2}{4 \gamma l}
  - 2 \log \left( \frac{K}{\sqrt{\gamma}} \right)
  \geq
  \frac{
    2 \log \left( \epsilon_t^{-1} \right)
    + \log \left( 20 \right)
    + \log \left( 1 + \frac{2 \sqrt{\pi}}{L\sqrt{\gamma}} \right)
    - \log \left( \widetilde{\mathcal{N}}_x^2 \right)
  }{l}.
\end{equation}
We can further manipulate to yield the sufficient condition
\begin{equation}
  \frac{K^2}{4 \gamma l}
  - 2 \log \left( \frac{K}{2\sqrt{l \gamma}} \right)
  \geq
  \frac{
    2 \log \left( \epsilon_t^{-1} \right)
    + \log \left( 20 \right)
    + \log \left( 1 + \frac{2 \sqrt{\pi}}{L\sqrt{\gamma}} \right)
    - \log \left( \widetilde{\mathcal{N}}_x^2 \right)
    + 2 l \log(2 \sqrt{l})
  }{l}.
\end{equation}
The expression \(\frac{1}{2}x^2 - 2 \log \left( x \right) \geq 0\) holds for all positive \(x\), so we can subtract the positive quantity \(\frac{1}{2} \left( \frac{K}{2 \sqrt{\gamma l}} \right)^2 - 2 \log \left( \frac{K}{2 \sqrt{\gamma l}} \right)\) from the left-hand side and simplify to establish the more stringent condition that
\begin{equation}
  K
  \geq
  2\sqrt{2 \gamma}\sqrt{
    2 \log \left( \epsilon_t^{-1} \right)
    + \log \left( 20 \right)
    + \log \left( 1 + \frac{2 \sqrt{\pi}}{L\sqrt{\gamma}} \right)
    - \log \left( \widetilde{\mathcal{N}}_x^2 \right)
    + l \log(4 l)
  }
  .
\end{equation}
A simpler computation verifies that this expression is also a sufficient condition in the \(l = 0\) case under the convention that \(l \log(l)\) is defined to be zero when \(l = 0\).

We satisfy this condition by assumption, and so \Cref{eq:infidelity_bound_whole_trunc} holds, showing that
\begin{equation}
  D(\tilde{g}_x, g_{trunc}) \leq \epsilon_t.
\end{equation}
From \Cref{eq:N_t_equation_equals_1} and \Cref{eq:whole_trunc_overlap_to_complement}, we can see that
\(\mathcal{N}_t \leq 1\) and
\begin{equation}
  \sqrt{1 -  \mathcal{N}_t^2} =
  \sqrt{  \sum_{ k \in K : \abs{k} > K }
    \abs{\inp{\tilde{g}_x}{\phi_k}}^2}.
\end{equation}
Because \(1 - x \leq \sqrt{1 - x^2}\) for \(x \in \left[ 0, 1 \right]\), \Cref{eq:infidelity_bound_whole_trunc} implies that \(1 - \mathcal{N}_t \leq \epsilon_t\).
Therefore,
\begin{equation}
  1 - \epsilon_t \leq \mathcal{N}_t \leq 1,
\end{equation}
completing the proof.

\subsection{Proof of \Cref{lemma:polynomial_approximation}}
\label{sec:polynomial_approximation_proof}

Let us recall the statement of the lemma,
\polynomiallemma*
To proceed, we begin by proving the lemma below,
\begin{restatable}{lemma}{approximationlemma}
  Let \(\psi_n(x)\) denote the \(n\)th Hermite-Gaussian function.
  Let \(C, \epsilon_{cheb} \in \mathbb{R}_{>0}\).
  For any \(m \in \mathbb{Z}\) such that
  \begin{equation}
    m \geq \frac{eC}{\sqrt{2}} \left( \frac{eC}{\sqrt{2}} +  \sqrt{2n + 1}\right)
  \end{equation}
  and
  \begin{equation}
    m \geq \frac{2 \log \epsilon_{cheb}^{-1}}{\log 2},
  \end{equation}
  and
  \begin{equation}
    m \geq 1
  \end{equation}
  there exists a degree \(m - 1\) polynomial \(p(x)\) such that
  \begin{equation}
    |\psi_n(x) - p(x)| \leq \epsilon_{cheb}
  \end{equation}
  for all \(x \in \left[ -C, C \right]\).
  \label{lemma:chebyshev_approximation}
\end{restatable}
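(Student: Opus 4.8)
The plan is to prove the lemma by a Chebyshev truncation argument, treating \(\psi_n\) as an entire function of well-controlled growth and exploiting the exponential decay of Chebyshev coefficients of functions that extend analytically to a Bernstein ellipse.

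\emph{Step 1 (reduction to \([-1,1]\) and the Chebyshev bound).} Rescale by setting \(g(t) = \psi_n(Ct)\) for \(t \in [-1,1]\), so the goal becomes to approximate \(g\) on \([-1,1]\) by a degree \(m-1\) polynomial to accuracy \(\epsilon_{cheb}\). Since \(\psi_n(w) = c_n e^{-w^2/2} H_n(w)\) is entire, \(g\) is entire, hence analytic on and inside every Bernstein ellipse \(E_\rho\) (foci \(\pm 1\), semiaxes \(\tfrac12(\rho \pm \rho^{-1})\), \(\rho > 1\)). I then invoke the classical bound: if \(M_\rho := \max_{z \in E_\rho}|g(z)|\) and \(p\) is the degree \(m-1\) truncation of the Chebyshev series of \(g\), then \(\|g - p\|_{\infty,[-1,1]} \le 2M_\rho \rho^{1-m}/(\rho-1)\). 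The appearance of \(\log 2\) in the hypotheses on \(m\) indicates the intended choice \(\rho = 2\), giving \(\|g - p\|_\infty \le 2^{2-m} M_2\); it then remains to bound \(M_2\) and to choose \(m\).

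\emph{Step 2 (growth of \(\psi_n\) in the complex plane).} The core estimate is a bound on \(|\psi_n(w)|\) over the region \(w \in C\cdot E_2\) (so \(|w| \le \tfrac54 C\) and \(|\operatorname{Im} w| \le \tfrac34 C\)). For the Hermite factor I will use that, since \(H_n(w) = \sum_k \tfrac{(-1)^k n!}{k!(n-2k)!}(2w)^{n-2k}\), replacing \(w\) by \(i|w|\) aligns all phases and maximizes all moduli, so \(|H_n(w)| \le (-i)^n H_n(i|w|) =: G_n(|w|)\), where \(G_n\) has the all-positive generating function \(\sum_n G_n(y)t^n/n! = e^{2yt+t^2}\). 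A Cauchy estimate on \(|t| = r\) gives \(G_n(y) \le n!\,r^{-n} e^{2yr + r^2}\), which I optimize over \(r\) (the saddle point \(r = \tfrac12(\sqrt{y^2+2n}-y)\)); this, together with \(|e^{-w^2/2}| \le e^{(\operatorname{Im} w)^2/2}\), Stirling's formula for \(n!\), and the identity \(c_n\sqrt{n!}\,2^{n/2} = \pi^{-1/4}\) (which collapses the \(n\)-dependent constants), should yield \(M_2 \le A\exp(\alpha C^2 + \beta C\sqrt{2n+1})\) for an explicit mild prefactor \(A\) and explicit constants \(\alpha,\beta\) — the \(C^2\) term arising from the Gaussian factor on the ellipse and the \(C\sqrt{2n+1}\) term from the Hermite saddle point, reflecting that \(\psi_n\) is essentially band-limited with bandwidth \(\sim\sqrt{2n+1}\).

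\emph{Step 3 (choosing \(m\)).} Split the error as \(2^{2-m}M_2 = \big(2^{2-m/2}M_2\big)\big(2^{-m/2}\big)\). The hypothesis \(m \ge 2\log\epsilon_{cheb}^{-1}/\log 2\) gives \(2^{-m/2}\le\epsilon_{cheb}\), so it suffices that \(2^{2-m/2}M_2 \le 1\), i.e. \(m \ge 4 + \tfrac{2}{\log 2}\big(\log A + \alpha C^2 + \beta C\sqrt{2n+1}\big)\). I then verify that the hypotheses \(m \ge \tfrac{eC}{\sqrt2}\big(\tfrac{eC}{\sqrt2}+\sqrt{2n+1}\big) = \tfrac{e^2C^2}{2} + \tfrac{eC}{\sqrt2}\sqrt{2n+1}\) and \(m \ge 1\) imply this; concretely this reduces to checking \(\alpha \le \tfrac{e^2\log 2}{4}\) and \(\beta \le \tfrac{e\log 2}{2\sqrt2}\), with the additive constant \(4 + \tfrac{2\log A}{\log 2}\) absorbed into the slack in those two inequalities (using the separate condition \(m \ge 1\), together with a direct check of the degenerate cases \(n = 0\) or very small \(C\), where \(g\) is nearly constant and a low-degree truncation is trivially good enough). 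The anticipated main obstacle is precisely this constant-chasing in Step 2: the naive radius \(r = \sqrt{n/2}\) in the Cauchy estimate is far too lossy when \(C \ll \sqrt n\) (it would report size \(n^{1/4}\) for \(\psi_n\) near the origin, whose true magnitude is \(O(n^{-1/4})\)), so one genuinely needs the optimal saddle point and a bound uniform in the ratio \(C/\sqrt n\) in order for the cheap-looking conditions on \(m\) to suffice.
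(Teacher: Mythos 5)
Your overall architecture (Chebyshev approximation of $\psi_n$ on $[-C,C]$, with the two hypotheses on $m$ paying respectively for an $\epsilon_{cheb}$ factor and for the size of $\psi_n$) matches the paper's, but your route through the Bernstein ellipse has a quantitative gap that I do not believe can be closed as written. Track the coefficient of $C\sqrt{2n+1}$ in $\log M_2$. After reserving $2^{-m/2}\le\epsilon_{cheb}$ from the second hypothesis, the first hypothesis leaves you a budget of $\tfrac{m}{2}\log 2 \ge \tfrac{e\log 2}{2\sqrt 2}\,C\sqrt{2n+1}+\tfrac{e^2\log 2}{4}C^2$, i.e.\ you need $\beta\le\tfrac{e\log 2}{2\sqrt 2}\approx 0.666$. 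But your bound $|H_n(w)|\le G_n(|w|)$ evaluates the majorant at $|w|\le\tfrac54 C$ on $C\cdot E_2$, and the saddle-point-optimized Cauchy estimate then gives $\log M_2\approx \tfrac54 C\sqrt{2n}+O(C^2)+O(\log n)$, i.e.\ $\beta=\tfrac54$. This is not a matter of sloppy constants: for, say, $C=2$, $n=100$, $m=70$ (the smallest $m$ your first hypothesis permits), your chain of inequalities yields an error bound of order $10^{-3}$, while the second hypothesis allows $\epsilon_{cheb}$ as small as $2^{-35}$, so the claimed conclusion is not established. The root cause is that $G_n(|w|)$ discards the cancellation that keeps $\psi_n$ uniformly bounded near the real axis (Cramér's inequality); the true maximum of $|\psi_n|$ on $C\cdot E_2$ is governed by the \emph{imaginary} extent $\tfrac34 C$ of the ellipse, giving $\beta=\tfrac34$ — which still exceeds $0.666$, so even a sharp bound does not rescue the choice $\rho=2$. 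Salvaging the scheme would require both a larger $\rho$ and a bound on $|\psi_n(w)|$ depending on $\operatorname{Im}w$ rather than $|w|$ (a Plancherel--Rotach-type estimate in the complex plane), which is substantially harder than what you have outlined, and the $(2n)^{1/4}$-type prefactors would then also need to be controlled uniformly rather than absorbed into slack that does not exist when $C\ll\sqrt n$.

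For contrast, the paper's proof stays entirely on the real line: it uses Chebyshev \emph{interpolation} with the derivative-form remainder $\tfrac{(2C)^m}{2^{2m-1}m!}\max|\psi_n^{(m)}|$, bounds $|\psi_n^{(m)}(x)|\le 2^{m/2}\sqrt{(n+m)!/n!}\,\pi^{-1/4}\le(2n+m+1)^{m/2}\pi^{-1/4}$ via the ladder recursion $\psi_n'=\sqrt{n/2}\,\psi_{n-1}-\sqrt{(n+1)/2}\,\psi_{n+1}$ together with Cramér's inequality, and then Stirling's formula turns the remainder into $\bigl(\tfrac{e^2C^2}{4m^2}(2n+m+1)\bigr)^{m/2}\le 2^{-m/2}$ under exactly your first hypothesis. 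Because Cramér's inequality is applied where the Hermite functions do not grow, the constants come out right; that is the ingredient your complex-analytic bound is missing.
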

\begin{proof}

  To prove this lemma, we will use a standard result from approximation theory, which we state below,
  \begin{proposition}[Adapted from  Ref.~\citenum{Stewart1996-za}, Lecture 20 by a change of variables]
    Let \(m \in \mathbb{Z}_{>0}\), \(a, b \in \mathbb{R}\) such that \(a < b\), and \(f: \left[ a, b \right] \rightarrow \mathbb{R}\) be a function whose first \(m\) derivatives exist at all points in \(\left[ a, b \right]\).
    Let \(x_i\) for \(i \in \left[ 0..m-1 \right]\) be the Chebyshev nodes of the first kind, i.e., let
    \begin{equation}
      x_i = \cos \left( \frac{2i + 1}{2m + 2} \pi \right),\;\; i \in \left[ 0 .. m-1 \right]
    \end{equation}
    and let
    \begin{equation}
      t_i = \frac{a + b}{2} + \frac{b - a}{2} x_i.
    \end{equation}

    Let \(p(x)\) be the unique degree \(m-1\) polynomial that satisfies \(p(t_i) = f(t_i)\) for all \(i \in \left[ 0..m-1 \right]\).
    Then for all \(x \in \left[ a, b \right]\),
    \begin{equation}
      |f(x) - p(x)| \leq \frac{(b-a)^{m}}{2^{2m-1} (m)!
      } \max_{\xi \in \left[ a, b \right]}\Big|f^{(m)}(\xi)\Big|.
      \label{eq:chebyshev_error_bound}
    \end{equation}
    \label{lemma:standard_chebyshev_approximation_citation}
  \end{proposition}

  In order to apply this result, we shall proceed by bounding \(\big| \psi_n^{(m)}(x)\big|\) as a function of \(m\) and \(n\).
  Recall that the derivatives of the Hermite-Gaussian functions obey the recursion relation
  \begin{equation}
    \psi_n'(x) = \sqrt{\frac{n}{2}}\psi_{n-1}(x) - \sqrt{\frac{n+1}{2}}\psi_{n+1}(x),
  \end{equation}
  except where \(n=0,\) in which case the first term is zero.
  Consider the outcome of applying this recursion relation \(m\) times to obtain an expression for \(\psi_n^{(m)}(x)\).
  The resulting expression would contain at most \(2^{m}\) non-zero terms.
  Each term would consist of some \(\psi_j(x)\) multiplied by a coefficient whose absolute value is at most \(\prod_{i=1}^{m} \sqrt{\frac{n+i}{2}} = 2^{-m/2} \sqrt{\frac{(n+m)!
    }{n!}}\).
  Therefore, taking the absolute value and applying the triangle inequality, we find that
  \begin{equation}
    \big|\psi_n^{(m)}(x)\big|
    \leq
    2^{m/2} \sqrt{\frac{(n+m)!
      }{n!}} \max_{j}\big|\psi_j(x)\big|.
    \label{eq:bound_after_recursive_argument}
  \end{equation}

  The Hermite-Gaussian functions are known to satisfy an inequality known as Cramér's inequality,
  \begin{equation}
    |\psi_j(x)| \leq \pi^{-1/4},
  \end{equation}
  \(\)
  for all \(j\) and all \(x \in \mathbb{R}\).
  We can use the observation that \((n+1+j)(n+m-j)/2 \leq (n+m/2+1/2)^2\) for \(j \in \mathbb{R}\) to bound the first part of the expression from \Cref{eq:bound_after_recursive_argument}, finding that
  \begin{equation}
    2^{m/2} \sqrt{\frac{(n+m)!
      }{n!}}
    \leq
    2^{m/2} (n + m/2 + 1/2)^{m/2}
    =
    (2n+m + 1)^{m/2}.
  \end{equation}
  Combining this with Cramér's inequality, we have that
  \begin{equation}
    \big|\psi^{(m)}_n(x)\big|
    \leq
    (2n+m + 1)^{m/2} \pi^{-1/4}.
    \label{eq:hermite_gaussian_derivative_bound}
  \end{equation}

  We are now ready to apply \Cref{lemma:standard_chebyshev_approximation_citation}, setting \(a=-C\) and \(b = C\).
  Combining \Cref{eq:hermite_gaussian_derivative_bound} with \Cref{eq:chebyshev_error_bound} and simplifying tells us that we can find a degree \(m-1\) polynomial \(p(x)\) such that
  \begin{align}
    |\psi_n(x) - p(x)| \leq \frac{2}{\pi^{1/4} (m)!
    }  \left(\frac{C^2}{4}\left(2n+m + 1\right)\right)^{m/2}.
  \end{align}
  Applying Robbins' version of Stirling's formula, we have that
  \begin{align}
    |\psi_n(x) - p(x)|
     & \leq
    \frac{\sqrt{2}}{\pi^{3/4} \sqrt{m}} \left( \frac{e}{m} \right)^{m}
    \left(\frac{C^2}{4}\left(2n+m + 1\right)\right)^{m/2}
    \nonumber
    \\
     & \leq
    \left(\frac{e^2 C^2}{4m^2}\left(2n+m + 1\right)\right)^{m/2}.
  \end{align}

  Now we would like to show that \(\frac{e^2 C^2}{4m^2}\left(2n+m + 1\right) \leq \frac{1}{2}\) in order to establish that
  \begin{equation}
    |\psi_n(x) - p(x)|
    \leq
    \left(\frac{1}{2}\right)^{m/2}.
    \label{eq:nice_degree_bound}
  \end{equation}
  We can rearrange the desired inequality,
  \begin{equation}
    \frac{e^2 C^2}{4m^2}\left(2n+m + 1\right) \leq \frac{1}{2},
  \end{equation}
  finding that it is satisfied if and only if
  \begin{equation}
    2m^2 - e^2C^2m - e^2C^2(2n + 1)  \geq 0.
  \end{equation}
  Solving this inequality for \(m\), we find that the inequality holds when
  \begin{equation}
    m \geq \frac{1}{4}e^2C^2 \left( 1 + \sqrt{1 + \frac{16n + 8}{e^2C^2}} \right).
  \end{equation}
  Because \(\sqrt{a + b} \leq \sqrt{a} + \sqrt{b}\) for non-negative \(a\) and \(b\), we can establish a more stringent sufficient condition on \(m\) by requiring that
  \begin{align}
    m & \geq
    \frac{e^2 C^2}{4} \left( 2 + \sqrt{\frac{16n + 8}{e^2C^2}} \right)
    \nonumber
    \\
      & =
    \frac{eC}{\sqrt{2}} \left( \frac{eC}{\sqrt{2}} +  \sqrt{2n + 1}\right).
  \end{align}
  This condition is satisfied by assumption, so \Cref{eq:nice_degree_bound} must hold.

  By assumption, we have
  \begin{equation}
    m \geq \frac{2 \log \epsilon_{cheb}^{-1}}{\log 2}.
  \end{equation}
  Rearranging and exponentiating both sides yields
  \begin{equation}
    \left(\frac{1}{2}\right)^{m/2} \leq \epsilon_{cheb}.
  \end{equation}
  Therefore, by \Cref{eq:nice_degree_bound}, we have that
  \begin{equation}
    |\psi_n(x) - p(x)| \leq \epsilon_{cheb},
  \end{equation}
  completing the proof of \Cref{lemma:chebyshev_approximation}.
\end{proof}

Let us recall the definition of \(g_{trunc}\) from \Cref{eq:g_trunc_def},
\begin{equation}
  g_{trunc}(x)
  =
  \frac{2^{1/4}\sqrt{\pi}}{\widetilde{\mathcal{N}}_x\mathcal{N}_t\gamma^{1/4}\sqrt{L}} \sum_{k \in \mathbb{K}_{cut}} \sum_{n \in \left[ 0 .. l \right]} \left(-i\right)^{n} h_n \psi_n\left(\frac{k}{\sqrt{2 \gamma}}\right) \phi_k(x).
  \label{eq:g_trunc_def_in_proof_again}
\end{equation}
We now seek to apply \Cref{lemma:chebyshev_approximation} to bound the error incurred by replacing each \(\psi_n(\frac{k}{\sqrt{2 \gamma}})\) in \Cref{eq:g_trunc_def_in_proof_again} with a polynomial \(p_n(\frac{k}{\sqrt{2 \gamma}})\).
Using this lemma, we can find a set of \(l + 1\) degree \(m-1\) polynomials \(\left\{ p_n \right\}\) such that
\begin{equation}
  \left|\psi_n\left(\frac{k}{\sqrt{2\gamma}}\right) - p_n\left(\frac{k}{\sqrt{2 \gamma}}\right)\right| \leq \epsilon_{cheb}
  \label{eq:p_n_error_bounds}
\end{equation}
for an arbitrary \(\epsilon_{cheb} \in \mathbb{R}_{>0}\) and for all  \(n \in \left[ 0..l \right]\) and \(k \in \left[ -K, K \right]\), so long as we take \(m\) to satisfy
\begin{align}
  m \geq & \frac{eK}{2\sqrt{\gamma}} \left( \frac{eK}{2\sqrt{\gamma}} + \sqrt{2l + 1} \right)
  \nonumber
  \\
  m \geq & \frac{2 \log \epsilon_{cheb}^{-1}}{\log 2}
  \label{eq:m_K_bound}
  \\
  m \geq & 2 \nonumber.
\end{align}
Using these polynomials, we define the function \(f_{x \; (unnormalized)}\) as
\begin{equation}
  f_{x \; (unnormalized)}(x) =
  \frac{2^{1/4}\sqrt{\pi}}{\widetilde{\mathcal{N}}_x\mathcal{N}_t\gamma^{1/4}\sqrt{L}} \sum_{k \in \mathbb{K}_{cut}} \sum_{n \in \left[ 0 .. l \right]} \left(-i\right)^{n} h_n p_n \left(\frac{k}{\sqrt{2 \gamma}}\right) \phi_k(x).
  \label{eq:g_poly_def_in_proof}
\end{equation}
Combining \Cref{eq:p_n_error_bounds} and the fact that \(\sum_{n \in \left[0..l \right]} |h_n|^2 = 1\) by construction, we can bound the infinity norm distance (in the basis of the \(\left\{ \phi_k \right\}s\)) between \(g_{trunc}\) and \(f_{x \; (unnormalized)}\),
\begin{align}
  \norm{g_{trunc} - f_{x \; (unnormalized)}}_{\infty} & =
  \frac{2^{1/4}\sqrt{\pi}}{\widetilde{\mathcal{N}}_x\mathcal{N}_t\gamma^{1/4}\sqrt{L}}
  \max_{k \in \mathbb{K}_{cut}} \left|
  \sum_{n \in \left[ 0..l \right]}
  \left(-i\right)^{n} h_n\left( \psi_n(\frac{k}{\sqrt{2 \gamma}}) - p_n(\frac{k}{\sqrt{2 \gamma}}) \right)
  \right|
  \nonumber
  \\
                                                   & \leq
  \frac{2^{1/4}\sqrt{\pi}}{\widetilde{\mathcal{N}}_x\mathcal{N}_t\gamma^{1/4}\sqrt{L}}
  \max_{k \in \mathbb{K}_{cut}}
  \sum_{n \in \left[ 0..l \right]}
  \left|
  \left(-i\right)^{n} h_n\left( \psi_n(\frac{k}{\sqrt{2 \gamma}}) - p_n(\frac{k}{\sqrt{2 \gamma}}) \right)
  \right|
  \nonumber
  \\
                                                   & \leq
  \frac{2^{1/4}\sqrt{\pi}}{\widetilde{\mathcal{N}}_x\mathcal{N}_t\gamma^{1/4}\sqrt{L}}
  \epsilon_{cheb}
  \sum_{n \in \left[ 0..l \right]}
  |
  h_n
  |
  \nonumber
  \\
                                                   & \leq
  \frac{2^{1/4}\sqrt{\pi}}{\widetilde{\mathcal{N}}_x\mathcal{N}_t\gamma^{1/4}\sqrt{L}}
  \epsilon_{cheb}
  \sqrt{l + 1}.
  \label{eq:g_trunctaed_poly_infinity_norm_bound}
\end{align}

From this notion of error in the infinity norm, we would like to obtain a bound on the trace distance between \(g_{trunc}\) and the normalized function
\begin{equation}
  f_{x} = \frac{1}{\mathcal{N}_p} f_{x \; (unnormalized)}, \;\;\; \text{where } \mathcal{N}_p = \norm{f_{x \; (unnormalized)}}.
  \label{eq:normalized_g_poly_def_in_proof}
\end{equation}
We will proceed by making use of the following lemma, which we state and prove below.
\begin{restatable}{lemma}{fidelityboundlemma}
  Let \(a\) and \(b\) be two elements of a vector space endowed with an inner product, \(\inp{\cdot}{\cdot}\), and let \(\norm{\cdot}\) denote the norm induced by this inner product.
  Let \(\Delta = b - a\).

  Then the following inequality holds:
  \begin{equation}
    \frac{\norm{a}^2 - 2 \norm{a} \norm{\Delta}}{\norm{b}^2}
    \leq
    \frac{\left| \inp{a}{b} \right|^2}{\norm{a}^2 \norm{b}^2}
    \leq 1.
  \end{equation}
  \label{lemma:simple_fidelity_bound}
\end{restatable}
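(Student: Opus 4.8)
The plan is to prove the two inequalities separately by elementary means. The right-hand inequality, $|\langle a, b\rangle|^2 \le \|a\|^2 \|b\|^2$, is exactly the Cauchy--Schwarz inequality, so nothing is required there beyond citing it (note that the statement implicitly assumes $a \neq 0$ and $b \neq 0$, since both norms appear in denominators). The content lies in the left-hand inequality.

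First I would expand $b = a + \Delta$ to get $\langle a, b\rangle = \langle a, a \rangle + \langle a, \Delta \rangle = \|a\|^2 + \langle a, \Delta\rangle$. Applying the reverse triangle inequality for the complex modulus together with Cauchy--Schwarz to the term $\langle a, \Delta \rangle$ yields
\begin{equation}
  |\langle a, b\rangle| \;\geq\; \|a\|^2 - |\langle a, \Delta\rangle| \;\geq\; \|a\|^2 - \|a\|\,\|\Delta\| \;=\; \|a\|\bigl(\|a\| - \|\Delta\|\bigr).
\end{equation}

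Then I would split into two cases according to the sign of $\|a\| - 2\|\Delta\|$, which is the sign of the numerator $\|a\|^2 - 2\|a\|\|\Delta\|$ on the left-hand side of the claim. If $\|a\| \leq 2\|\Delta\|$, the left-hand side is nonpositive, while the middle quantity $|\langle a,b\rangle|^2/(\|a\|^2\|b\|^2)$ is nonnegative, so the inequality is immediate. Otherwise $\|a\| > 2\|\Delta\| \geq \|\Delta\|$, so $\|a\| - \|\Delta\| > 0$ and the bound from the previous step has both sides nonnegative; squaring it gives $|\langle a,b\rangle|^2 \geq \|a\|^2(\|a\| - \|\Delta\|)^2$. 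Dividing through by $\|a\|^2\|b\|^2$ and then using $(\|a\| - \|\Delta\|)^2 = \|a\|^2 - 2\|a\|\|\Delta\| + \|\Delta\|^2 \geq \|a\|^2 - 2\|a\|\|\Delta\|$ completes the argument.

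I do not expect a real obstacle here; the only subtleties are ensuring $\|a\|$ and $\|b\|$ are nonzero and squaring the intermediate inequality only when both sides are manifestly nonnegative, which is exactly what the case split arranges. The lemma is then applied in the parent proof with $a = g_{trunc}$ (up to the normalization already pinned down) and $\Delta$ the infinity-norm-controlled error from \Cref{eq:g_trunctaed_poly_infinity_norm_bound}, converting the elementwise polynomial-approximation error into a trace-distance bound on $D(g_{trunc}, f_x)$.
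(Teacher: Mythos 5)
Your proposal is correct and follows essentially the same route as the paper's proof: expand \(\inp{a}{b} = \norm{a}^2 + \inp{a}{\Delta}\), apply the reverse triangle inequality and Cauchy--Schwarz, square, drop the \(\norm{\Delta}^2\) term, and divide by \(\norm{a}^2\norm{b}^2\), with Cauchy--Schwarz giving the upper bound. Your explicit case split on the sign of \(\norm{a} - 2\norm{\Delta}\) before squaring is in fact a slightly more careful handling of that step than the paper's, which squares an absolute-value expression obtained by substituting the Cauchy--Schwarz bound inside \(\abs{\norm{a}^2 - \abs{\inp{a}{\Delta}}}\); both arguments reach the same valid conclusion.
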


\begin{proof}
  Using linearity, we can write the absolute value of the inner product as
  \begin{equation}
    \big|\innerproductcomma{a}{b}\big|
    =
    \big|\innerproductcomma{a}{a} + \innerproductcomma{a}{\Delta}\big| =
    \big| ||a||^2 + \innerproductcomma{a}{\Delta} \big|.
  \end{equation}
  Applying the reverse triangle inequality yields
  \begin{equation}
    \Big| ||a||^2 - \big| \innerproductcomma{a}{\Delta} \big| \Big|
    \leq
    \big|\innerproductcomma{a}{b}\big|.
  \end{equation}
  Applying the Cauchy-Schwarz inequality to \(\big| \innerproductcomma{a}{\Delta} \big|\) shows us that
  \begin{equation}
    \Big| ||a||^2 - ||a|| \; ||\Delta|| \Big|
    \leq
    \big|\innerproductcomma{a}{b}\big|.
  \end{equation}

  We can square both sides to show that
  \begin{equation}
    \left(||a||^2 - ||a|| \; ||\Delta||\right)^2
    \leq
    \big|\innerproductcomma{a}{b}\big|^2.
  \end{equation}
  Expanding the square and dropping a positive term, we find that
  \begin{equation}
    ||a||^4 - 2 ||a||^3 ||\Delta|| \leq
    \big|\innerproductcomma{a}{b}\big|^2.
  \end{equation}
  Dividing by \(||a||^2||b||^2\) and using the Cauchy-Schwarz inequality to produce an upper bound yields
  \begin{equation}
    \frac{||a||^2 - 2 ||a|| \; ||\Delta||}{||b||^2}
    \leq
    \frac{\big|\innerproductcomma{a}{b}\big|^2}{||a||^2 \; ||b||^2}
    \leq 1,
  \end{equation}
  completing the proof.
\end{proof}

We proceed by applying \Cref{lemma:simple_fidelity_bound}, setting \(a = g_{trunc}\), \(b = f_{x \; (unnormalized)}\), and \(\Delta = g_{trunc} - f_{x \; (unnormalized)}\).
This yields the following inequality,
\begin{equation}
  \frac{
    1 - 2 || \Delta ||
  }{
    \mathcal{N}_p^2
  }
  \leq
  \abs{\inp{g_{trunc}}{{{f_{x}}}}}^2
  \leq 1.
\end{equation}
Taking the norm of both sides of the expression \(f_{x \; (unnormalized)} = g_{trunc} - \Delta\) and applying the triangle inequality
yields
\begin{equation}
  \mathcal{N}_p \leq 1 + \norm{\Delta}.
\end{equation}
Therefore,
\begin{equation}
  \frac{
    1 - 2 \norm{ \Delta }
  }{
    \left( 1 + \norm{ \Delta }\right)^2
  }
  \leq
  \abs{\inp{g_{trunc}}{{{f_{x}}}}}^2
  \leq 1.
\end{equation}
Making use of the fact that \(1 - 4b \leq \frac{1 - 2b}{\left( 1 + b \right)^2}\) for any positive \(b\), \footnote{
  This identity can be verified by starting with the inequality \(-4b^3 - 7b^2 \leq 0\), rewriting it as \((1 - 4b)(1+b)^2 - (1 - 2b) \leq 0\), adding \((1 - 2b)\) and dividing by \((1 + b)^2\).
}
we arrive at the inequality
\begin{equation}
  1 - 4 \norm{ \Delta }
  \leq
  \abs{\inp{g_{trunc}}{{{f_{x}}}}}^2
  \leq 1,
\end{equation}
which implies that
\begin{equation}
  \sqrt{1 -
    \abs{\inp{g_{trunc}}{{{f_{x}}}}}^2
  }
  \leq 2 \sqrt{\norm{ \Delta}}.
  \label{eq:trunc_poly_infideliy_two_norm}
\end{equation}

In order to proceed, we need to convert our infinity norm bound from \Cref{eq:g_trunctaed_poly_infinity_norm_bound},
\begin{equation}
  \norm{ \Delta }_{\infty} \leq
  \frac{2^{1/4}\sqrt{\pi}}{\widetilde{\mathcal{N}}_x \mathcal{N}_t\gamma^{1/4}\sqrt{L}}
  \epsilon_{cheb}
  \sqrt{l + 1}.
\end{equation}
into a two norm bound.
Note that we can conflate the \(L^2\) norm of the function \(\Delta\) and the \(l^2\) norm of \(\Delta\) understood as the vector of coefficients of the basis functions \(\phi_k\) since the \(\phi_k\) are orthonormal functions.
Recalling that \(\mathbb{K} = \left\{ \frac{2 \pi p}{L} : p \in \mathbb{Z} \right\}\), we can see that there are at most \(\frac{KL}{\pi} + 1\) basis functions \(\phi_k\) where \(|k| \leq K\) and \(\Delta\) may have a non-zero coefficient.
Therefore,
\begin{equation}
  \norm{ \Delta } \leq \norm{ \Delta }_{\infty} \sqrt{\frac{KL}{\pi} + 1} ,
\end{equation}
and thus,
\begin{align}
  \sqrt{1 -
    \abs{\inp{g_{trunc}}{{{f_{x}}}}}^2
  }
   & \leq
  2 \sqrt{
    \sqrt{\frac{KL}{\pi} + 1}
    \frac{2^{1/4}\sqrt{\pi}}{\widetilde{\mathcal{N}}_x\mathcal{N}_t\gamma^{1/4}\sqrt{L}}
    \epsilon_{cheb}
    \sqrt{l + 1}
  }
  \nonumber
  \\
   & \leq
  2
  \sqrt{\frac{\epsilon_{cheb}}{\widetilde{\mathcal{N}}_x \mathcal{N}_t}}
  \left(
  \sqrt{2}
  \left( \frac{K}{\sqrt{\gamma}} + \frac{\pi}{L \sqrt{\gamma}} \right)
  \left( l + 1 \right)
  \right)^{1/4}.
\end{align}
If the right-hand side of this inequality is bounded above by \(\epsilon_{p}\), then we have established the desired bound on the trace distance.
This is equivalent to the condition that
\begin{equation}
  \epsilon_{cheb} \leq  \epsilon_{p}^2 \frac{\widetilde{\mathcal{N}}_x \mathcal{N}_t}{4} \sqrt{\frac{1}{\sqrt{2} (\frac{K}{\sqrt{\gamma}} + \frac{\pi}{L \sqrt{\gamma}}) (l+1)}}.
  \label{eq:epsilon_cheb_bound}
\end{equation}

We can set \(\epsilon_{cheb}\) to the largest allowed value by taking \Cref{eq:epsilon_cheb_bound} to be an equality.
Substituting this expression for \(\epsilon_{cheb}\) into \Cref{eq:m_K_bound}, we find that the following conditions on \(m\) are sufficient:
\begin{align}
  m \geq &
  2
  \nonumber
  \\
  m \geq &
  \frac{eK}{2\sqrt{\gamma}} \left( \frac{eK}{2\sqrt{\gamma}} + \sqrt{2l + 1} \right)
  \label{eq:m_conditions_recap}
  \\
  m \geq &
  \frac{ 4 \log \left( \epsilon_{p}^{-1}\right) - 2\log \left( \widetilde{\mathcal{N}}_x \right) - 2\log \left(\mathcal{N}_t\right) + \log(\frac{K}{\sqrt{\gamma}} + \frac{\pi}{L \sqrt{\gamma}}) + \log \left( l + 1 \right)}{\log 2} + 4.5.
  \nonumber
\end{align}
By our assumption in \Cref{eq:m_conditions}, the inequalities in \Cref{eq:m_conditions_recap} are satisfied.
Therefore, there exists a set of degree \(m-1\) polynomials \(p_n\) that allow us to construct a \(f_{x \; (unnormalized)}\) according to \Cref{eq:g_poly_def_in_proof}, and hence, a \(f_{x}\) according to \Cref{eq:normalized_g_poly_def_in_proof}, such that
\begin{equation}
  \norm{f_{x}} = 1,
\end{equation}
and
\begin{equation}
  D(g_{trunc}, f_{x}) = \sqrt{1 - \abs{\inp{g_{trunc}}{{{f_{x}}}}}^2} \leq \epsilon_{p}.
\end{equation}
Note that we can interpret the coefficient of each \(\phi_k\) in \Cref{eq:g_poly_def_in_proof} as a degree \(m-1\) polynomial in \(k\) and then normalize by dividing by \(\mathcal{N}_p\),
i.e.,
\begin{align}
  p(k)        & =
  \frac{2^{1/4}\sqrt{\pi}}{\widetilde{\mathcal{N}}_x\mathcal{N}_t \mathcal{N}_p\gamma^{1/4}\sqrt{L}}
  \sum_{n \in \left[ 0 .. l \right]} i^{n} h_n p_n(\frac{k}{\sqrt{2 \gamma}}),
  \label{eq:p_from_p_n_def}
  \\
  f_{x}(x) & = \sum_{k \in \mathbb{K}_{cut}} p(k) \phi_k(x),
\end{align}
completing the proof.

\section{Statement and proof of \Cref{cor:3d_basis_functions} (\nameref*{cor:3d_basis_functions})}
\label{app:3d_proof}

\Cref{thm:main_technical_result} shows that we can use a matrix product state to succinctly approximate the projection of a primitive Gaussian basis function onto a plane wave basis in one spatial dimension.
In order to make practical use of this result, we need to generalize it to the three-dimensional case, and to the case where the functions are translated away from the origin.
We begin by clarifying our notation.
Let \(g_x\), \(g_y\), \(g_z\) denote one-dimensional Gaussian basis functions with angular momentum quantum numbers \(l, m, n\), and width parameter \(\gamma\), defined so that they are normalized over the real line (as in \Cref{eq:g_x_def_main_text}).
Let \(\varphi_{\bm{k}}\) denote the three-dimensional plane wave with momentum \(\bm{k} \in \mathbb{K}^3\),
\begin{equation}
  \varphi_{\bm{k}}(x, y, z)  = \frac{1}{L^{3/2}} e^{i \left( k_x x + k_y y + k_z z \right)}.
  \label{eq:three_d_plane_wave_proof_def}
\end{equation}
Note that, when discussing functions with three spatial dimensions, we use the notation \(\int_{\Omega} \cdots d\Omega\) to indicate integration over the computational unit cell \(\left[ -L/2, L/2 \right]^3\).
We also overload our previously defined notation for the inner product and norm of one-dimensional functions by letting \(\inp{f}{g}\) denote the inner product given by \(\int_{\Omega} \left(f^* g\right) d\Omega\) and \(\norm{f}\) denote the corresponding induced norm.

Informally, our goal is to represent \(g(x, y, z) = g_x(x) g_y(y) g_z(z)\) using a linear combination of plane waves defined in three spatial dimensions.
As in the one-dimensional case, we shall assume that the computational unit cell in three dimensions is large enough that we can approximately project onto a plane wave basis with negligible error.
Specifically, we define
\begin{align}
  \tilde{g}_{x}(x) & =
  \frac{1}{\widetilde{\mathcal{N}}_x} \sum_{k\in \mathbb{K}} \left( \int_{\mathbb{R}} g_x(u-x_0) \phi_k(u) du \right) \phi_k(x),
  \nonumber
  \\
  \tilde{g}_{y}(y) & =
  \frac{1}{\widetilde{\mathcal{N}}_y} \sum_{k\in \mathbb{K}} \left( \int_{\mathbb{R}} g_y(u-y_0) \phi_k(u) du \right) \phi_k(y),
  \nonumber
  \\
  \tilde{g}_{z}(z) & =
  \frac{1}{\widetilde{\mathcal{N}}_z} \sum_{k\in \mathbb{K}} \left( \int_{\mathbb{R}} g_z(u-z_0) \phi_k(u) du \right) \phi_k(z),
  \label{eq:tilde_g_x_proof_def}
\end{align}
with \(\widetilde{\mathcal{N}}_x, \widetilde{\mathcal{N}}_y, \widetilde{\mathcal{N}}_z \in \mathbb{R}_{>0}\) chosen so that \(\norm{\tilde{g}_x} = \norm{\tilde{g}_y} = \norm{\tilde{g}_z} = 1\).
Then we define
\begin{equation}
  \tilde{g}(x, y, z) = \tilde{g}_x(x) \tilde{g}_y(y) \tilde{g}_z(z).
  \label{eq:tilde_g_3d_def}
\end{equation}
Note that there would be no approximation if the translated Gaussian basis functions were exactly zero outside of the computational unit cell.

Now we are ready to state our result.
Note that the required kinetic energy cutoff and bond dimension are nearly unchanged from \Cref{thm:main_technical_result} except for a small factor in the dependence on \(\epsilon\).
\begin{restatable}[Efficient MPS representation of three-dimensional primitive Gaussian basis functions]{corollary}{threeDimBasisFunction}
  For arbitrary \(l, m, n \in \mathbb{Z}_{\geq 0}\), \(\gamma \in \mathbb{R}_{>0}\), \(L \in \mathbb{R}_{>0}\), let \(\mathbb{K}\) be defined as in \Cref{eq:K_def}, let \(\varphi_{\bm{k}}\) be defined as in \Cref{eq:three_d_plane_wave_proof_def}, let \(\widetilde{\mathcal{N}_x}\), \(\widetilde{\mathcal{N}_y}\), \(\widetilde{\mathcal{N}_z}\) be defined as in \Cref{eq:tilde_g_x_proof_def}, and let \(\tilde{g}\) be defined as in \Cref{eq:tilde_g_3d_def}.
  Let \(\ell = \max\left( l, m, n \right)\).

  For an arbitrary \(\epsilon \in \left( 0, 1 \right)\), set
  \begin{align}
    K
      & =
    2\sqrt{2 \gamma}\sqrt{
      2 \log \left( 2\sqrt{3}\epsilon^{-1} \right)
      + \log \left( 45 \right)
      + \log \left( 1 + \frac{2 \sqrt{\pi}}{L\sqrt{\gamma}} \right)
      + \ell \log(4 \ell)
    },
    \label{eq:K_choice_3d}
    \\
    m &
    = \left\lceil 4e^2 \left(   2 \log \left( 2\sqrt{3}\epsilon^{-1} \right)
    + \log \left( 45 \right)
    + \log \left( 1 + \frac{2 \sqrt{\pi}}{L\sqrt{\gamma}} \right)
    + \ell \log(4 \ell) \right)\right\rceil,
    \label{eq:m_choice_3d}
  \end{align}
  and let \(\mathbb{K}^3_{cut} = \left\{ k \in \mathbb{K} : k \leq K \right\}^3\).

  If \(\widetilde{\mathcal{N}_x} \geq 2/3\), \(\widetilde{\mathcal{N}_y} \geq 2/3\),  and \(\widetilde{\mathcal{N}_z} \geq 2/3\), then there exists degree \(m-1\) polynomials \(p_x\), \(p_y\), and \(p_z\) such that the following statements are true:

  Let
  \begin{equation}
    f(x, y, z) = \sum_{k \in \mathbb{K}_{cut}^3} p_x(k_x) p_y(k_y) p_z(k_z) e^{i \left( k_x x_0 + k_y y_0 + k_z z_0 \right)} \varphi_{\bm{k}}(x, y, z).
    \label{eq:f_corollary_def}
  \end{equation}
  Then
  \begin{equation}
    \norm{f} = 1
  \end{equation}
  and
  \begin{equation}
    D(\tilde{g}, f) \leq \epsilon.
  \end{equation}

  Furthermore, let \(\ket{f}\) denote the function \(f\) encoded in the standard representation used for first quantized simulation in a plane wave basis, i.e.,
  \begin{equation}
    \ket{f} = \sum_{\bm{k} \in \mathbb{K}^3_{cut}} p_x(k_x) p_y(k_y) p_z(k_z) e^{i \left( k_x x_0 + k_y y_0 + k_z z_0 \right)} \ket{\frac{k_x L}{2 \pi}}\ket{\frac{k_y L}{2 \pi}}\ket{\frac{k_z L}{2 \pi}}.
    \label{eq:3d_mps_approx_corollary}
  \end{equation}
  For an arbitrary \(n \in \mathbb{Z}\) such that \(n \geq 3 \lceil \log_2 \left|\mathbb{K}_{cut}\right| \rceil\), there exists an \(n\)-qubit matrix product state representation of \(\ket{f}\) with bond dimension at most \(2m + 3\).
  \label{cor:3d_basis_functions}
\end{restatable}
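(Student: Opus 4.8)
The plan is to reduce \Cref{cor:3d_basis_functions} to three applications of \Cref{thm:main_technical_result}, one per Cartesian direction, exploiting the fact that both $\tilde{g}$ (\Cref{eq:tilde_g_3d_def}) and the target $f$ (\Cref{eq:f_corollary_def}) factorize across $x$, $y$, and $z$. First I would dispense with the translation: for a primitive Gaussian centered at $x_0$ along the $x$ axis, the coefficient $\int_{\mathbb{R}} g_x(u - x_0)\phi_k(u)\,du$ equals the coefficient of the origin-centered function times a unit-modulus phase depending on $k$ and $x_0$. Hence $\tilde{g}_x$ in \Cref{eq:tilde_g_x_proof_def} is obtained from the origin-centered object $\tilde{g}_x^{(0)}$ defined by \Cref{eq:tilde_g_x_def} via a diagonal-in-$k$ unitary $U_x$ that multiplies the coefficient of $\phi_k$ by that phase; the analogous unitary relates the $x$-component of $f$ to its origin-centered counterpart, and likewise for $y$ and $z$. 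Because these phases have modulus one, the normalization constants $\widetilde{\mathcal{N}}_x, \widetilde{\mathcal{N}}_y, \widetilde{\mathcal{N}}_z$ are unaffected, and since the trace distance is invariant under $U_x \otimes U_y \otimes U_z$ it suffices to prove the fidelity bound with all functions re-centered at the origin.

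Next I would verify that the choices of $K$ and $m$ in \Cref{eq:K_choice_3d,eq:m_choice_3d} meet the hypotheses of the appendix form of \Cref{thm:main_technical_result} with $\epsilon$ replaced by $\epsilon/\sqrt{3}$ and with angular momentum $l$, $m$, or $n$: one has $2\log(2\sqrt{3}\,\epsilon^{-1}) = 2\log\bigl(2(\epsilon/\sqrt{3})^{-1}\bigr)$; since $x\mapsto x\log(4x)$ is increasing and $\ell = \max(l,m,n)$, the term $\ell\log(4\ell)$ dominates $l\log(4l)$, $m\log(4m)$, and $n\log(4n)$, so the bound \Cref{eq:K_choice} on $K$ holds for each direction; and $\lceil 4e^2(\cdots)\rceil \geq e^2K^2/(2\gamma)$ because $K^2/(2\gamma) = 4(\cdots)$, so the bound \Cref{eq:m_choice} on $m$ holds as well. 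Invoking \Cref{thm:main_technical_result} three times — legitimate since $\widetilde{\mathcal{N}}_x, \widetilde{\mathcal{N}}_y, \widetilde{\mathcal{N}}_z \geq 2/3$ — produces degree-$(m-1)$ polynomials $p_x, p_y, p_z$ and normalized one-dimensional functions $f_x^{(0)}, f_y^{(0)}, f_z^{(0)}$, supported on $\{\phi_k : |k| \leq K\}$, with $D(\tilde{g}_x^{(0)}, f_x^{(0)}) \leq \epsilon/\sqrt{3}$ and similarly along $y$ and $z$. Writing $f^{(0)} = f_x^{(0)} f_y^{(0)} f_z^{(0)}$ and $\tilde{g}^{(0)} = \tilde{g}_x^{(0)} \tilde{g}_y^{(0)} \tilde{g}_z^{(0)}$, factorization gives $\norm{f} = \norm{f^{(0)}} = 1$, and, setting $a = |\inp{\tilde{g}_x^{(0)}}{f_x^{(0)}}|^2$ and $b, c$ analogously, the elementary inequality $1 - abc \leq (1-a) + (1-b) + (1-c)$ for $a,b,c \in [0,1]$ yields $D(\tilde{g}, f)^2 = D(\tilde{g}^{(0)}, f^{(0)})^2 = 1 - abc \leq 3(\epsilon/\sqrt{3})^2 = \epsilon^2$.

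Finally I would establish the bond-dimension bound for $\ket{f}$ in \Cref{eq:3d_mps_approx_corollary}. The state is a product $\ket{f_x}\otimes\ket{f_y}\otimes\ket{f_z}$ across the three coordinate registers, so every cut between registers has Schmidt rank one and it suffices to bound the bond dimension within each register. The coefficient of $\ket{\frac{k_x L}{2\pi}}$ in the $x$-register is $p_x(k_x)\,e^{i k_x x_0}$; on the grid $k_x = 2\pi p/L$, restricted to either the nonnegative or the negative values of the signed index $p$, this is a degree-$(m-1)$ polynomial in $|p|$ multiplied by a pure exponential $\mu^{|p|}$ with $|\mu| = 1$, whose tensorization has bond dimension one. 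Running the proof of \Cref{cor:signed_mps_dimension} verbatim, but taking the element-wise (Hadamard) product of each of the three tensor-train pieces $T_{\geq 0}$, $T_{\leq 0}$, $T_{-0}$ with this bond-dimension-one exponential tensor — element-wise products of tensor trains multiply bond dimensions — still produces a signed tensor train of bond dimension at most $2m+3$ for the register, and padding to any register size $\geq \lceil\log_2|\mathbb{K}_{cut}|\rceil$ does not increase this; concatenating the three registers yields the claimed $n$-qubit matrix product state of bond dimension at most $2m+3$.

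The step I expect to require the most care is this last one: the translation phase $e^{i k_x x_0}$ must be shown not to inflate the bond dimension. The resolution is that this phase is exactly an exponential in the grid index, and hence free for tensor trains — bond dimension one on each of the positive and negative branches, since element-wise multiplication by a bond-dimension-one tensor train cannot raise the bond dimension — so the bound of \Cref{cor:signed_mps_dimension} carries over unchanged. The remaining bookkeeping — the $\sqrt{3}$ factor absorbed into $\epsilon$, and the use of $\ell = \max(l,m,n)$ so that a single $K$ and $m$ serve all three directions — is routine.
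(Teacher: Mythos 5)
Your proposal is correct and follows essentially the same route as the paper's proof: reduce to three applications of \Cref{thm:main_technical_result} at error $\epsilon/\sqrt{3}$ (using $\ell=\max(l,m,n)$ so one $K$ and $m$ serve all directions), absorb the translation into momentum-dependent phases that cancel in the inner product, bound the product fidelity via $1-abc\leq(1-a)+(1-b)+(1-c)$, and assemble the MPS by tensor products across registers and Hadamard products with bond-dimension-one exponential phase tensors. Your branch-by-branch treatment of the phase within the signed (sign--magnitude) encoding is in fact slightly more careful than the paper's one-line appeal to the Hadamard-product rule, since the phase tensor on the full signed register has Schmidt rank two across the sign cut; multiplying each of $T_{\geq 0}$, $T_{\leq 0}$, $T_{-0}$ separately, as you do, preserves the stated $2m+3$ bound exactly.
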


\begin{proof}
Our basic strategy for proving this corollary is to apply \Cref{thm:main_technical_result} to each Cartesian component with the appropriate error bound.
To that end, let \(f(x, y, z) = f_x(x)f_y(y)f_z(z)\) and \(g(x, y, z) = g_x(x)g_y(y)g_z(z)\) for some normalized \(f_x, f_y, f_z, g_x, g_y, g_z\).
Assume that \(D(f_x, g_x) \leq \epsilon_x\), \(D(f_y, g_y) \leq \epsilon_y\), and \(D(f_z, g_z) \leq \epsilon_z\) for \(\epsilon_x, \epsilon_y, \epsilon_z \in \left( 0, 1 \right)\).
Then, for each \(w \in \left\{ x, y, z \right\}\), we have
\begin{equation}
  \abs{\inp{f_w}{g_w}}^2 \geq 1 - \epsilon_w^2.
\end{equation}
Therefore,
\begin{equation}
  \abs{\inp{f}{g}}^2 =
  \abs{\inp{f_x}{g_x}}^2
  \abs{\inp{f_y}{g_y}}^2
  \abs{\inp{f_z}{g_z}}^2
  \geq
  \left(1 - \epsilon_x^2\right)
  \left(1 - \epsilon_y^2\right)
  \left(1 - \epsilon_z^2\right)
  \geq
  1 - \epsilon_x^2 - \epsilon_y^2 - \epsilon_z^2.
\end{equation}
This implies that
\begin{equation}
  D(f, g) \leq \sqrt{\epsilon_x^2 + \epsilon_y^2 + \epsilon_z^2}.
\end{equation}
Therefore, in order to bound \(D(\tilde{g}, f)\) by \(\epsilon\), it suffices to bound the error in approximating each component by \(\frac{\epsilon}{\sqrt{3}}\).

In order to apply \Cref{thm:main_technical_result}, we have to reduce to the case where we are approximating the untranslated basis functions, so we make the observation that
\begin{align}
  \tilde{g}_{x}(x) & =
  \frac{1}{\widetilde{\mathcal{N}}_x} \sum_{k\in \mathbb{K}} \left( \int_{\mathbb{R}} g_x(v-x_0) \phi_k(v) dv \right) \phi_k(x),
  \nonumber
  \\
                   & =
  \frac{1}{\widetilde{\mathcal{N}}_x} \sum_{k\in \mathbb{K}} \left( \int_{\mathbb{R}} g_x(v) \phi_k(v+x_0) dv \right) \phi_k(x),
  \nonumber
  \\
                   & =
  \frac{1}{\widetilde{\mathcal{N}}_x} \sum_{k\in \mathbb{K}} \left( \int_{\mathbb{R}} g_x(v) \phi_k(v) dv \right) e^{i k x_0} \phi_k(x).
\end{align}
Similarly, the translations in the \(y\) and \(z\) directions manifest as momentum dependent phases.
Now we consider evaluating \(D(\tilde{g}, f)\).
The only terms in the inner product between \(\tilde{g}\) and \(f\) that don't vanish are the ones with the same index \(\bm{k}\), and the phases that come from the translations exactly cancel with the phases in the definition of \(f\).
Therefore, we can take \(x_0 = y_0 = z_0 = 0\) without loss of generality when evaluating \(D(\tilde{g}, f)\).
We can therefore apply \Cref{thm:main_technical_result} to approximate each the basis functions for each Cartesian component, neglecting the desired translation, to within an error \(\frac{\epsilon}{\sqrt{3}}\) using the values of \(K\) and \(m\) set in \Cref{eq:K_choice_3d} and \Cref{eq:m_choice_3d}.
This tells us that \(f\), as defined in \Cref{eq:f_corollary_def}, is indeed an approximation of \(\tilde{g}\) to within a distance \(\epsilon\).
The normalization of the approximate basis functions for each spatial dimension implies that \(\norm{f} = 1\).

The final statement regarding the matrix product state representation follows from the matrix product state representations for the individual Cartesian components, combined with three useful properties of the tensor train decomposition.
First of all, exponential functions on an equispaced grid can be represented by tensor trains with a bond dimension of one~\cite{Oseledets2013-vy}.
Secondly, the Hadamard (element-wise) product of two tensor trains with bond dimensions \(m_1\) and \(m_2\) can be written as a tensor train with bond dimension most \(m_1 m_2\)\cite{Oseledets2011-xm}.
This allows us to freely multiply the matrix product states guaranteed by \Cref{thm:main_technical_result} by the phase factors required to produce \Cref{eq:3d_mps_approx_corollary} without increasing the bond dimension.
Thirdly, the tensor product of multiple tensor trains (or matrix product states) is still a tensor train (MPS) and and the bond dimensions of the new indices are \(1\).
Therefore, we can take the tensor product of the matrix product state representations for each Cartesian component to obtain the claimed matrix product state representation, completing the proof.
\end{proof}

\section{Review of canonical orthogonalization}
\label{app:canonical_orthogonalization}

Atom-centered Gaussian basis functions have a number of convenient features for electronic structure calculations, but they also have the disadvantage that they do not (in general) form an orthonormal basis.
Depending upon the choice of basis set, it is possible for a given collection of Gaussian type orbitals to be nearly linearly-dependent, especially for larger more accurate bases and large system sizes.
Even in purely classical calculations, this can lead to issues with numerical precision.
It is common to address these difficulties by implicitly or explicitly orthogonalizing the orbitals.
There are several orthogonalization procedures that are routinely used.
In this work, we assume the use of canonical orthogonalization, which we review below.

The goal of the canonical orthogonalization procedure is to transform the \(N_g\) normalized basis functions \(\left\{ g_j \right\}\) into a collection of \(N'_g \leq N_g\) orthonormal functions \(\left\{ \chi'_i \right\}\).
Ideally, these new functions would span the same space as the original set.
In practice, canonical orthogonalization may entail discarding some portion of the space where the original set of functions is poorly conditioned.
We set the threshold for this truncation in terms of a parameter \(\sigma \in \mathbb{R}_{>0}\), which we define below.

In general, the set of \(\left\{ g_j \right\}\) is not orthonormal and the overlap matrix \(S\), defined by
\begin{equation}
  S_{ij} = \left\langle g_i, g_j \right\rangle,
\end{equation}
is not the identity.
Following the usual convention, we would like to define our new orbitals \(\left\{ \chi'_i \right\}\) as linear combinations of the old orbitals specified by a matrix \(X\),
\begin{equation}
  \chi'_i = \sum_{j} X_{ji} g_j.
\end{equation}
A quick computation verifies that demanding that the new orbitals are orthonormal, i.e., that \(\left\langle\phi'_j, \phi'_k \right\rangle = \delta_{jk}\), is equivalent to demanding that
\begin{equation}
  X^\dagger S X  = \mathbb{I}.
  \label{eq:X_matrix_condition}
\end{equation}

The matrix \(X\) is not uniquely specified by \Cref{eq:X_matrix_condition} and different procedures for orthogonalization correspond to different choices for its construction.
In the canonical orthogonalization procedure, we take advantage of the fact that we can write
\begin{equation}
  S = U D U^\dagger
\end{equation}
for some unitary matrix \(U\) and some diagonal matrix \(D\) with non-negative entries.
Assuming the original basis functions are linearly independent, we can satisfy \Cref{eq:X_matrix_condition} by setting
\begin{equation}
  X = U D^{-1/2}.
\end{equation}
When the original basis is linearly dependent, then \(S\) has some eigenvalues that are exactly \(0\) and the inverse square root is not well defined.
Small eigenvalues of \(S\) are formally acceptable, but they will contribute to numerical difficulties.
We deal with both cases by defining a truncated \(N_g \times N'_g\) transformation matrix
\begin{equation}
  \tilde{X} = \tilde{U} \tilde{D}^{-1/2},
  \label{eq:tilde_X_def}
\end{equation}
where \(\tilde{D}\) and \(\tilde{U}\) are defined by removing the eigenvalues of \(S\) that are less than some cutoff \(\sigma\), along with their associated eigenvectors.
The resulting collection of \(N'_g\) orbitals defined by
\begin{equation}
  \chi'_i = \sum_{j} \tilde{X}_{ji} g_j
  \label{eq:chi_prime_def}
\end{equation}
form an orthonormal set.

Canonical orthogonalization prevents numerical issues by removing the poorly conditioned parts of the subspace spanned by the \(g_j\).
One implication of this is captured in the lemma below, which provides a useful bound on the coefficients that appear when taking (normalized) linear combinations of outputs from the canonical orthogonalization procedure.

\begin{restatable}[Bound on the coefficients obtained from canonical orthogonalization]{lemma}{canonicalOrthogonalizationBound}
  Let \(\left\{ g_j \right\}\) be a collection of \(N_g\) functions that are \(L^2\)-normalized over \(\mathbb{R}^3\) and let
  \begin{equation}
    \chi = \sum_{j=1}^{N_g} c_j g_j
  \end{equation}
  be a similarly normalized linear combination of the \(g_j\) obtained after using the canonical orthogonalization procedure with an eigenvalue cutoff of \(\sigma\), as reviewed in \Cref{app:canonical_orthogonalization}.

  Then the vector of coefficients, \(\bm{c}\), satisfies
  \begin{equation}
    \norm{\bm{c}}_2 \leq \sigma^{-1}.
  \end{equation}
  \label{lemma:canonical_orthogonalization_coefficient_blowup}
\end{restatable}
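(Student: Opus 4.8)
The plan is to reduce the statement to an elementary operator-norm bound on the canonical-orthogonalization transformation matrix $\tilde X$. First I would make precise what it means for $\chi$ to be ``obtained by the canonical orthogonalization procedure'': as reviewed in \Cref{app:canonical_orthogonalization}, the procedure produces an orthonormal set $\{\chi'_i\}_{i=1}^{N'_g}$ with $\chi'_i = \sum_j \tilde X_{ji} g_j$ and $\tilde X = \tilde U \tilde D^{-1/2}$ as in \Cref{eq:tilde_X_def}, and a molecular orbital is a normalized linear combination $\chi = \sum_{i=1}^{N'_g} a_i \chi'_i$. The coefficient vector $\bm c$ appearing in the statement is then the induced vector of coefficients with respect to the $g_j$, namely $\bm c = \tilde X \bm a$; this is the specific representative of $\chi$ in $\operatorname{span}\{g_j\}$ singled out by the procedure, which is the point of the lemma, since when the $g_j$ are linearly dependent an arbitrary solution of $\chi = \sum_j c_j g_j$ could have unbounded norm.

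Second, because the $\chi'_i$ are orthonormal, $1 = \norm{\chi}^2 = \sum_i |a_i|^2 = \norm{\bm a}_2^2$, so $\norm{\bm a}_2 = 1$. Hence $\norm{\bm c}_2 = \norm{\tilde X \bm a}_2 \le \norm{\tilde X}_{\mathrm{op}} \norm{\bm a}_2 = \norm{\tilde X}_{\mathrm{op}}$, and it only remains to bound the spectral norm of $\tilde X$.

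Third, I would compute $\tilde X^\dagger \tilde X = \tilde D^{-1/2} \tilde U^\dagger \tilde U \tilde D^{-1/2} = \tilde D^{-1}$, using that $\tilde U$ has orthonormal columns ($\tilde U^\dagger \tilde U = \mathbb{I}$ by construction, as $\tilde U$ is obtained by deleting columns of a unitary matrix). Therefore $\norm{\tilde X}_{\mathrm{op}}^2 = \norm{\tilde D^{-1}}_{\mathrm{op}} = (\min_i \tilde d_i)^{-1} \le \sigma^{-1}$, since every retained eigenvalue $\tilde d_i$ of the overlap matrix $S$ satisfies $\tilde d_i \ge \sigma$. This yields $\norm{\bm c}_2 \le \sigma^{-1/2}$, which is in fact slightly stronger than the claimed bound; any meaningful cutoff satisfies $\sigma \le 1$ (the overlap matrix of normalized functions has average eigenvalue $1$, so $\sigma > 1$ would discard the entire space and make the hypothesis vacuous), and hence $\norm{\bm c}_2 \le \sigma^{-1/2} \le \sigma^{-1}$, completing the argument.

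The estimates here are routine linear algebra, so I do not expect a genuine analytic obstacle; the one step that needs care is the first — correctly identifying $\bm c = \tilde X \bm a$ with $\norm{\bm a}_2 = 1$, which is exactly where the assumption that $\chi$ comes from canonical orthogonalization (as opposed to being an arbitrary combination of the raw $g_j$) is used. I would therefore state that identification explicitly, with a pointer back to \Cref{app:canonical_orthogonalization}, before carrying out the norm bound.
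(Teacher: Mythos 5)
Your proof is correct and follows essentially the same route as the paper's: both express $\chi$ in the orthonormal basis $\{\chi'_i\}$ produced by the procedure, identify $\bm{c} = \tilde{X}\bm{a}$ with $\norm{\bm{a}}_2 = 1$, and bound the result using the isometry property of $\tilde{U}$ and the eigenvalue cutoff on $\tilde{D}$. You additionally make explicit the sharper intermediate bound $\norm{\bm{c}}_2 \leq \sigma^{-1/2}$ and the (implicit, also needed by the paper) assumption $\sigma \leq 1$ used to relax it to $\sigma^{-1}$, which is a minor but welcome clarification rather than a different argument.
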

\begin{proof}

  By assumption, we \(\chi\) was obtained from the canonical orthogonalization procedure with an eigenvalue cutoff of \(\sigma\).
  As a consequence, we can write \(\chi\) as a linear combination of orthonormal functions \(\left\{ \chi'_j \right\}\), as defined in \Cref{eq:chi_prime_def}.
  In other words,
  \begin{equation}
    \chi = \sum_i c'_i \chi'_i = \sum_{ij} c'_i \tilde{X}_{ji} g_{j} = \sum_{ijk} c'_i \tilde{U}_{jk} \tilde{D}^{-1/2}_{ki}  g_j,
  \end{equation}
  where \(\tilde{X} = \tilde{U}\tilde{D}^{-1/2}\) is as defined in \Cref{eq:tilde_X_def}.
  The coefficients of the \(g_j\) are therefore given by the expression
  \begin{equation}
    c_j = \sum_{ik} c'_i \tilde{U}_{jk} \tilde{D}^{-1/2}_{ki}.
  \end{equation}
  In other words,
  \begin{equation}
    \bm{c} = \tilde{U} \tilde{D}^{-1/2} \bm{c'}.
  \end{equation}
  Since \(||\bm{c'}||_2 = 1\), \(\tilde{U}\) is an isometry, and \(0 < \tilde{D}^{-1/2}_{kk} \leq \sigma^{-1}\), we have that
  \begin{equation}
    \norm{\bm{c}}_2 \leq \sigma^{-1}.
  \end{equation}

\end{proof}

\section{Proof of \Cref{lemma:mo_as_mps} (\nameref*{lemma:mo_as_mps})}
\label{app:MO_error_bound_proof}

In this appendix, we make use of \Cref{eq:3d_mps_approx_corollary} to bound the bond dimension required to accurately represent a normalized linear combination of primitive Gaussian basis functions in three spatial dimensions centered at different atomic coordinates, e.g., a molecular orbital.
We begin with a formal restatement \Cref{lemma:mo_as_mps}, which is identical to the statement of the lemma in the main text except that we provided concrete bounds on \(K\) and \(M\) rather than using asymptotic notation.
\MOasMPS*

We also recall the definition of a molecular orbital \(\chi\),
\begin{equation}
  \chi(x, y, z) = \sum_{j=1}^{N_g} c_j g_j(x, y, z),
\end{equation}
where each \(g_j\) is a function of the form
\begin{equation}
  g_j(x, y, z) \propto \left( x - x_0 \right)^l \left( y - y_0 \right)^m \left( z - z_0 \right)^n e^{- \gamma \left( \left( x - x_0 \right)^2 + \left( y - y_0 \right)^2 + \left( z - z_0 \right)^2 \right)},
\end{equation}
normalized over \(\mathbb{R}^3\).
Following the statement of the lemma, we assume that this linear combination of primitive Gaussian basis functions has been obtained using a canonical orthogonalization procedure.
We review canonical orthogonalization in \Cref{app:canonical_orthogonalization}, but the key idea is that canonical orthogonalization guarantees that we end up with a linear combination of \(g_j\) functions that is not too poorly conditioned.
This implies the following lemma, which we recall from \Cref{app:canonical_orthogonalization},
\canonicalOrthogonalizationBound*

Recall the definition of the approximately projected molecular orbital \(\tilde{\chi}\) from \Cref{eq:chi_tilde_main_text},
\begin{equation}
  \tilde{\chi} \propto \sum_j c_j \tilde{g}_j,
  \label{eq:chi_tilde_main_recalled}
\end{equation}
where the \(g_j\) are the approximately projected Gaussian basis functions,
\begin{equation}
  \tilde{g}_j \propto \sum_{\bm{k} \in \mathbb{K}^3} \left(\int_{\mathbb{K}^{3}} \varphi_{\bm{k}}(x, y, z) g_j(x, y, z) d \Omega\right) \varphi_{\bm{k}}.
  \label{eq:g_tilde_recalled_MO}
\end{equation}
Both constants of proportionality are define by requiring that the corresponding functions be normalized over the computational unit cell.
In order to treat the impact of the normalization rigorously, we implicitly define the coefficients \(\tilde{c}_j\) by the following equation,
\begin{equation}
  \tilde{\chi}(x, y, z)
  =
  \sum_{j=1}^{N_g} \tilde{c}_j \tilde{g}_j(x, y, z).
  \label{eq:chi_tilde_def_redefined}
\end{equation}
Note that each \(\tilde{c}_j\) is obtained from the corresponding \(c_j\) by absorbing the normalization factors for \(\tilde{\chi}\) (in \Cref{eq:chi_tilde_main_recalled}) and the appropriate \(\tilde{g}_j\) (in \Cref{eq:g_tilde_recalled_MO}).
All of these normalization factors arise from the difference between integrating over the computational unit cell and integrating over \(\mathbb{R}^3\), and they approach one in the limit where \(L\) goes to infinity and these two integrals become identical.
\footnote{In fact, they approach one quickly since the \(g_j\) all decay exponentially at long distances.}
Therefore, for a sufficiently large value of \(L\), we must have
\begin{equation}
  \norm{\bm{\tilde{c}}}_2 \leq \sqrt{2}\sigma^{-1}.
  \label{eq:tilde_c_norm_bound}
\end{equation}

We will use our ability to efficiently approximate the individual primitive Gaussian basis functions to approximate their linear combinations.
For any \(\epsilon_{go} \in \left( 0, 1 \right)\), we can apply \Cref{cor:3d_basis_functions} to obtain an \(f_j\) that approximates each \(\tilde{g}_j\) to within \(\epsilon_{go}\) in the trace distance.
Furthermore, note that we can freely multiply the \(f_j\) by a phase without changing the degree of the polynomial approximation or the bond dimensions of the matrix product states we construct, so we can assume without loss of generality that \(\inp{f_j}{\tilde{g}_j} \in \mathbb{R}_{\geq 0}\).
We can then define \(\tau\) by the equations
\begin{align}
  \tau(x, y, z)                & = \norm{\tau_{unnormalized}}^{-1}\tau_{unnormalized}(x, y, z),
  \\
  \tau_{unnormalized}(x, y, z) & = \sum_{j=1}^{N_g} \tilde{c}_j f_j(x, y, z).
  \nonumber
\end{align}
By construction, \(\tau\) will be a normalized linear combination of momentum eigenfunctions for some momentum cutoff.
Our primary task is to understand how to set \(\epsilon_{go}\) in order to guarantee that \(D(\tau, \tilde{\chi}) \leq \epsilon\).
After determining this, we can translate the desired scaling of \(\epsilon_{go}\) into bounds on the momentum cutoff and bond dimension.

In order to bound \(D(\tau, \tilde{\chi})\), we first focus on bounding \(\left| \inp{\tilde{\chi}}{\tau} \right|^2\).
To do so, we use \Cref{lemma:simple_fidelity_bound}.
Taking \(a = \tilde{\chi}\) and \(b = \tau_{unnormalized}\), we have
\begin{equation}
  \left| \inp{\tilde{\chi}}{\tau}\right|^2 =
  \frac{1}{\norm{\tau_{unnormalized}}^2}\left|\inp{\tau_{unnormalized}}{\tilde{\chi}}\right|^2 \geq \frac{1 - 2 \norm{\Delta}}{\norm{\tau_{unnormalized}}^2},
\end{equation}
where \(\Delta = \tau_{unnormalized} - \tilde{\chi}\).
Using the triangle inequality, we can establish that
\begin{equation}
  \norm{\tau_{unnormalized}}^2 = \norm{\tilde{\chi} + \Delta}^2 \leq \left(\norm{\tilde{\chi}} + \norm{\Delta}\right)^2 = \left(1 + \norm{\Delta}\right)^2.
\end{equation}
Therefore, assuming for now that \(1 - 2 \norm{\Delta}\) is not negative, we can establish that
\begin{equation}
  \frac{1 - 2 \norm{ \Delta }}{\norm{ \tau_{unnormalized} }^2} \geq
  \frac{1 - 2 \norm{ \Delta }}{\left(1 + \norm{\Delta}\right)^2}.
\end{equation}
Using the fact that
\begin{equation}
  \frac{1}{\left(1 + x\right)^2} \geq 1 - 2x
  \label{eq:1_plus_x_squared_inequality}
\end{equation}
for \(x \geq 0\),\footnote{This inequality can be proven by starting with \(1 \geq 1 -2x^3 -3x^2\), rearranging to show that \(1 \geq \left(1 + x\right)^2(1-2x)\), and dividing by \((1+x)^2\).}, we can simplify further to find that
\begin{equation}
  \left| \inp{\tilde{\chi}}{\tau}\right|^2
  \geq
  \left(1 - 2 \norm{\Delta}\right)\left(1 - 4 \norm{\Delta}\right)
  \geq
  1 - 6 \norm{\Delta}.
  \label{eq:just_norm_Delta_bound}
\end{equation}
Note that this inequality holds by inspection if \(1 - 2 \norm{\Delta}\) is negative, so \Cref{eq:just_norm_Delta_bound} does not depend on this assumption.

Now that our task has been reduced to bounding the error in \(\norm{\Delta}\), let us expand this quantity as
\begin{equation}
  \norm{\Delta} = \norm{
    \sum_j \tilde{c}_j \left(f_j - \tilde{g}_j\right)
  }.
\end{equation}
Applying the triangle inequality, we have
\begin{equation}
  \norm{\Delta} \leq \sum_j \abs{\tilde{c}_j} \norm{f_j - \tilde{g}_j} =
  \sum_j \abs{\tilde{c}_j} \sqrt{\inp{f_j - \tilde{g}_j}{f_j - \tilde{g}_j}} =
  \sum_j \abs{\tilde{c}_j} \sqrt{2 - \inp{f_j}{\tilde{g}_j} - \inp{\tilde{g}_j}{f_j}}.
  \label{eq:getting_close_delta_norm}
\end{equation}
If we have \(D(f_j, \tilde{g_j}) \leq \epsilon_{go}\), then
\begin{equation}
  \left| \inp{f_j}{\tilde{g}_j} \right|^2 \geq 1 - \epsilon_{go}^2.
\end{equation}
By assumption, we also have that \(\left| \inp{f_j}{\tilde{g}_j} \right|  = \inp{f_j}{\tilde{g}_j}\), allowing us to obtain the expression
\begin{equation}
  \inp{f_j}{\tilde{g}_j} = \inp{\tilde{g}_j}{f_j} \geq \sqrt{1 - \epsilon_{go}^2} \geq 1 - \epsilon_{go}.
  \label{eq:epsilon_j_bound_useful}
\end{equation}
Combining \Cref{eq:getting_close_delta_norm} and \Cref{eq:epsilon_j_bound_useful}, we have
\begin{equation}
  \norm{\Delta}
  \leq
  \sqrt{2\epsilon_{go}} \sum_j \abs{\tilde{c}_j}
  =
  \sqrt{2\epsilon_{go}} \norm{\bm{\tilde{c}}}_1
  \leq
  \sqrt{2\epsilon_{go} N_g} \norm{\bm{\tilde{c}}}_2.
  \label{eq:intermediate_Delta_e_gaus_bound}
\end{equation}

Combining \Cref{eq:tilde_c_norm_bound} with \Cref{eq:intermediate_Delta_e_gaus_bound}, we find that
\begin{equation}
  \norm{\Delta} \leq 2\sigma^{-1}\sqrt{\epsilon_{go} N_g}.
\end{equation}
Returning to \Cref{eq:just_norm_Delta_bound} and the definition of \(D(\tilde{\chi}, \tau)\), we have that
\begin{equation}
  D(\tilde{\chi}, \tau) = \sqrt{1 - \norm{\inp{\tilde{\chi}}{\tau}}^2} \leq
  \sqrt{12\sigma^{-1}\sqrt{\epsilon_{go} N_g}}.
\end{equation}
Therefore, we can guarantee that \(D(\tilde{\chi}, \tau) \leq \epsilon\) by requiring that
\begin{equation}
  \epsilon_{go} \leq \frac{\epsilon^4 \sigma^2}{144 N_g}.
  \label{eq:epsilon_go_requirements}
\end{equation}

We can apply \Cref{cor:3d_basis_functions} to each \(\tilde{g}_j\) to obtain to an \(f_j\) with the desired error bound.
Let \(K_j\) denote the corresponding kinetic energy cutoffs (from \Cref{eq:K_choice_3d}) and let \(m_j\) denote the corresponding values of \(m\) (from \Cref{eq:m_choice_3d}).
Recall the definition of \(K\) from the statement of the lemma,
\begin{equation}
  K
  =
  2\sqrt{2 \Gamma}\sqrt{
    2 \log \left( \frac{288\sqrt{3}N_g}{\epsilon^4 \sigma^2} \right)
    + \log \left( 45 \right)
    + \ell \log(4 \ell)
  }.
  \label{eq:K_choice_3d_recalled}
\end{equation}
Assuming that \(L\) is sufficiently large that \(\log\left( 1 + \frac{2 \sqrt{\pi}}{L \sqrt{\gamma}} \right) \leq 1\), then \(K \geq K_j\) for all \(j\) by inspection, and all of the \(f_j\) are composed of linear combinations of plane waves with momenta \(\bm{k} \in \mathbb{K}_{cut}^3\).

Now, let \(m = \max_j m_j\).
Examining \Cref{eq:m_choice_3d}, we see that
\begin{equation}
  m \leq
  \left\lceil 4e^2 \left(
  2 \log \left( \frac{288\sqrt{3}N_g}{\epsilon^4 \sigma^2} \right)
  + \log \left( 45 \right)
  + \ell \log(4 \ell) \right)\right\rceil.
\end{equation}
\Cref{cor:3d_basis_functions} tells us that we can represent each \(f_j\) as a matrix product state \(\ket{f_j}\) with bond dimension at most \(2m + 3\).
Using standard properties of matrix product states, we can therefore represent \(\tau\) as a matrix product state
\begin{equation}
  \ket{\tau} = \frac{1}{\norm{\tau_{unnormalized}}} \sum_{j=1}^{N_g} \tilde{c}_j \ket{f_j},
\end{equation}
whose bond dimension is at most \(M = N_g\left( 2m + 3 \right)\).
As with the \(\ket{f_j}\) obtained from \Cref{cor:3d_basis_functions}, \(\ket{\tau}\) is in the standard representation we use for first-quantized simulation in a plane wave basis.
Simplifying the inequality slightly, we find that
\begin{equation}
  M = N_g \left( 2m + 3  \right) \leq 8e^2 N_g \left(
  2 \log \left( \frac{288\sqrt{3}N_g}{\epsilon^4 \sigma^2} \right)
  + \ell \log(4 \ell)  + 4\right),
\end{equation}
establishing the desired bound on the bond dimension.

We can determine the number of qubits required by combining the expression for the momentum cutoff (\(K\)) from the statement of the lemma (or \Cref{eq:K_choice_3d_recalled}) with the statement from \Cref{cor:3d_basis_functions} that any number of qubits (\(n\)) such that \(n \geq 3 \left\lceil \log_2 \left| \mathbb{K}_{cut} \right| \right \rceil\) suffices.
Recall the definition of \(\mathbb{K}_{cut}\) in terms of \(K\),
\begin{equation}
  \mathbb{K}_{cut} = \left\{ k \in \mathbb{K} : \abs{k} \leq K \right\},
\end{equation}
with
\begin{equation}
  \mathbb{K} = \left\{ \frac{2 \pi p}{L} : p \in \mathbb{Z}\right\}.
\end{equation}
We see that 
\begin{equation}
  \left| \mathbb{K}_{cut} \right| = 2 \left\lfloor \frac{KL}{2 \pi} \right\rfloor + 1.
\end{equation}
Asymptotically, we can simplify to find that we require
\begin{equation}
  n \sim 3 \log_2 \left( KL \right)
\end{equation}
qubits, as claimed.

\section{Pedagogical summary of state preparation and unitary synthesis}
\label{app:gadget_appendix}

Here we present a pedagogical summary of the method proposed in Ref.~\citenum{Low2018-uu} for arbitrary state preparation and unitary synthesis, which builds on the fundamental concepts introduced in Ref.~\citenum{Gidney2018-xg,Shende2005-vm,Kliuchnikov2013-il}, as well as an application to the preparation of matrix product states developed in Ref.~\citenum{Fomichev2023-vs}. 
We also summarize the method of Ref.~\citenum{Berry2018-ey} for preparing the antisymmetric initial states required for first quantized simulation. 
Ultimately, we combine these techniques, together with results for Gaussian-plane-wave projections, to produce a full cost accounting for the end-to-end implementation of a first-quantized Slater determinant wavefunction. 
Each section of this appendix builds directly on the previous sections to provide a reference for practical implementations. 
Readers familiar with these topics may wish to skip ahead to the later sections, referring back as needed.

\subsection{Circuit primitives}

Two circuit primitives, a multi-controlled-NOT gate and a controlled multi-qubit SWAP gate, are the dominant sources of Toffoli costs in all of the following state preparation and unitary synthesis routines. Here we describe how both of these circuit primitives can be implemented and provide exact Toffoli counts for each operation.

\subsubsection{Multi-controlled-NOT gate}

Suppose that we have a main register of $n$ qubits and that we wish to apply a bit-flip operation to an auxiliary qubit conditioned on the main register being in the computational basis state $\ket{k}$. We denote this multi-controlled-NOT gate as
\begin{align}
C^n_k(X) = \ket{k}\bra{k}\otimes X + \sum_{k'\neq k}\ket{k'}\bra{k'} \otimes \mathbb{I}.
\end{align}
\begin{figure}
    \centering
    \includegraphics[width=\textwidth]{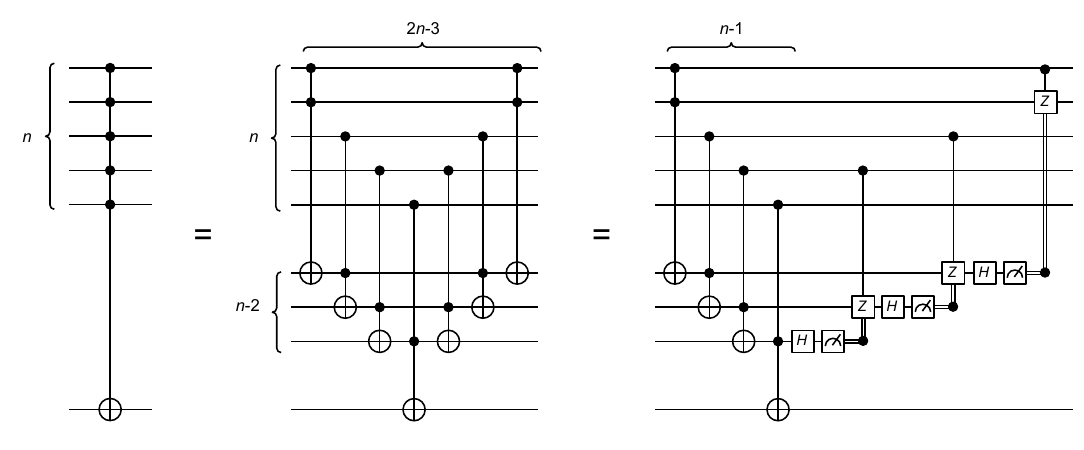}
    \caption{A quantum circuit primitive for the multi-controlled-NOT operation $C^n(X)$, requiring $2n-3$ Toffoli gates and a total of $2n-1$ qubits. The uncomputation staircase is then replaced with measurements and Clifford gates as described in ~\cite{Gidney2018-xg}, reducing the Toffoli cost to $n-1$.}
    \label{fig:cprim1}
\end{figure}
When the bitstring $\ket{k}=\ket{11\cdots1}$ we can simply write $C^n(X)$; any other control bitstring can be obtained by conjugation with Pauli $X$ gates on the main register. A circuit to implement this operation is shown in Figure \ref{fig:cprim1} (a), based on the generic multi-controlled gate construction shown in Figure 4.10 of ~\cite{Nielsen2012-ym}. This circuit uses an additional `work' register of $n-2$ qubits, and requires $n-2$ Toffoli gates for each of the computation and uncomputation steps on the work register, plus a single Toffoli to write to the target register, for a total of $2n-3$ Toffoli gates. The `uncomputation' staircase can also be implemented by measuring each of the work qubits in the $x$-basis and applying Clifford gates dependent on the measurement outcome, as shown in Figure 3 of ~\cite{Gidney2018-xg}, which reduces the Toffoli cost to
\begin{align}
\mathtt{Toffoli}(C^n(X)) = n-1.
\end{align}
Note that a multi-controlled-$Z$ gate can also be implemented by a change of basis on the target qubit, i.e., conjugation of $C^n(X)$ with Hadamard gates on the target qubit.

\subsubsection{Controlled multi-qubit SWAP gate}

Suppose that we have two auxiliary registers of $b$ qubits each, and we wish to swap the states on each register controlled on the value of a single qubit in the main register. We write the action of the multi-qubit SWAP gate on any states $\ket{\phi_0}$ and $\ket{\phi_1}$ of the auxiliary registers as
\begin{align}
S\ket{\phi_0}\otimes\ket{\phi_1} = \ket{\phi_1}\otimes\ket{\phi_0},
\end{align}
\begin{figure}
    \centering
    \includegraphics{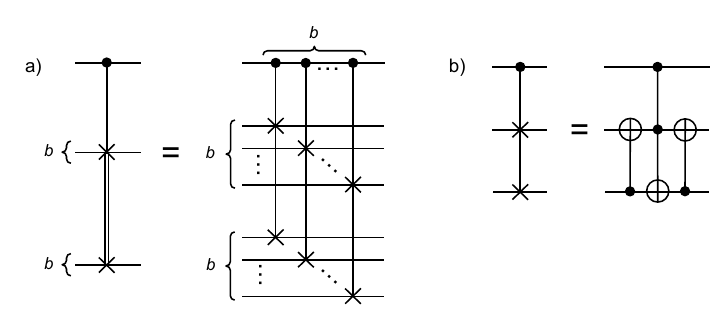}
    \caption{(a) A decomposition of a controlled multi-qubit SWAP gate into a sequence of $b$ controlled SWAP gates (a.k.a. Fredkin gates). We have used a double-line notation to indicate the exchange of multi-qubit registers. A log-depth version of this circuit is described in Ref.~\citenum{Low2018-uu}. (b) The Fredkin gate decomposes into a single Toffoli gate and two CNOT gates.}
    \label{fig:cprim2}
\end{figure}
and the controlled version as 
\begin{align}
C(S) = \ket{0}\bra{0}\otimes\mathbb{I} + \ket{1}\bra{1}\otimes S.
\end{align}
This controlled operation can be performed by swapping pairs of qubits on each register, resulting in the linear sequence of controlled SWAP gates (a.k.a. Fredkin gates) in Figure \ref{fig:cprim2} (a). This can be reduced to logarithmic depth using techniques discussed in Ref.~\citenum{Low2018-uu}, however the construction in Figure \ref{fig:cprim2} (a) is sufficient to obtain the Toffoli count. Each controlled SWAP gate decomposes into a single Toffoli gate and two CNOT gates as in Figure \ref{fig:cprim2} (b), resulting in
\begin{align}
\mathtt{Toffoli}(C(S)) = b.
\end{align}

\subsection{Quantum multiplexor (QMUX) gates}

We now introduce the ``quantum multiplexor'' (QMUX) gates, which are the main engine of arbitrary quantum state preparation. Suppose that we have two registers of qubits -- a main register of $n$ qubits and an auxiliary register of $b$ qubits -- and we wish to apply a different unitary $U_k$ on the auxiliary register dependent on the computational basis state $\ket{k}$ on the main register.
We write this gate as
\begin{align}
    M = \sum_{k=0}^{2^n-1}\ket{k}\bra{k}\otimes U_k.
\end{align}
It is plain to see that this kind of operation can easily generate maximal entanglement between the registers, and these multi-qubit entangling operations are the fundamental building block for state preparation and unitary synthesis. This is because the state preparation routine is built up using multi-controlled rotation multiplexors, the construction of which involves a data lookup oracle multiplexor which writes a different bitstring state to an auxiliary register for each value of $\ket{k}$. As we shall see, this data lookup oracle is the dominant source of Toffoli costs for arbitrary state preparation.

Two possible implementations of a quantum multiplexor gate are summarized here. The first, called the SELECT variant, uses multi-controlled-NOT gates, and the second, called the SWAP variant, uses additional qubit registers and controlled multi-qubit SWAPs. While the arbitrary state preparation routines can be built entirely using either one of these primitives, Low \textit{et al.}~\cite{Low2018-uu} have shown that a reduction in the total Toffoli count can be achieved using a combination of both types of primitives.

\subsubsection{SELECT variant}
\begin{figure}
    \centering
    \includegraphics{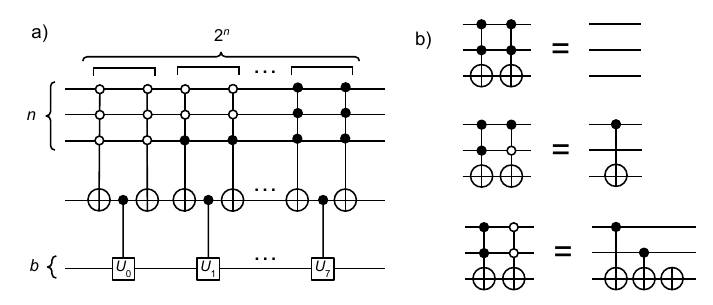}
    \caption{(a) SELECT implementation of $M$, using a single ancilla qubit and $b$ auxiliary qubits. (b) Replacing pairs of Toffoli gates with CNOT and NOT gates in the uncomputation and recomputation steps, the key idea behind unary iteration, reduces the Toffoli cost asymptotically by a factor of $n$.}
    \label{fig:qmux1}
\end{figure}
The first multiplexor implementation, called the SELECT variant, is constructed as follows. Let $C^n_k(X_a)$ denote the multi-controlled-NOT gate flip to an additional ancilla qubit labelled by $a$ given that the main register is in state $\ket{k}$. Following this the controlled-$U_k$ gate $C_a(U_k)$ is applied to the auxiliary register with control from the ancilla qubit, with subsequent uncomputation of $C^n_k(X_a)$. A sequence of $2^n$ such operations implements the QMUX gate,
\begin{align}
    M = \prod_{k=0}^{2^n-1}C^n_k(X_a) C_a(U_k) C^n_k(X_a).
\end{align}
The circuit construction for this operation is shown in Figure \ref{fig:qmux1} (a). Each pair of $C^n(X)$ is implementable with $n-1$ Toffolis, by cancelling Toffoli gates from each staircase and performing the uncomputation steps by the `measure-and-fix-up' approach. The total Toffoli cost is then
\begin{align}
\mathtt{Toffoli}(M) = 2^n\cdot\mathtt{Toffoli}(C^n(X))=2^n(n-1),
\end{align}
in addition to the cost of the controlled-$U_k$ gates. 

However, upon inspection of Figures \ref{fig:qmux1} (a) and \ref{fig:cprim1} (a), one observes further redundancy in the uncomputation and re-computation steps for neighboring pairs of multi-controlled-NOT gates. 
The Toffoli count can be asymptotically improved using the technique in Ref.~\citenum{Babbush2018-tb} known as unary iteration, which boils down to replacing pairs of neighboring Toffoli gates with sequences of CNOT and NOT gates, as shown in Figure \ref{fig:qmux1} (b). 
The Toffoli count is thereby reduced to 
\begin{align}
\mathtt{Toffoli}(M) = 2^n-1.
\end{align}
The total number of qubits required is 
\begin{align}
\mathtt{Qubit}(M) = 2n+b-1
\end{align}
clean qubits.
Given a promise that the main register is never in a state \(\ket{k}\) for any \(k \geq L\), Ref.~\citenum{Babbush2018-tb} also explains how we can reduce the Toffoli cost to \(L-1\).
For simplicity, we do not explain this further optimization here.

\subsubsection{SWAP variant}
\begin{figure}
    \centering
    \includegraphics{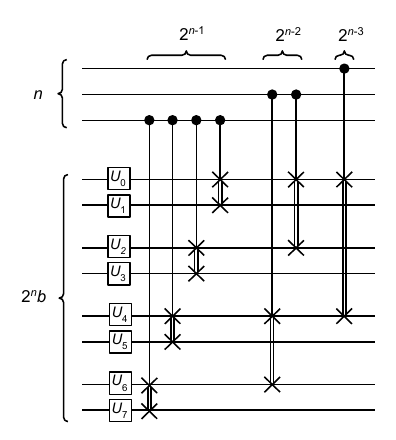}
    \caption{SWAP implementation of $M$, using $2^n$ auxiliary registers of $b$ qubits each.}
    \label{fig:qmux2}
\end{figure}
Now we detail an alternative implementation of a multiplexor gate, called the SWAP variant, which implements the action of $M$ on any initial state $\ket{\phi}$ of the auxiliary register. This variant assumes access to $2^n$ copies of the $b$-qubit state $\ket{\phi}$, requiring $2^nb$ auxiliary qubits. First, each of the unitaries $U_0,...,U_{2^n-1}$ are applied on separate registers. A network of controlled multi-qubit SWAPs is then applied with control from separate qubits of the main register. This brings the state $U_k\ket{\phi}$ from auxiliary register $k$ onto the first auxiliary register on the condition that the main register is in the state $\ket{k}$. More precisely, defining $C_j(S[k,k'])$ to be a controlled multi-qubit swap gate between registers $k$ and $k'$ with control from qubit $j$, we apply 
\begin{align}\label{eq:swap_qmux}
\bar{S}=\prod_{j=1}^n\prod_{k=0}^{2^{j-1}-1}C_j(S[2k\cdot2^{n-j},(2k+1)\cdot2^{n-j}]).
\end{align}
This notation is somewhat opaque. However, the circuit diagram in Figure \ref{fig:qmux2} helps to clarify what this what we mean in Eq.~\eqref{eq:swap_qmux}. The swap network functions somewhat in the spirit of a binary search tree, where the state of each qubit of the main register defines a branching choice over the auxiliary registers. We now have that
\begin{align}
\bar{S}\ket{k}\otimes U_{0}\ket{\phi} \otimes\cdots\otimes U_{2^n-1}\ket{\phi} = \ket{k}\otimes U_{k}\ket{\phi} \otimes\ket{\text{garbage}_k}.
\end{align}
In many cases, the `garbage' on the additional auxiliary registers can be safely ignored when applying an operation in-between applications of $M$ and $M^\dag$, so long as that operation does not interfere with $M^\dag$ uncomputing the garbage.

Each qubit labelled by $j$ on the main register acts as control for $2^{j-1}$ controlled multi-qubit SWAP gates, each of which requires $b$ Toffoli gates. The total Toffoli cost is therefore
\begin{align}
\mathtt{Toffoli}(\bar{S}) = \sum_{j=1}^n 2^{j-1}\cdot\mathtt{Toffoli}(C(S)) = (2^n-1)b,
\end{align}
and the number of clean qubits is 
\begin{align}
\mathtt{Qubit}(\bar{S}) = 2n+2^nb-2.
\end{align}

\subsubsection{SELSWAP (hybrid) variant}
\begin{figure}
    \centering
    \includegraphics[width=\textwidth]{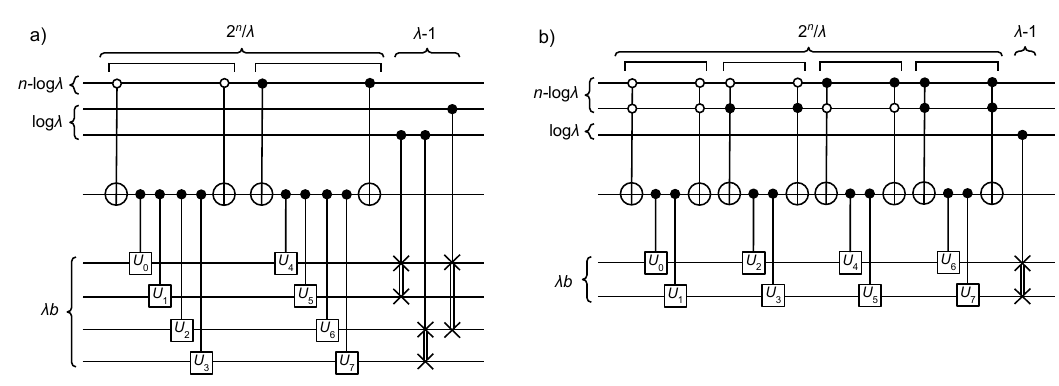}
    \caption{Hybrid SELSWAP implementation of $M$ for an example with $n=3$ and two different choices of $\lambda$: (a) $\lambda=4$,  and (b) $\lambda=2$.}
    \label{fig:qmux3}
\end{figure}
We now discuss how one may `hybridize' the SELECT and SWAP variants using $0\leq \lambda \leq 2^n$ additional registers for some \(\lambda\) that is a power of \(2\),  i.e. $\lambda b$ auxiliary qubits. 
We will split the main register into two components labelled upper ($u$) and lower ($l$), where the upper register consists of $n_u=n-\log\lambda$ qubits and lower register has $n_l=\log\lambda$ qubits. We can then write a bitstring state on the main register as $\ket{k}=\ket{k_u}\otimes\ket{k_l}$. We begin by applying a multi-controlled-NOT gate $C^{n_u}_{k_u}(X_a)$ with control from the upper component of the main register, followed by $\lambda$ controlled unitaries $C_a(U_{k_uk_l})$  on each of the auxiliary registers indexed by $k_l$, and subsequent uncomputation of $C^{n_u}_{k_u}(X_a)$. We would like the state $U_{k_uk_l}\ket{\phi}$ to be moved to the first auxiliary register conditional on the state $\ket{k_u}\otimes\ket{k_l}$ on the main register. This is achieved by the SWAP network $\bar{S}$ consisting of $\lambda-1$ controlled multi-qubit SWAPs with control from the lower component of the main register. Putting these operations together, we write
\begin{align}
    M = \bar{S}\prod_{k_u=0}^{2^n/\lambda-1}C^{n_u}_{k_u}(X_a)\left[\prod_{k_l=0}^{\lambda-1}C_a(U_{k_uk_l})\right]C^{n_u}_{k_u}(X_a),
\end{align}
which implements the action of the multiplexor,
\begin{align}
M\ket{k}\otimes\ket{0}\otimes\ket{\phi}^{\otimes \lambda} = \ket{k}\otimes\ket{0}\otimes U_k\ket{\phi}\otimes\ket{\text{garbage}_k},
\end{align}
as desired. Once again the dense notation becomes much clearer with the help of a circuit diagram. This operation is depicted in Figure \ref{fig:qmux3} for an example with $n=3$ and two different choices of $\lambda$. The SELECT step requires $2^{n+1}/\lambda$ applications of the multi-controlled-NOT circuits $C_{k_u}^{n_u}(X_a)$, which can be efficiently implemented via unary iteration. The second step requires $\lambda-1$ controlled multi-qubit SWAPs. The combined Toffoli cost is therefore  
\begin{align}
\mathtt{Toffoli}(M) = 2^n/\lambda - 1 + (\lambda-1)b,
\end{align}
plus the cost of the controlled-$U$ gates, with a number of clean qubits
\begin{align}
\mathtt{Qubit}(M) = 2n + \lambda b -\log\lambda-1.
\end{align}
Choosing $\lambda\sim 2^\frac{n}{2}$ minimizes the combined Toffoli cost. This procedure can be modified to use `dirty' qubits left over from previous algorithmic steps, meaning that they do not have to be in a pure state, and are returned to their prior state after the operation. This modification requires two applications of the SELECT circuit and four applications of the SWAP network (for further details we refer to Ref.~\citenum{Low2018-uu}, or the exposition in Ref.~\citenum{Berry2019-qo}). With this modification the Toffoli count becomes 
\begin{align}
\mathtt{Toffoli}(M) = 2^{n+1}/\lambda - 2 + 4(\lambda-1)b,
\end{align}
and requires a number of qubits
\begin{align}
\mathtt{Qubit}(M) = 2n + (\lambda+1) b -\log\lambda-1,
\end{align}
of which up to $\lambda b$ may be dirty qubits.

\subsection{Phase rotations via reversible addition}

The reversible addition circuit described in Ref.~\citenum{Gidney2018-xg}, acting on two \(b\)-qubit registers in the computational basis states \(\ket{x}\) and \(\ket{y}\), implements
\begin{align}
    A\ket{x}\otimes\ket{y} = \ket{x}\otimes\ket{y+x \mod 2^b}.
\end{align}
This circuit requires $b$ additional clean qubits to store the temporary values of a sequence of logical-AND operations between pairs of qubits on each register, each of which is computed using a single Toffoli gate. These values are uncomputed using the same `measure-and-fix-up' technique that was used for the multi-controlled-NOT gates. The adder circuit therefore requires $b$ Toffoli gates. The adder circuit can also be given control from an ancilla qubit, which increases the Toffoli cost to $2b$, as discussed in Ref.~\citenum{Gidney2018-xg}.

This circuit can be used to approximate a phase rotation on a target qubit in state $\ket{z}=\alpha\ket{0}+\beta\ket{1}$,
\begin{align}
e^{i\varphi Z}\ket{z} = \alpha\ket{0}+\beta e^{i\varphi}\ket{1},
\end{align}
using two auxiliary $b$-qubit registers. We will first define an approximate phase angle in binary. Given a computational basis state $\ket{y}$ on $b$ qubits, we define the `phase angle'
\begin{align}
    \tilde{\varphi}(y) \equiv \frac{2\pi y}{2^b},
\end{align}
which can approximate any continuous variable angle $\varphi\in [0,2\pi)$ with the appropriate choice of $y$ up to precision determined by the number of qubits,
\begin{align}
    \left|\varphi-\tilde{\varphi}(y)\right| \leq \frac{\pi}{2^b}.
\end{align}
The second key ingredient is the preparation of a Fourier, or `phase gradient', state on $b$ qubits as
\begin{align}
    \ket{f} = \frac{1}{\sqrt{2^b}}\sum_{y=0}^{2^b-1}e^{-i\tilde{\varphi}(y)}\ket{y}.
\end{align}
Now consider that the substitution $y'=y+x\mod 2^b$ implies
\begin{align}
    A\ket{x}\otimes\ket{f} & = \frac{1}{\sqrt{2^b}}\sum_{y=0}^{2^b-1}e^{-i\tilde{\varphi}(y)}\ket{x}\otimes\ket{y+x \mod 2^b}\\
    & = \frac{1}{\sqrt{2^b}}\sum_{y'=0}^{2^b-1}e^{-i\tilde{\varphi}(y'-x)}\ket{x}\otimes\ket{y'}\\
    & = e^{i\tilde{\varphi}(x)}\ket{x}\otimes\ket{f}.
\end{align}

Then to implement the phase rotation, starting in the state $\ket{z}\otimes\ket{0}^{\otimes b}\otimes\ket{f}$, we write the bitstring $\ket{x}$ to the first auxiliary register using the unitary $B$:
\begin{align}
B\ket{z}\otimes\ket{0}^{\otimes b}\otimes\ket{f} = \ket{z}\otimes\ket{x}\otimes\ket{f}.
\end{align}
Then applying the controlled adder circuit $C(A)$ to the auxiliary registers with control from the target qubit yields
\begin{align}
    C(A)B\ket{z}\otimes\ket{0}^{\otimes b}\otimes\ket{f} = (\alpha\ket{0}+\beta e^{i\tilde{\varphi}(x)}\ket{1})\otimes\ket{x}\otimes\ket{f}.
\end{align}
Finally uncomputation of the bitstring yields
\begin{align}
    B^\dag C(A) B\ket{z}\otimes\ket{0}^{\otimes b}\otimes\ket{f} = e^{i\tilde{\varphi}(x)Z}\ket{z}\otimes\ket{0}^{\otimes b}\otimes\ket{f},
\end{align}
as desired.

\subsection{$Z$-rotation multiplexor}

Suppose that we have a quantum register of $n$ qubits, for which we want to apply a $Z$-rotation gate to the $n$th qubit with a different angle for each possible computational basis state $\ket{k}$ of the first $n-1$ qubits,
\begin{align}
    Q = \sum_{k=0}^{2^{n-1}-1}\ket{k}\bra{k}\otimes e^{i\varphi_kZ}.
\end{align}
An approximate implementation of this multiplexor operation is described in Ref.~\citenum{Low2018-uu} as follows.
We require at least $3b$ clean auxiliary qubits to execute the multi-controlled rotation: $b$ qubits to store a Fourier state, $b$ qubits on which a bitstring $\ket{x_k}$ approximating the rotation angle $\varphi_k$ is temporarily stored, and $b$ qubits to store the temporary values computed by the addition circuit.
To begin, the first additional register of $b$ qubits is placed into a superposition of the bitstring states $\ket{x_k}$ conditional on the state $\ket{k}$ of the first $n-1$ qubits on the main register. This is achieved using a `data lookup oracle' multiplexor, written as
\begin{align}
    O = \sum_{k=0}^{2^{n-1}-1}\ket{k}\bra{k}\otimes B_k,
\end{align}
where the `bitstring operators' $B_k$ are strings of single-qubit Pauli rotations $\in\{X,\mathbb{I}\}$, such that
\begin{align}
B_k \ket{0}^{\otimes b} = \ket{x_k}.
\end{align}

We assume prior preparation of a Fourier state on the second register of $b$ qubits, so given some starting state $\ket{\psi}$ of the main register, we have the starting state
\begin{align}
    \ket{\psi}\otimes\ket{0}^{\otimes b}\otimes\ket{f}.
\end{align}
We can also write the state $\ket{\psi}$ on the main register in terms of the unique bitstring states $\ket{k}$ of the first $n-1$ qubits and the corresponding states $\ket{z_k}$ of the last qubit as
\begin{align}
    \ket{\psi} = \sum_{k=0}^{2^{n-1}-1}\psi_k\ket{k}\otimes\ket{z_k},
\end{align}
so that our combined registers are in the state
\begin{align}
    \sum_{k=0}^{2^{n-1}-1}\psi_k\ket{k}\otimes\ket{z_k}\otimes\ket{0}^{\otimes b}\otimes\ket{f}.
\end{align}
We then use the data lookup oracle $O$ to write each bitstring $\ket{x_k}$ encoding $\varphi_k$ to the first auxiliary register dependent on the value of $\ket{k}$, to place our registers into the state
\begin{align}
    \sum_{k=0}^{2^{n-1}-1}\psi_k\ket{k}\otimes\ket{z_k}\otimes\ket{x_k}\otimes\ket{f}.
\end{align}
We now apply the adder circuit to the auxiliary registers controlled on the value of $\ket{z_k}$, which implements the desired phase rotation on the target qubit via phase kickback, before uncomputing $\ket{x_k}$ by applying $O^\dag$, yielding the transformation
\begin{align}
    \sum_{k=0}^{2^{n-1}-1}\psi_k\ket{k}\otimes e^{i\tilde{\varphi}(x_k)Z}\ket{z_k}\otimes\ket{0}^{\otimes b}\otimes\ket{f}.
\end{align}
The `unlookup' $O^\dag$ can be efficiently implemented via a measurement-based approach as described in Appendix C of Ref.~\citenum{Berry2019-qo}, requiring $2^{n}/\lambda + \lambda-2$ Toffolis. 

To summarize, we can approximate the action of the multiplexor $Q$ on the main register with a minimum of roughly $2n+3b$ clean qubits, enabling the computation
\begin{align}
    \ket{\psi}\otimes\ket{0}^{\otimes b}\otimes\ket{f} \rightarrow \tilde{Q}\ket{\psi}\otimes\ket{0}^{\otimes b}\otimes\ket{f}.
\end{align}
We may bound the spectral norm error $\|Q-\tilde{Q}\|$ by observing that
\begin{align}
Q-\tilde{Q} = \sum_{k=0}^{2^{n-1}-1}\ket{k}\bra{k}\otimes (e^{i\varphi_kZ}-e^{i\tilde{\varphi}_kZ}),
\end{align}
which is a diagonal operator with largest singular value bounded by
\begin{align}
\|Q-\tilde{Q}\| &\leq \max_k |e^{i\varphi_k}-e^{i\tilde{\varphi}_k}| \\
&\leq \max_k|\varphi_k-\tilde{\varphi}_k| \\ 
\implies \|Q-\tilde{Q}\| &\leq \pi 2^{-b}.
\end{align}
Neglecting the cost of the Fourier state preparation, the total Toffoli cost is the sum of the contributions from the data oracle multiplexors and the reversible adder circuit. Using the SELSWAP variant of the multiplexor, with additional clean qubits, we have
\begin{align}
\mathtt{Toffoli}(\tilde{Q}) &= \mathtt{Toffoli}(O) + \mathtt{Toffoli}(C(A)) +\mathtt{Toffoli}(O^\dag) \\ &= 
2^{n+1}/\lambda + (b+1)(\lambda-1)+2b-3.
\end{align}
The total number of qubits required is
\begin{align}
\mathtt{Qubit}(\tilde{Q}) = 2n + (\lambda+2) b -\log\lambda-3.
\end{align}

\subsection{Arbitrary state preparation}
\label{app:arbitrary_state_prep}
Given a state $\ket{\psi}$ defined on $n$ qubits, a simple way to think about constructing a circuit to prepare \(\ket{\psi}\) from the \(\ket{0}\) state is to work backwards.
I.e., we can imagine starting with $\ket{\psi}$ and ``disentangling'' the qubits one-by-one via multi-controlled rotations.
Starting with the least significant qubit, we may write
\begin{align}
    \ket{\psi} = \alpha\ket{\psi_0}\otimes\ket{0} + \beta\ket{\psi_1}\otimes\ket{1},
\end{align}
where $\ket{\psi_0}$ and $\ket{\psi_1}$ are states defined on the remaining $n-1$ qubits.
Our goal is to find a quantum gate $Q$ which maps the above state to a new state
\begin{align}
    Q\ket{\psi} = \ket{\psi'}\otimes\ket{0}.
\end{align}
We can also express \(\ket{\psi}\) in the computational basis as
\begin{align}
    \ket{\psi} = \sum_{k=0}^{2^{n-1}-1}\psi_k\ket{k}\otimes(\alpha_k\ket{0}+\beta_k\ket{1}),
\end{align}
where we have chosen $|\alpha_k|^2+|\beta_k|^2=1$.

The single-qubit state on the least significant qubit can be rewritten without loss of generality as
\begin{align}
    \alpha_k\ket{0}+\beta_k\ket{1} = e^{i\varphi_kZ}e^{i\theta_kY}\ket{0},
\end{align}
so it suffices to apply a controlled-Z rotation followed by a controlled-Y rotation to the least significant qubit to rotate it into the zero state.
A different pair of $Z$ and $Y$ rotations must be applied to this target qubit for each computational basis state of the remaining $n-1$ qubits, so a multi-controlled rotation is required:
\begin{align}
    Q = \sum_{k}\ket{k}\bra{k}\otimes e^{-i\theta_kY}e^{-i\varphi_kZ}.
\end{align}
It is then easily verified that
\begin{align}
    Q\ket{\psi} = \sum_k\psi_k\ket{k}\otimes\ket{0}.
\end{align}
We then define $Q_p$ to be the multi-controlled rotation that disentangles the \(p\)th qubit from the first \(p-1\) qubits in this fashion.
It follows from the above that we can write
\begin{align}
    \ket{0}^{\otimes n} & = \left(\prod_{p=1}^{n}Q_p\right)\ket{\psi}
    \\
    \implies \ket{\psi} & = \left(\prod_{p=1}^{n}Q_p\right)^\dag\ket{0}^{\otimes n}.
\end{align}

It was shown in Ref.~\citenum{Shende2005-vm} that this construction is asymptotically optimal for arbitrary state preparation, and thus the problem is reduced to the efficient implementation of the multiplexor gates $Q_p$. In reality we use the $Z$-rotation multiplexor implementation outlined in the previous subsection, with the $Y$-rotation multiplexor obtained via a single-qubit change of basis, to prepare the state
\begin{align}
    |\tilde{\psi}\rangle & = \left(\prod_{p=1}^{n}\tilde{Q}_p\right)^\dag\ket{0}^{\otimes n}.
\end{align}
We may bound the error in the state preparation by considering that
\begin{align}
\tilde{Q}_p = Q_p + (\tilde{Q}_p-Q_p),
\end{align}
and that $Q_p$ and $\tilde{Q}_p$ are both unitary, from which we may use standard spectral norm identities to arrive at
\begin{align}
\|\prod_p Q_p - \prod_p \tilde{Q}_p\|\leq \sum_p\|Q_p-\tilde{Q}_p\|.
\end{align}
Using our previous result for the errors in the rotation multiplexors we may then bound
\begin{align}
\|\ket{\psi}-|\tilde{\psi}\rangle\| &\leq 2\pi n\cdot 2^{-b},
\end{align}
where \(b\) is the number of bits used to specify the rotation angles.

The Toffoli cost of state preparation is given by
\begin{align}
\mathtt{Toffoli}(|\tilde{\psi}\rangle) = 2\sum_{p=1}^n 2^{p+1}/\lambda_p + (b+1)(\lambda_p-1)+2b-3
\end{align}
and a number of qubits
\begin{align}
\mathtt{Qubit}(|\tilde{\psi}\rangle) = 2n + (\lambda_n+3) b -\log\lambda_n-3.
\end{align}
We can minimize the above expression by setting $\lambda_p\propto2^\frac{p}{2}$. We shall assume the same proportionality constant $\mu$ for all of the $\lambda_p$. In reality however we must choose $\lambda_p$ to be an exact power of two, i.e.,
\begin{align}
\lambda_p = 2^{\lceil \log (\mu2^{p/2}) \rceil},
\end{align}
which implies that
\begin{align}
\mu2^\frac{p}{2}\leq\lambda_p\leq\mu2^{\frac{p}{2}+1}.
\end{align}
We can in this case strictly bound the Toffoli count by
\begin{align}
\mathtt{Toffoli}(|\tilde{\psi}\rangle) < 4(1+\sqrt{2})\left(\frac{1}{\mu}+\mu (b+1)\right)\cdot 2^{\frac{n}{2}+\frac{3}{2}}+2n(b-4),
\end{align}
which is minimized by $\mu=(b+1)^{-1/2}$, in which case
\begin{align}
\mathtt{Toffoli}(|\tilde{\psi}\rangle) < (1+\sqrt{2})\left(2^{n+7}(b+1)\right)^{\frac{1}{2}}+2n(b-4).
\end{align}
The number of (clean) qubits is then bounded by
\begin{align}
\mathtt{Qubit}(|\tilde{\psi}\rangle) < \frac{3n}{2} + \frac{2^{\frac{n}{2}+1}b}{\sqrt{b+1}} + \frac{1}{2}\log(b+1) + 3b -4.
\label{eq:state_prep_app_qubit_count}
\end{align}
Note that these bounds are not expressed as whole numbers; the number of Toffolis and qubits will be bounded by the floor of these quantities.

\subsection{Unitary synthesis}
\label{app:unitary_synthesis_description}

Following Refs.~\citenum{Low2018-uu,Kliuchnikov2013-il}, we can implement an arbitrary $n-$qubit unitary $U$ using an additional ancilla qubit via a product of reflection operators as
\begin{align}
    W = \prod_{k=0}^{2^n-1} (\mathbb{I}-2\ket{w_k}\bra{w_k}),
\end{align}
where $\ket{w_k}$ is defined in terms of the $k$-th column $\ket{u_k}$ of $U$, and the $k$-th bistring state $\ket{k}$, as
\begin{align}
    \ket{w_k} = \frac{1}{\sqrt{2}}(\ket{1}\otimes\ket{k}-\ket{0}\otimes\ket{u_k}).
\end{align}
To see why this works, consider that
\begin{align}
(\mathbb{I}-2\ket{w_k}\bra{w_k})\ket{1}\otimes\ket{k} &= \ket{0}\otimes\ket{u_k} \\
\implies W\ket{1}\otimes\ket{k} &= \ket{0}\otimes\ket{u_k},
\end{align}
where the last line follows since the terms in $\ket{k'}\neq\ket{k}$ simply evaluate to zero. The same arguments also show that $W\ket{0}\otimes\ket{u_k}=\ket{1}\otimes\ket{k}$, and it is shown in Ref.~\citenum{Kliuchnikov2013-il} that
\begin{align}
W = \ket{0}\bra{1}\otimes U + \ket{1}\bra{0}\otimes U^\dag.
\end{align}
This means that the action of $U$ on an input state $\ket{\psi}$ is implemented by
\begin{align}
W\ket{1}\otimes\ket{\psi} = \ket{0}\otimes U\ket{\psi}.
\end{align}

Given a unitary \(W_k\) such that \(\ket{w_k} = W_k \ket{0}\), we can write the reflection about \(\ket{w_k}\) as
\begin{align}
    \mathbb{I} - 2\ket{w_k}\bra{w_k} = W_k(\mathbb{I}-2\ket{0}\bra{0}^{\otimes n+1})W_k^\dag = W_kR_0W_k^\dag,
\end{align}
where $R_0 = \mathbb{I} - 2\ket{0}\bra{0}$ is the reflection about the zero state on \(n+1\) qubits.
It is easily seen that $W_k$ can be constructed from the unitaries
\begin{align}
U_k\ket{0}^{\otimes n} = \ket{u_k}, \quad B_k\ket{0}^{\otimes n} = \ket{k},
\end{align}
by adding a control from the ancilla qubit. 
We can then write
\begin{align}
W_k = C_0(U_{k})C_1(B_{k})(XHX\otimes\mathbb{I}_n),
\end{align}
where $C_0(U_k)$ is controlled on the $\ket{0}$ state of the ancilla qubit, while $C_1(B_k)$ is controlled on the $\ket{1}$ state.

The reflection $R_0$ is equivalent to a multi-controlled-$Z$ gate on the $n+1$-th qubit with control from the first $n$ qubits, which is implementable with $n$ Toffolis and $n-1$ work qubits using the multi-controlled-NOT circuit primitive in Figure \ref{fig:cprim1}. The $C(B_{k})$ operators require only Clifford gates, and we can compile the $n-$qubit unitaries $U_k$ to prepare each $\ket{u_k}$ from the all-zero state via the state preparation routine described above. 

We may now derive a bound on the 2-norm error incurred due to approximations in state preparation circuits. Let
\begin{equation}
    \tilde{W} = \prod_{k}(\mathbb{I}-2\ket{\tilde{w}_k}\bra{\tilde{w}_k}),
\end{equation}
where each \(\ket{w_k}\) is defined by replacing the desired \(\ket{u_k}\) with some state \(\ket{\tilde{u_k}}\).
Then we can bound the error in \(\tilde{W}\) by 
\begin{equation}
    \norm{W - \tilde{W}} \leq \sum_k \norm{\left( \mathbb{I} - 2 \ketbra{w_k} \right) - \left( \mathbb{I} - 2 \ketbra{\tilde{w_k}} \right)} = 2 \sum_k  \norm{\ketbra{w_k}- \ketbra{\tilde{w_k}}}.
\end{equation}
We can bound the spectral norm of each term \(\left( \ketbra{w_k} - \ketbra{\tilde{w_k}} \right)\) by 
\begin{align}
\|\ketbra{w_k} - \ketbra{\tilde{w}_k}\| &= \|\ket{w_k}\left(\bra{w_k}-\bra{\tilde{w}_k}\right) + \left(\ket{w_k}-\ket{\tilde{w}_k}\right)\bra{\tilde{w}_k}\| \\
&\leq 2\|\ket{w_k}-\ket{\tilde{w}_k}\|
\end{align}

to show that 
\begin{equation}
    \norm{W - \tilde{W}} \leq 4 \sum_k \|\ket{w_k}-\ket{\tilde{w}_k}\|.
\end{equation}

We have previously bounded the errors in arbitrary state preparation via the rotation multiplexors by
\begin{align}
\|\ket{u_k}-\ket{\tilde{u}_k}\| < 2\pi n\cdot2^{-b},
\end{align}
which allows us to bound
\begin{equation}
    \norm{\ket{w_k} - \ket{\tilde{w}_k}} < 2 \pi \sqrt{2}n\cdot2^{-b},
\end{equation}
which in turn implies that 
\begin{equation}
    \norm{W - \tilde{W}} \leq 8 \pi \sqrt{2}n\cdot2^{n-b}.
\end{equation}

Adding control to each $\tilde{U}_k$ is achieved by adding an extra control qubit to the controlled adder circuit. This also requires a work qubit to store the logical-AND value of both the original and the new control qubit. The controlled adder circuit will then only perform addition if the work qubit is in the $\ket{1}$ state, i.e., if both the original control qubit and the newly added control qubit are simultaneously in the $\ket{1}$ state. This modification increases the Toffoli cost of each controlled adder circuit from $2b$ to $3b$. Then choosing $\mu=(b+1)^{-1/2}$ as before to minimize the Toffoli count, we have that
\begin{align}
\mathtt{Toffoli}(\tilde{W}) < 2^{\frac{3n}{2}+\frac{9}{2}}(1+\sqrt{2})(b+1)^\frac{1}{2} + 2^{n}n(8b-15).
\end{align}

The work qubits for the data lookup oracles and controlled additions can be re-used for the zero-state reflections. Therefore only three additional qubits are required in addition to those required for state preparation (see \cref{eq:state_prep_app_qubit_count}), namely the ancilla qubit and the two new control qubits. 

One may also add control to the unitary synthesis by controlling the zero-state reflections. This is achieved by extending the `Toffoli staircase', at the cost of a single additional work qubit and an additional Toffoli gate per reflection operator, i.e., a sub-dominant addition of $2^n$ Toffoli gates.

\subsection{MPS state preparation}
\label{app:mps_state_prep_details}

Using the approach introduced in Ref.~\citenum{Schon2005-hc} and costed out in terms of Toffoli gates in Ref.~\citenum{Fomichev2023-vs}, a matrix product state 
\begin{align}
\ket{\tau} = \sum_{\left\{ s \right\}, \left\{ \alpha \right\}} A_{\alpha_1}^{s_1} A_{\alpha_1 \alpha_2}^{s_2} \cdots A_{\alpha_{n-1}}^{s_n} \ket{s_1 s_2 \cdots s_n}
\end{align}
can be encoded into a quantum circuit by encoding its tensors into unitary operators. To see how this encoding scheme works, consider the tensor $A^{s_j}_{\alpha_{j-1}\alpha_j}$ of the MPS at site $j$, with left and right bond dimensions equal to $m_{j-1}$ and $m_j$, which differ by up to a factor of two. We define 
\begin{align}
\bar{m}_j \equiv \max(2^{\lceil\log m_{j-1}\rceil},2^{\lceil\log m_j\rceil}),
\end{align}
and pad the tensor with zeros so that it has $2\times \bar{m}_j\times \bar{m}_j$ entries, with $\bar{m}_j$ now being an exact power of two. Grouping the site index $s_j$ with the bond index $\alpha_{j-1}$ allows us to write the tensor at site $j$ as a rectangular $2\bar{m}_j\times \bar{m}_j$ matrix,
\begin{align}
A^{s_j}_{\alpha_{j-1}\alpha_j} = \sum_{\alpha_{j-1}s_j}^{2\bar{m}_j}\sum_{\alpha_j}^{\bar{m}_j} a^{s_j}_{\alpha_{j-1}\alpha_j} \ket{\alpha_{j-1}}\otimes\ket{s_j}\bra{\alpha_{j}}.
\end{align}
Note that only the first $m_j$ columns are non-zero, the rest being padded with zero entries. We will assume that the MPS is in left-orthogonal form. This means that, denoting the above matricized tensor with $A\equiv A^{s_j}_{\alpha_{j-1}\alpha_j}$, we have
\begin{align}
A^\dag A = \sum_{\alpha_j=1}^m\ket{\alpha_j}\bra{\alpha_j} = \mathbb{I}_{m_j}\oplus\bm{0}_{\bar{m}_j - m_j}.
\end{align}
This implies that the first $m_j$ columns of $A$ are a set of orthonormal vectors. Hence we may construct a $2m\times2m$ unitary $U$ that implements 
\begin{align}
U\ket{0}\otimes\ket{\psi} = A\ket{\psi}
\end{align}
by choosing the first $m_j$ columns identical to those of $A$ and the remaining $2\bar{m}_j-m_j$ columns to be any set of orthonormal vectors in the kernel space.

Then defining
\begin{align}
\bar{U}^{[j]}\equiv \mathbb{I}_{2}^{\otimes j-\log \bar{m}_j-1}\otimes U^{[j]} \otimes \mathbb{I}_{2}^{\otimes n-j},
\end{align}
we obtain the desired unitary product
\begin{align}
\bar{U}^{[1]}\cdots\bar{U}^{[n]}\ket{0}^{\otimes n} = \sum_{\left\{ s \right\}, \left\{ \alpha \right\}} A_{\alpha_1}^{s_1} A_{\alpha_1 \alpha_2}^{s_2} \cdots A_{\alpha_{n-1}}^{s_n} \ket{s_1 s_2 \cdots s_n} = \ket{\tau},
\label{eq:mps_synth}
\end{align}
which is illustrated in Figure \ref{fig:mps_diagram} of the main text. Note that the tensor product construction in Equation \ref{eq:mps_synth} requires no additional ancilla qubits.

The unitary $U^{[j]}$ can be implemented via a sequence of $m_j$ reflection operators about each of the column vectors, following the unitary synthesis technique of ~\cite{Low2018-uu} described above in Appendix \ref{app:unitary_synthesis_description}. The error incurred in the MPS preparation is then bounded by
\begin{align}
\|\ket{\tau}-|\tilde{\tau}\rangle\| &\leq \sum_{j=1}^n \|W^{[j]}-\tilde{W}^{[j]}\|\\
\implies \|\ket{\tau}-|\tilde{\tau}\rangle\| &< 2^{\frac{7}{2}-b}\sum_{j=1}^nm_j\log (2\bar{m}_j).
\end{align}
Using the tradeoff ratio $\mu=(b+1)^{-1/2}$ in the state preparation (see Appendix \ref{app:arbitrary_state_prep}), the Toffoli cost for the MPS preparation is bounded by
\begin{align}
\mathtt{Toffoli}(\ket{\tilde{\tau}}) < \sum_{j=1}^n 32(1+\sqrt{2})(b+1)^\frac{1}{2}m_j\bar{m}_j^\frac{1}{2} + (8b-15)m_j\log (2\bar{m}_j).
\label{eq:mps_state_prep_app_toff_count}
\end{align}

Note that if $m_j = \text{poly}(n)$ then the Toffoli count grows only polynomially in $n$. Contrast this scaling with that of arbitrary state preparation, which has a Toffoli cost that is exponential in $n$. Also note that since each column vector has $2m_{j-1}$ nonzero entries, then one may further optimize the Toffoli count of the state preparation. For brevity we omit this further optimization here.
\subsection{Antisymmetrization}

First quantized simulations require fermionic antisymmetry to be enforced at the start of the computation, after which all time-evolved states correctly preserve this property. Here we detail the method introduced in Ref.~\citenum{Berry2018-ey} for preparing an antisymmetrized register of bitstring states before explaining the application of this technique to Slater determinant state preparation in the last section.

Concretely, let 
\begin{align}
|\vec{x}\rangle = \ket{x_1}\otimes\cdots\otimes\ket{x_\eta}
\end{align}
be a tensor product of the computational basis states $\ket{x_p}$ on particle registers $p=1,...,\eta$. We assume that the integers represented by the basis states are sorted, i.e., $x_1<x_2<...<x_\eta$.
We aim to prepare the state
\begin{align}
|\vec{x}_\text{AS}\rangle = \frac{1}{\sqrt{\eta!}}\sum_{\sigma}(-1)^{\pi(\sigma)}\ket{\sigma(\vec{x})},
\end{align}
where $\sigma$ runs over all $\eta!$ possible permutations of the particle registers, \(\ket{\sigma(\vec{x})}\) denotes the state \(\ket{\vec{x}}\) with the registers permuted according to \(\sigma\), and $\pi(\sigma)$ is the parity of the permutation $\sigma$. The approach we shall introduce makes use of a sorting network, which is built up from comparator circuits. A comparator circuit $C$ simply compares the values of two bitstrings $\ket{x_p}$ and $\ket{x_q}$, and swaps their positions if they are in the wrong order. The comparator also stores the outcome of the comparison in a separate register in order to make the operation reversible. The action of $C$ on a pair of computational basis states is given by
\begin{align}
C\ket{x_p}\otimes\ket{x_q}\otimes\ket{0} = \begin{cases}
\ket{x_p}\otimes\ket{x_q}\otimes\ket{0} & x_p \leq x_q, \\
\ket{x_q}\otimes\ket{x_p}\otimes\ket{1} & x_p > x_q.
\end{cases}
\end{align}

A sorting network $\bar{C}$ consists of a sequence of these pairwise comparator circuits arranged to sort an arbitrary product of computational basis on \(\eta\) registers.
For example, Batcher's odd-even mergesort requires $O(\eta\log^2\eta)$ comparators, many of which can be executed in parallel, reducing the circuit depth to $O(\log^2\eta)$. Note that each comparator circuit must write the outcome of its comparison to a separate qubit, so in this case a register of $O(\eta\log^2\eta)$ additional qubits are required to store the comparator values.

Let \(\ket{\vec{y}}\) be a product of computational basis states that arises from applying some permutation \(\ket{\sigma_y}\) to the state \(\ket{\vec{x}}\) (which is sorted in increasing order).
Then let \(\ket{y}\) be the computational basis state defined implicitly by the expression
\begin{equation}
    \bar{C} \ket{\vec{y}} \ket{0} = \ket{\vec{x}} \ket{y}.
\end{equation}
In other words, each bit of \(\ket{y}\) specifies whether or not one of the comparators had to apply a transposition when sorting \(\ket{\vec{y}}\) into \(\ket{\vec{x}}\).
\(\bar{C}\) is reversible, so we could also apply the transposition \(\sigma_y\) to \(\ket{x}\) using \(\bar{C}^\dagger\),
\begin{equation}
    \bar{C}^{\dagger} \ket{\vec{x}} \ket{y} = \ket{\vec{y}} \ket{0} = \sigma_y \ket{\vec{x}} \otimes \ket{0}.
\end{equation}

Since the desired antisymmetric state is a superposition over different permutations of \(\ket{\vec{x}}\), there exists some state \(\ket{\upsilon} = \frac{1}{\sqrt{\eta !}} \sum_{y} \ket{y}\) such that
\begin{align}
\bar{C}^\dag\ket{\vec{x}}\otimes(\frac{1}{\sqrt{\eta!}}\sum_y\ket{y}) &= \frac{1}{\sqrt{\eta!}}\sum_y \ket{\sigma_y(\vec{x})}\otimes\ket{0}^{\otimes \eta\log^2\eta} \\ &= \ket{\vec{x}_\text{AS}}\otimes\ket{0}^{\otimes \eta\log^2\eta}.
\end{align}
By the sum over \(y\), we mean that we sum over all \(y\) corresponding to unique permutations.
This is almost the desired state, except that it is missing the correct phase factors. We can add a controlled phase gate after each swap to obtain a modified reverse sorting network \(\bar{\bar{C}}\) such that 
\begin{align}
    \bar{\bar{C}}^\dag\ket{\vec{x}}\otimes(\frac{1}{\sqrt{\eta!}}\sum_y\ket{y}) &= \frac{1}{\sqrt{\eta!}}\sum_y(-1)^{\pi(\sigma_y)}\ket{\sigma_y(\vec{x})}\otimes\ket{0}^{\otimes \eta\log^2\eta} \\ &= \ket{\vec{x}_\text{AS}}\otimes\ket{0}^{\otimes \eta\log^2\eta}.
\end{align}
Then the problem of preparing $\ket{\vec{x}_\text{AS}}$ reduces to the problem of preparing the state $\ket{\upsilon}$.

We now explain how to prepare the state $\ket{\upsilon}$ on the control register using a forward sorting network and a projective measurement. 
The key idea is that, given an unsorted input state, which corresponds to a particular permutation of the corresponding sorted state, the forward comparator network will write a bitstring to the control register that enacts the permutation when applied in reverse. 
The precise values of the permuted elements are not important, as only their ordering matters for determining the action of the sorting network, so the same bitstring state $\ket{y}$ corresponds to the same permutation regardless of the exact values. 
For example, swapping the elements $(3,1)\rightarrow(1,3)$ is exactly the same permutation as $(2,1)\rightarrow(1,2)$.

Given that the values are not important, it is sufficient to prepare an input state that is an equal superposition of states with all possible orderings over $\eta$ arbitrary elements, such that each ordering appears with equal likelihood. Let $f$ be some number such that $f\geq\eta$. 
We prepare a `seed' register in an equal superposition of all $f^\eta$ product states over $\eta$ registers such that the state on each register is one of $f$ possible bitstrings. 
The seed register then requires $\eta\log f$ qubits. We may group these terms into two categories: those that contain duplicate elements on different registers (`collisions'), and those that do not contain duplicate elements (``collision-free''), so we can write the state on the seed register as the sum of these two grouped terms:
\begin{align}
\ket{s} = \ket{s_\text{free}} + \ket{s_\text{coll}}
\end{align}

The relative number of strings in the second grouping decreases with increasing $f$ according to
\begin{align}
|\braket{s_\text{coll}}{s_\text{coll}}|^2 = \frac{\eta(\eta-1)}{2f}.
\end{align}
Hence, applying a projective measurement onto the subspace of $\ket{s_\text{free}}$ has a success probability of greater than one half when $f>\eta^2$. The state $\ket{s_\text{free}}$ then has the stated property that each ordering over non-duplicate elements appears with equal likelihood. Given that the state recorded on the output register of the comparator network is independent of the particular values of the bitstrings, we must have that
\begin{align}
\bar{C}\ket{s_\text{free}}\otimes\ket{0}^{\otimes\eta\log^2\eta} = \ket{s_\text{sort}}\otimes\ket{\upsilon}.
\end{align}

We may now discard $\ket{s_\text{sort}}$ since it is disentangled from $\ket{\upsilon}$. Note that we may equivalently perform the projective measurement onto the collision-free subspace after applying $\bar{C}$ to $\ket{s}$, which does not cause any interaction between the subspaces. In this case the projective measurement is easy to implement with $\eta-1$ comparison operations between neighboring registers, each of which writes the value 1 to a separate qubit on a register of $\eta-1$ qubits if the bitstrings are identical (hence recording a collision). Then measuring the all-zero state on this register implies projection onto the collision-free subspace, and hence projection onto $\ket{s_\text{sort}}\otimes\ket{\nu}\otimes\ket{0}^{\otimes \eta-1}$.

The Toffoli cost for preparing $\ket{\vec{x}_\text{AS}}$ is dominated by the sorting steps. As discussed in Ref.~\citenum{Berry2018-ey}, the optimal sorting networks use a number of comparators that scales as \(\mathcal{O}(\eta \log \eta)\) and each comparator requires \(\mathcal{O}(\log N)\) Toffoli gates, so the overall cost scales as \(\mathcal{O}(\eta \log \eta \log N)\).
In general this scaling is subdominant to the other steps that we consider in our state preparation scheme and the constant factors are relatively small, so we usually neglect this cost entirely.

 \subsection{Slater determinant state preparation}
 \label{app:sd_preparation}

\begin{figure}
    \centering
    \includegraphics[width=0.375\textwidth]{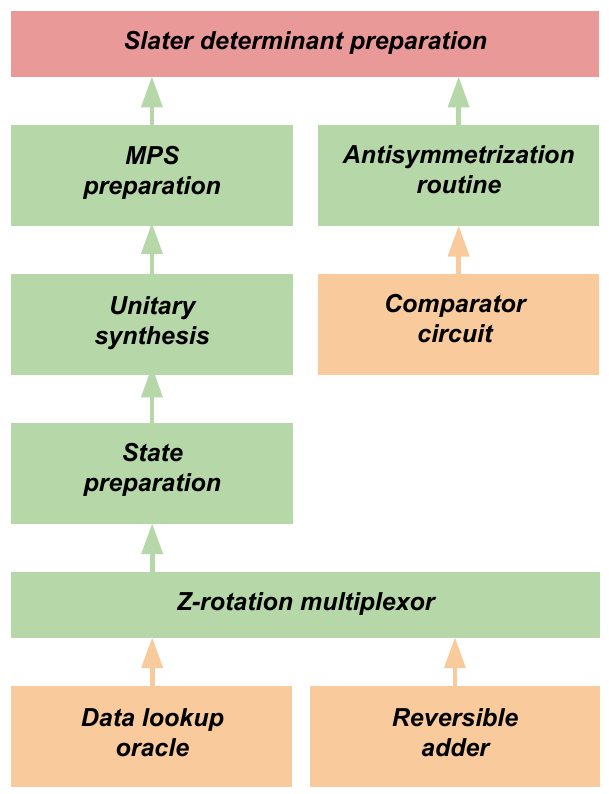}
    \caption{A schematic of the structure of the Slater determinant state preparation routine. The subroutines in the orange boxes are the primary contributors to the Toffoli complexity. The unitary synthesis and Slater determinant preparation routines also make use of zero-state reflections which introduce additional sub-dominant terms in the Toffoli count. We have omitted the Fourier state preparation which is a one-time cost and is logarithmic in the system size parameters $N$ and $\eta$ and in the inverse target error $\epsilon^{-1}$. By far the dominant source of Toffoli gates is the data lookup oracle subroutine, which can be minimized at the cost of using additional ancilla qubits.}
    \label{fig:sd_prep}
\end{figure}

Here we explain the Slater determinant state preparation routine, which relies on the subroutines introduced previously, as summarized in Figure \ref{fig:sd_prep}. This routine relies on the same unitary synthesis techniques previously introduced in \Cref{app:unitary_synthesis_description}, except that we use the MPS preparation routines introduced \Cref{app:mps_state_prep_details} in place of techniques for preparing arbitrary states.
More concretely, we implement the unitary 
\begin{align}
    X = \prod_{k=1}^{\eta} (\mathbb{I}-2\ket{x_k}\bra{x_k}),
\end{align}
where
\begin{align}
    \ket{x_k} = \frac{1}{\sqrt{2}}(\ket{1}\otimes\ket{k}-\ket{0}\otimes\ket{\tilde{\tau}_k}),
\end{align}
and $\ket{\tilde{\tau}_k}$ is the actual state prepared by a quantum circuit that approximates a matrix product state \(\ket{\tau_k}\) encoding the $k$-th molecular orbital. 

We begin by preparing the system register in the antisymmetrized reference state $|\vec{k}_\text{AS}\rangle$, 
\begin{align}
|\vec{k}_\text{AS}\rangle = \frac{1}{\sqrt{\eta!}}\sum_{\sigma}(-1)^{\pi(\sigma)}|\sigma(\vec{k})\rangle,
\end{align}
where
\begin{align}
|\vec{k}\rangle = \ket{1}\otimes\cdots\otimes\ket{\eta},
\end{align}
and $|\sigma(\vec{k})\rangle$ denotes a permutation $\sigma$ of the computational basis states on the registers $1,...,\eta$ with parity $\pi(\sigma)$. 
We can then apply $X^{\otimes\eta}$, which approximately yields the state
\begin{align}
|\vec{\tau}_\text{AS}\rangle \approx \frac{1}{\sqrt{\eta!}}\sum_{\sigma}(-1)^{\pi(\sigma)}|\sigma(\vec{\tau})\rangle,
\label{eq:vec_tau}
\end{align}
where
\begin{align}
|\vec{\tau}\rangle = \ket{\tilde{\tau}_1}\otimes\cdots\otimes\ket{\tilde{\tau}_\eta}.
\end{align}
The \(\ket{\tilde{\tau}_m}\) states may be slightly non-orthogonal because of errors in the approximations of each molecular orbital.
This potential non-orthogonality is the source of the approximation in \Cref{eq:vec_tau}, since \(X\) does not implement an isometry if the \(\ket{\tilde{\tau}_m}\) are not orthogonal.

Using the MPS state preparation routine of \Cref{app:mps_state_prep_details} for all of the states, and ignoring the effect of the finite computational unit cell which we assume to be negligible, we have two sources of error in the MOs. The first of these is the combined error due to plane wave projection and singular value truncation, which is quantified by the trace distance,
\begin{align}
D(\ket{\chi_k}, \ket{\tau_k}) \leq \epsilon_1.
\end{align}
The second source of error is incurred during state preparation via quantum circuits, and is due to the discretized angles of the rotation multiplexors, quantified by
\begin{align}
D(\ket{\tau_k}, \ket{\tilde{\tau}_k}) \leq \|\ket{\tau_k}-\ket{\tilde{\tau}_k}\| \leq \epsilon_2.
\end{align}
By the triangle inequality, the overall error is bounded by
\begin{align}
D(\ket{\chi_k}, \ket{\tilde{\tau}_k}) \leq \epsilon_1 + \epsilon_2.
\end{align}

Let
\begin{equation}
    W = \prod_{m=1}^{N_{mo}} \mathbb{I} - 2 \ketbra{w_m},
    \;\;\;\;\;\;\;\;
    \ket{w_m} = \frac{1}{\sqrt{2}} \left( \ket{1}\ket{m} - \ket{0}\ket{\chi_m} \right).
\end{equation}
In the absence of error, \(W^{\otimes \eta}\ket{1}\ket{\vec{k}_{AS}}\) would prepare the desired Slater determinant, which we denote by \(\ket{\vec{\chi}_{AS}}\).
We would like to quantify the error between the state prepared by using \(X^{\otimes \eta}\) and the state \(\ket{\vec{\chi}_{AS}}\).
Rather than produce a worst-case bound, we make the assumption that the \(\ket{\tilde{\tau}_m}\) are orthogonal, or can be made orthogonal with negligible additional error.
Under this assumption, \Cref{eq:vec_tau} is exact and we can focus on comparing the distance between the two Slater determinants \(\ket{\vec{\tau}_{AS}}\) and \(\ket{\vec{\chi}_{AS}}\).

The overlap of two Slater determinants is given by the determinant of the matrix of overlaps of the individual orbitals~\citenum{Plasser2016-rn}.
Making the additional assumption that the overlaps between \(\ket{\chi_j}\) and \(\ket{\tilde{\tau}_k}\) are small for any \(j \neq k\) (which is true whenever the approximate molecular orbitals are close to the true ones), then
\begin{equation}
    \braket{\vec{\chi}_{AS}}{\vec{\tau}_{AS}} \approx \prod_{m=1}^{\eta} \braket{\chi_m}{\tilde{\tau}_m}.
\end{equation}
This tells us that we can approximate the fidelity between the two states in terms of the fidelity between \(\bigotimes_{m=1}^{\eta} \ket{\chi_m}\) and \(\bigotimes_{m=1}^{\eta} \ket{\tilde{\tau}_m}\).
Therefore, we can also approximate the trace distance between the two determinants in terms of the trace distance between these two product states,
yielding
\begin{equation}
    D\left(\ket{\vec{\chi}_{AS}}, \ket{\vec{\tau}_{AS}}\right) \approx D\left(\bigotimes_{m=1}^{\eta} \ket{\chi_m}, \bigotimes_{m=1}^{\eta} \ket{\tilde{\tau}_m}\right) \leq \sum_{m=1}^{\eta} D(\ket{\chi_m}, \ket{\tilde{\tau}_m}).
    \label{eq:trace_distance_error_rough_approx}
\end{equation}

Altogether then, if \(D(\ket{\chi_k}, \ket{\tilde{\tau}_k}) \leq \epsilon_1+\epsilon_2\) for all \(k\), we have that 
\begin{equation}
  D\left(\ket{\vec{\chi}_{AS}}, \ket{\vec{\tau}_{AS}}\right) \approx \eta \left(\epsilon_1 + \epsilon_2\right).
\end{equation}
Using \Cref{lemma:mo_as_mps}, we can get conditions on the number of qubits and the bond dimension in order to bound the error term \(\epsilon_1\) to within any desired accuracy.
The $\epsilon_2$ error term is bounded by 
\begin{align}
\epsilon_2 < 2^{\frac{7}{2}-b}\sum_{j=1}^{n}m_j\log (2\bar{m}_j),
\end{align}
where \(b\) is the number of bits used for the rotation synthesis and the \(m_j\) are the bond dimensions of the tensors.
Alternatively, we can numerically characterize one or both sources of error for a particular application.

The overall cost is a factor of \(\eta\) times the cost of implementing a single \(W\).
We can account for the cost of a single \(W\) by adding up the costs of controlled preparation and unpreparation for each of the $\ket{\tilde{\tau}_m},$ plus the cost of the reflections about the \(\ket{0}\) states.
Overall then, the number of Toffoli gates required is 
\begin{align}
\mathtt{Toffoli}(X^{\otimes \eta}) &= \eta^2n + 2\eta\sum_{k=1}^\eta\mathtt{Toffoli}(\ket{\tilde{\tau}_k}) \\
&<\eta^2n + 2\eta\sum_{k=1}^\eta\sum_{j=1}^n 32(1+\sqrt{2})(b+1)^\frac{1}{2}m_j\bar{m}_{j}^\frac{1}{2} + (8b-15)m_j\log (2\bar{m}_{j}).
\label{eq:toff_count_state_prep_mps_sd}
\end{align}

The error analysis we performed above required some additional assumptions.
We could instead perform a more precise but more pessimistic error analysis that seeks to bound \(\norm{W^{\otimes \eta} - X^{\otimes \eta}}\) in the spectral norm.
To begin, we note that
\begin{equation}
  \norm{W^{\otimes \eta} - X^{\otimes \eta}} \leq \eta \norm{W - X}, 
\end{equation}
and
\begin{equation}
  \norm{W - X} \leq \sum_k \norm{\left( \mathbb{I} - 2 \ketbra{w_k} \right) - \left( \mathbb{I} - 2 \ketbra{x_k} \right)} = 2 \sum_k  \norm{\ketbra{w_k}- \ketbra{x_k}}.
\end{equation}
We can bound the spectral norm of each term \(\left( \ketbra{w_k} - \ketbra{x_k} \right)\) by the trace norm to show that 
\begin{equation}
  \norm{W - X} \leq 4 \sum_k D(\ket{w_k}, \ket{x_k}).
\end{equation}
We can further simplify to find that 
\begin{equation}
  D(\ket{w_k}, \ket{x_k}) = \sqrt{1 - \abs{\braket{w_k}{x_k}}^2} = \frac{1}{\sqrt{2}}\sqrt{1 - \abs{\braket{\chi_k}{\tilde{\tau}_k}}} = \frac{1}{\sqrt{2}} D(\ket{\chi_k}, \ket{\tilde{\tau}_k}).
\end{equation}
Altogether then, if \(D(\ket{\chi_k}, \ket{\tilde{\tau}_k}) \leq \epsilon_1+\epsilon_2\) for all \(k\), we have that 
\begin{equation}
  \norm{W^{\otimes \eta} - X^{\otimes \eta}} \leq 2^{3/2} \eta N_{mo} (\epsilon_1+\epsilon_2).
\end{equation}

\subsection{`Naive' Slater determinant state preparation}
\label{app:naive_sd_prep}

In order to compare the cost of our method against prior work on first-quantized Slater determinant state preparation from Ref.~\citenum{Babbush2023-ud}, we briefly describe that prior work and how we count the number of Toffoli gates required to implement it. 
In essence, this prior approach works by preparing a Slater determinant in second quantization and converting the state qubit-by-qubit into a first-quantized representation.
Notably, this can be done in a sequential fashion so that the space complexity (on the quantum computer) is linear in the number of particles rather than the number of plane waves.
This method is capable of preparing an arbitrary Slater determinant, whether it is derived from calculations in a classical Gaussian basis set or from some other source.

Given the classical description for the \(\eta\) orthogonal single-particle basis functions expressed in a plane wave basis, it is straightforward to construct a series of Givens rotations that prepare this determinant in a second-quantized plane wave basis using the techniques described in ~\cite{Kivlichan2018-yz}. 
This requires $N$ layers of Givens rotations acting on at most $\eta$ qubits, after which the first of these qubits is not acted on by any of the subsequent layers.
Hence each layer of Givens rotations can be implemented using only $\eta$ qubits. 
The position of the first qubit acted on by the layer is then written into the first quantized registers controlled on the qubit's value. 
The details of this procedure are described in Appendix G of ~\cite{Babbush2023-ud}, and incur a cost of $3\eta+\lceil\log(\eta+1)\rceil-2$ Toffoli gates per write operation. 

Each second quantized Givens rotation can be implemented using the circuit construction given in Figure 12 of ~\cite{Arrazola2022-tb}, which requires two single-qubit $R_y$ rotations that can be implemented using the reversible adder circuit with $2b$ additional qubits and $2b$ Toffoli gates, where \(b\) is a parameter governing the number of bits of precision in the specification of the rotation angles. 
This implies a total Toffoli cost of 
\begin{align}
\mathtt{Toffoli}(\ket{\vec{\tau}_\text{AS}}) < N\left((3+4b)\eta+\lceil\log(\eta+1)\rceil-2\right)
\label{eq:toff_count_standard_old_way}
\end{align}
Toffoli gates to prepare a Slater determinant, neglecting the cost of antisymmetrization which is subdominant to the other terms.

\section{Computational methods}
\label{app:computational_methods}

For practical calculations we compute restricted Hartree-Fock solutions for the molecular orbitals using Gaussian basis sets with the PySCF python package \cite{Sun2017-sx}.
For all of the molecules we have considered in this work, PySCF's default parameters reliably lead to convergence to a unique solution.
We then construct plane wave matrix product state representations for each molecular orbital using the ITensor Julia library \cite{Fishman2022-bl}. 
The code that implements this construction is available at \url{https://www.github.com/oskar-leimkuhler/orb2mps-fq}.
The XYZ files that specify the molecular geometries are also available as part of this repository.

Starting from optimized molecular orbitals provided by the Hartree-Fock solution, we decompose each orbital over the full set of $N_g$ Cartesian primitive Gaussian functions present in the underlying Gaussian basis set.
This gives a vector of coefficients normalized relative to the overlap matrix of the primitive Gaussians.
We approximately evaluate the projection of each primitive Gaussian basis function onto a plane wave basis by taking the integrals over the whole real line, as in \Cref{eq:tilde_g_x_def}.

In our practical calculations, we justify this approximation by using values of \(L\) that are sufficiently large that the correction is negligible.
For example, in the calculations we have performed on acenes in the cc-pVDZ basis set, the most diffuse Gaussian function is attributed to the 2s subshell of hydrogen, with a diffuseness coefficient of $\gamma=0.122$. 
For the tetracene molecule aligned in the $yz$-plane and lengthwise along the $z$-axis, the terminal hydrogens have $z$ coordinate values of $\pm10.96$ Bohr. 
This means that the most diffuse  Gaussian primitives centred on the terminal hydrogens have a maximal value of $\approx 4.74 \times 10^{-22}$ at the boundary of the computational unit cell, which is significantly smaller than machine epsilon assuming double floating point precision.

We analytically compute the integrals using the expansion in terms of Hermite Gaussian functions of Equation \ref{eq:tilde_g_hermite_gaussian_def}.
We have that
\begin{align}
  \int_{-\infty}^{+\infty}H_m(x)H_n(x)e^{-x^2}dx = \frac{\delta_{mn}}{c_n^2},
\end{align}
and we have an explicit expression for the power $x^l$ in terms of Hermite polynomials:
\begin{align}
  x^l = \frac{l!
  }{2^l}\sum_{m=0}^{\lfloor\tfrac{l}{2}\rfloor}\frac{H_{l-2m}(x)}{m!(l-2m)!}.
\end{align}
We evaluate the $h_n$ normalization factor coefficients by
\begin{align}
  h_n          & \propto c_n\int_{-\infty}^{+\infty}g(\frac{u}{\sqrt{2\gamma}})H_n(u)e^{-\frac{u^2}{2}}du
  \\
               & \propto c_n\int_{-\infty}^{+\infty}u^lH_n(u)e^{-u^2}du
  \\
               & \propto c_n\sum_{m=0}^{\lfloor\tfrac{l}{2}\rfloor}\left[\frac{1}{m!
      (l-2m)!}\int_{-\infty}^{+\infty}
    H_{l-2m}(u)H_n(u)e^{-u^2}du\right]
  \\
               & = \frac{1}{c_n ((l-n)/2)!
    n!}
  \\
  \implies h_n & \propto \frac{2^{n/2}}{((l-n)/2)!\sqrt{n!}}.
\end{align}
Hence computing the right-hand side for each coefficient and normalizing such that $\sum_n|h_n|^2=1$ gives the $h_n$ coefficients for a given $l$.
We then directly compute
\begin{align}
  \int_{-\infty}^{+\infty}g(x-a)\phi_k(x)dx = e^{ika} \frac{2^\frac{1}{4}}{\gamma^\frac{1}{4}L^\frac{1}{2}}\sum_{n=0}^l\frac{i^nh_n}{\sqrt{2^n n!
    }}
  H_n(\frac{k}{\sqrt{2\gamma}})e^{-\frac{k^2}{4\gamma}}
\end{align}
for a Gaussian primitive of angular momentum $l$ and exponent $\gamma$ centred at $a$.

To avoid having to construct the entire plane wave coefficient vector of dimension $N$, we compute separate coefficient vectors of dimension $N^{1/3}$ for the $x,y,$ and $z$ components.
We then symmetrically pad the start and end of each vector with zeros so that the total length is $2^{\lceil\frac{1}{3}\log_2(N)\rceil}$ and reshape to a $2\times 2\times \cdots \times 2$ array for tensor-train factorization to an MPS of $\lceil\frac{1}{3}\log(N)\rceil$ qubit sites.
We exploit the factorization of the Gaussian plane wave overlap integrals to construct the pure product MPS $\ket{g}=\ket{g_x}\otimes\ket{g_y}\otimes\ket{g_z}$ consisting of $3\lceil\frac{1}{3}\log_2(N)\rceil$ qubit sites and bond dimension $\chi=1$ at the $x-y$ and $y-z$ connective bonds.

We construct the projected (un-normalized) MPS representation $\ket{\tau}$ of the molecular orbital by direct summation of the primitive Gaussian MPS representations weighted by the primitive Gaussian coefficients,
\begin{align}
  \ket{\tau} = \sum_{g=1}^{N_g} c_g\ket{g}.
\end{align}
After each direct summation we apply a truncation of singular values smaller than $\varepsilon_\text{sum}$ to control the growth of the bond dimension, and hence a total error of $\mathcal{O}(\varepsilon_\text{sum})$ is incurred at this step.
In practical calculations we have used a direct sum truncation of $\varepsilon_\text{sum}=10^{-9}$ and so we typically see cumulative error in the molecular orbital representations on the order of $10^{-7}$ from the direct sum.
We note that these successive truncation operations are not strictly projections, since the Gaussian primitives are not orthogonal, and so can result in the norm of $\ket{\tau}$ going above unity.

We apply singular value truncation of order $\varepsilon$ to study the effect on the bond dimension as well as the fidelity of the molecular orbital representation.
Provided that \(L\) is sufficiently large so that the integrals over $[-L/2,L/2]$ are identical to the integrals over $(-\infty,\infty)$ up to numerical precision, and that $\varepsilon\gg N_g\varepsilon_\text{sum}$, then both non-negligible sources of error (from the projection onto the plane wave basis and the truncation of singular values) can be thought of as projections, in which case the fidelity of the re-normalized truncated plane wave MPS with the exact molecular orbital $\ket{\chi}$ is given by
\begin{align}
  \left|\frac{\braket{\tau}{\chi}}{\sqrt{\braket{\tau}{\tau}}}\right|^2 = \frac{|\braket{\tau}{\tau}|^2}{|\braket{\tau}{\tau}|} = |\braket{\tau}{\tau}|,
\end{align}
i.e. we simply compute the squared norm of the projected state $\ket{\tau}$.
For practical calculations at small singular value truncations where the direct sum truncation is non-negligible, we approximate the absolute infidelity $\varepsilon_\text{inf}$ as
\begin{align}
  \varepsilon_\text{inf} = \left|1-\|\ket{\tau}\|^2\right|.
\end{align}
The trace distance is then given by 
\begin{equation}
  D(\ket{\tau}, \ket{\chi}) = \sqrt{1 - \abs{\frac{\braket{\tau}{\chi}}{\sqrt{\braket{\tau}{\tau}}}}^2} = \sqrt{1 - \abs{\braket{\tau}{\tau}}}.
\end{equation}

\end{document}